\providecommand{\keywords}[1]{\textbf{Keywords:} #1}
\newtheorem{definition}{Definition}[section]
\newtheorem{theorem}{Theorem}[section]
\newtheorem{assumption}{Assumption}[section]
\newtheorem{lemma}{Lemma}[section]
\newtheorem{corollary}{Corollary}[section]
\definecolor{transpred}{RGB}{255, 152, 152}
\definecolor{transpblue}{RGB}{0, 255, 128}
\newcommand{\cmark}{\ding{51}}
\newcommand{\xmark}{\ding{55}}
\newcommand\cleq{\mathrel{\stackunder{$<$}{$\sim$}}}
\newcommand\cgeq{\mathrel{\stackunder{$>$}{$\sim$}}}
\DeclareMathOperator*{\argmax}{arg\,max}
\DeclareMathOperator*{\argmin}{arg\,min}
\title{
\Large
Fast and Optimal Inference for Change Points in Piecewise Polynomials via Differencing
}
\author{Shakeel Gavioli-Akilagun\footnote{London School of Economics and Political Science; s.a.gavioli-akilagun@lse.ac.uk} \qquad Piotr Fryzlewicz\footnote{London School of Economics and Political Science; p.fryzlewicz@lse.ac.uk}
}
\begin{document}

\maketitle

\begin{abstract}
\noindent
We consider the problem of uncertainty quantification in change point regressions, where the signal can be piecewise polynomial of arbitrary but fixed degree. That is we seek disjoint intervals which, uniformly at a given confidence level, must each contain a change point location. We propose a procedure based on performing local tests at a number of scales and locations on a sparse grid, which adapts to the choice of grid in the sense that by choosing a sparser grid one explicitly pays a lower price for multiple testing. The procedure is fast as its computational complexity is always of the order $\mathcal{O} (n \log (n))$ where $n$ is the length of the data, and optimal in the sense that under certain mild conditions every change point is detected with high probability and the widths of the intervals returned match the mini-max localisation rates for the associated change point problem up to log factors. A detailed simulation study shows our procedure is competitive against state of the art algorithms for similar problems. Our procedure is implemented in the R package \texttt{ChangePointInference} which is available via \href{https://github.com/gaviosha/ChangePointInference}{GitHub}. 
\end{abstract}

\keywords{confidence intervals, uniform coverage, unconditional coverage, structural breaks, piecewise polynomial, extreme value analysis}

\section{Introduction} \label{section: introduction}

We study the setting in which an analyst observes data $ \boldsymbol{Y}  = \left ( Y_1, \dots, Y_n \right )'$ on an equi-spaced grid which can be written as the sum of a signal component and a noise component:
\begin{equation}
Y_t = f_\circ \left ( t / n \right ) + \zeta_t \hspace{2em} t = 1, \dots, n 
\label{equation: signal + noise}
\end{equation}
The signal component $f_\circ : [0,1] \mapsto \mathbb{R}$ is known to be a piecewise polynomial function of arbitrary but fixed degree $p$. That is, associated with $f_\circ (\cdot)$ are $N$ integer valued change point locations $\Theta = \left \{ \eta_1, \dots, \eta_N \right \}$, whose number is possibly diverging with $n$, such that for each $k = 1, \dots, N$ the function can be described as a degree $p$ polynomial on the sub-interval $\left [ (\eta_{k} - p - 1)/n, \eta_{k} / n  \right ]$ but not on $\left [ (\eta_{k} - p)/n, (\eta_{k} +1) / n  \right ]$. Examples of such signals are shown in the left column of Figure \ref{figure: piecewise polynomials examples}. Both $N$ and $\Theta$ are unknown. Several algorithms exist for estimating $N$ and $\Theta$ in specific instances of model (\ref{equation: signal + noise}), such as when $f_\circ (\cdot)$ is piecewise constant \cite{killick2012optimal, fryzlewicz2014wild, eichinger2018mosum, frick2014multiscale} or when $f_\circ (\cdot)$ is piecewise linear \cite{fearnhead2019detecting, baranowski2019narrowest, anastasiou2022detecting, maeng2019detecting}. While the piecewise constant and piecewise linear change point regression are well studied, the generic piecewise polynomial model has attracted less attention. Nevertheless, the piecewise polynomial model has practical applications in areas as diverse as finance \cite{qiu1998local, liu2018jump, mczgee1970piecewise}, aerospace engineering \cite{cunis2019piecewise}, protein folding \cite{butt2005force}, light transmittance \cite{abramovich2007estimation}, climatology \cite{aue2008testing}, and data compression \cite{gao2020fast}. 

Our goal in this paper is to simultaneously quantify the level of uncertainty the around the existence and location of each putative change point in the generic piecewise polynomial model. This is a worthwhile task since estimates of the change point locations are not consistent in the usual sense: the best rate at which a change point can be localised on the domain $\left \{ 1, \dots, n \right \}$ is $\mathcal{O}_\mathbb{P} (1)$ \citep{verzelen2020optimal, wang2020univariate}, however this can be as high as $\mathcal{O}_\mathbb{P} \left ( n^{\varpi_k} \right )$ for some $\varpi_k \in [0,1)$ depending on the smoothness of $f_\circ (\cdot)$ at each change point location $\eta_k$ \cite{yu2020review,yu2022localising}. Moreover, since most algorithms do not quantify uncertainty around the change points they recover, it is difficult to say whether these change points are real or spuriously estimated. 

\begin{figure}[!htbp]
\centering
\caption{the piecewise constant \texttt{blocks} signal, piecewise linear \texttt{waves} signal, and piecewise quadratic \texttt{hills} signal each contaminated with i.i.d. Gaussian noise (left column). Intervals of significance with uniform $90\%$ coverage returned by our procedure (right column). Black dashed lines (\textbf{- - -}) represent underlying piecewise polynomial signal, light grey lines (\textcolor{gray}{\textbf{---}}) represent the observed data sequence, red shaded regions (\textcolor{transpred}{$\blacksquare$}) represent intervals of significance returned by our procedure, red dotted lines (\textcolor{red}{$\cdot\cdot\cdot$}) represent split points within each interval associated with the piecewsie polynomial fit providing the lowest sum of squared residuals.} 
\label{figure: piecewise polynomials examples}
\begin{subfigure}[b]{0.4\textwidth}
\centering
\includegraphics[width=\textwidth]{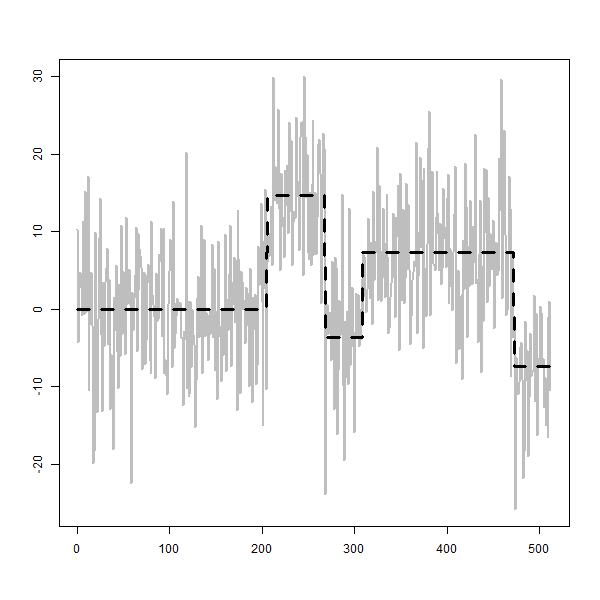}
\caption{the \texttt{blocks} signal}
\label{subfigure: blocks}
\end{subfigure}
\begin{subfigure}[b]{0.4\textwidth}
\centering
\includegraphics[width=\textwidth]{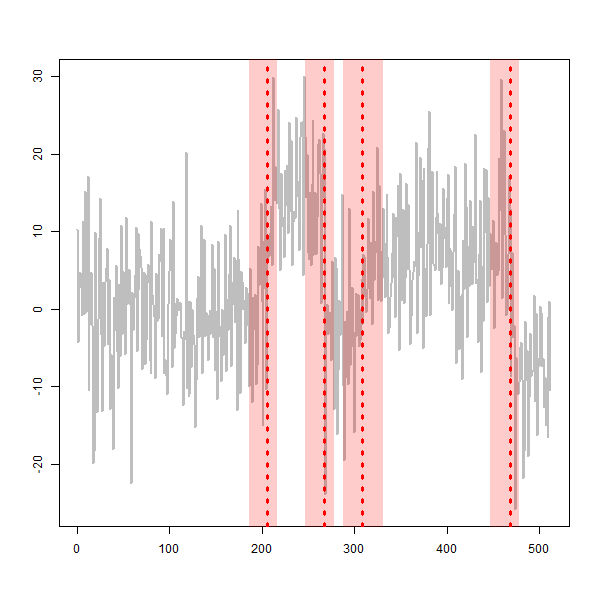}
\caption{intervals recovered}
\end{subfigure}

\begin{subfigure}[b]{0.4\textwidth}
\centering
\includegraphics[width=\textwidth]{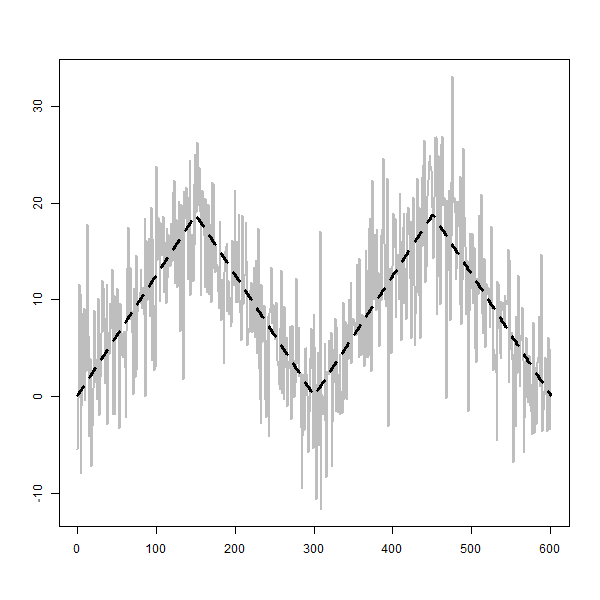}
\caption{the \texttt{waves} signal}
\label{subfigure: waves}
\end{subfigure}
\begin{subfigure}[b]{0.4\textwidth}
\centering
\includegraphics[width=\textwidth]{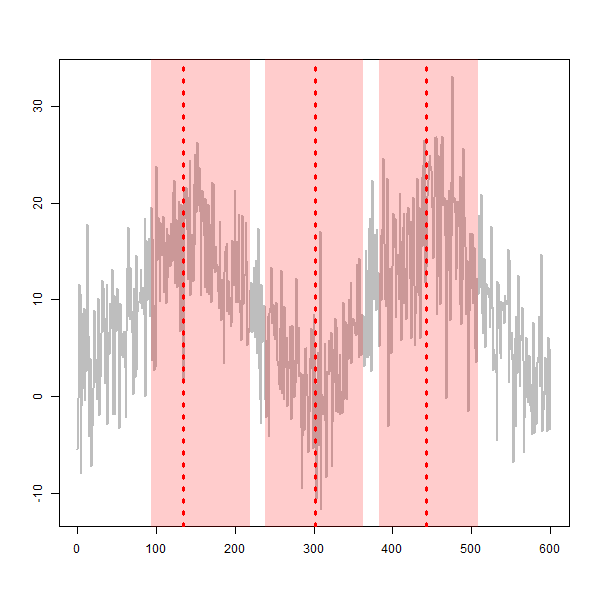}
\caption{intervals recovered}
\end{subfigure}

\begin{subfigure}[b]{0.4\textwidth}
\centering
\includegraphics[width=\textwidth]{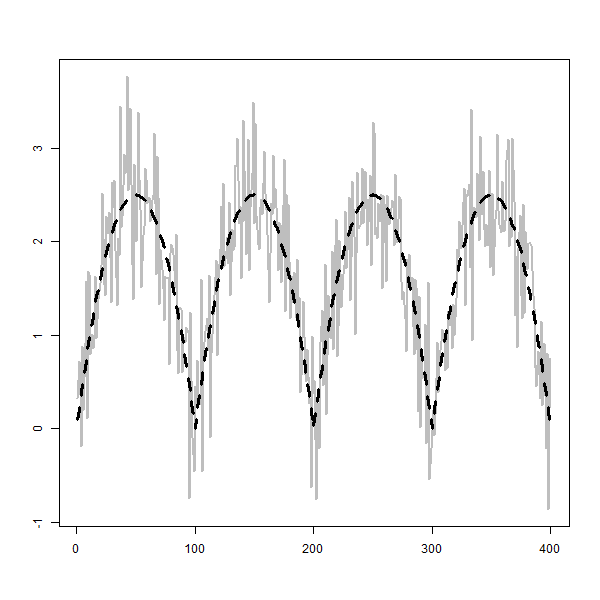}
\caption{the \texttt{hills} signal}
\label{subfigure: hills}
\end{subfigure}
\begin{subfigure}[b]{0.4\textwidth}
\centering
\includegraphics[width=\textwidth]{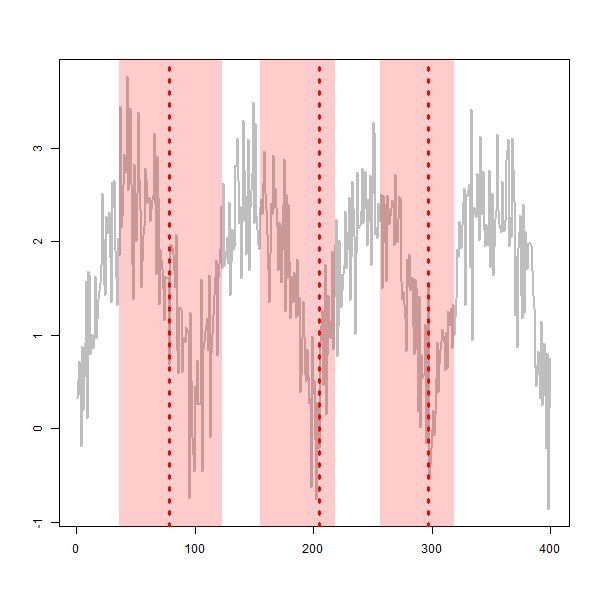}
\caption{intervals recovered}
\end{subfigure}
\end{figure}

We propose a procedure which aims to return the narrowest possible disjoint sub-intervals of the index set $\left \{ 1, \dots, n \right \}$ in such a way that each must contain a change point location uniformly at some confidence level chosen by the user. Examples of such intervals are shown in the right column of Figure \ref{figure: piecewise polynomials examples}. This is done by testing for a change at a range of scales and locations belonging to a sparse grid, and tightly bounding the supremum of local test statistics over the same grid which guarantees sharp unconditional family-wise error control. An advantage of this approach is that once can post-search each unconditional interval for the best location(s) of the change-point(s) with appropriate statistics without worrying about significance testing.  We initially study the setting in which the noise components are independent with marginal $\mathcal{N} \left ( 0 , \sigma^2 \right )$ distribution and later in Section \ref{section: extension to dependent and non-Gaussian noise} extend our results to dependent and non-Gaussian noise. Motivated by the fact that taking $(p+1)$-th differences will eliminate a degree $p$ polynomial trend \citep{chan1977note}, we consider tests based on differences of (standardised) local sums of the data sequence. There are several advantages to working with tests based on local sums as opposed to for example likelihood ratio or Wald statistics, which we list below. 

\begin{itemize}
    \item Each of our local test can be completed in $\mathcal{O} (1)$ time in a straightforward manner, regardless of the degree of the underlying polynomial or the scale at which the test is performed, leading to a procedure with worst case complexity $\mathcal{O} \left ( n \log (n) \right )$ when test are carried out on a sparse grid.  
    \item Local averaging brings the contaminating noise closer to Gaussianity, which is a feature we exploit in Section \ref{section: extension to dependent and non-Gaussian noise} when studying the behaviour of the procedure under non-Gaussian and possibly dependent noise. 
    \item Unlike procedures based on differencing the raw data, which are known to be sub-optimal, as we show in Theorem \ref{theorem: optimality of change point detection} the combination of local averaging followed by differencing leads to a procedure which is optimal in a mini-max sense.
    \item The asymptotic analysis is by design uncomplicated, as it boils down to analysing the high excursion probability of a stationary Gaussian field whose local structure depends on the polynomial degree in a straightforward way. 
\end{itemize}

We now review existing procedures in the literature for change point inference in specific instances of model (\ref{equation: signal + noise}). If one is able to localise all change points at a fast enough rate it is possible to construct asymptotically valid confidence intervals for the change point locations. This is done by \cite{meier2021mosum, cho2022bootstrap} for piecewise constant $f_\circ (\cdot)$ and by \cite{bai1998estimating, bai2003computation} for regressions with piecewise constant coefficients. A crucial limitation of these approaches is that confidence intervals are only valid conditional on the number of change points being correctly estimated. Since there is no guarantee this will happen in a finite sample these intervals are problematic to interpret in practice. We further note that the piecewise constant regression model considered by \cite{bai1998estimating, bai2003computation} does not actually cover generic the piecewise polynomial model (\ref{equation: signal + noise}), as it is necessary to assume the regressors are stationary or satisfy certain regularity conditions which are necessarily violated by polynomial functions of $t$. The SMUCE estimator and its many variants \cite{frick2014multiscale, pein2017heterogeneous, dette2020multiscale, jula2022multiscale} estimates a piecewise constant signal subject to the constraint that empirical residuals pass a multi-scale test, and produces a confidence set for the signal from which uniform confidence intervals for the change point locations can be extracted. However, the multi-scale tests have been observed to be poorly calibrated \citep{fryzlewicz2021robust}. Moreover, letting the size of the test determine the estimated signal leads to larger nominal coverage actually reducing coverage \citep{chen2014discussion}. In \cite{fang2020detection, fang2020segmentation} approximations of the tail probability of the supremum of local likelihood ratio tests for constant means and constant slopes calculated at all possible scales and locations are derived, and an algorithm is provided which returns uniform confidence intervals for the change point locations. However, the approach in these papers does not extend to the case of generic piecewise polynomials, and the algorithm propose has cubic time complexity in the worst case. The Narrowest Significance Pursuit algorithm \cite{fryzlewicz2023narrowest, fryzlewicz2021robust} tests for local deviations from a linear model using the multi-resolution norm \citep{mildenberger2008geometric}, and bounds the supremum of local tests by the multi-resolution norm of the unobserved noise which in turn can be controlled using the results in \cite{kabluchko2007extreme, kabluchko2014limiting, ravckauskas2003invariance}. However, computing each local test requires solving a linear program, which makes the procedure slow in practice. Moreover, other than for piecewise constant and continuous piecewise linear signals contaminated with Gaussian noise, it has not been shown that the procedure can detect change points optimally. 

We finally review some approaches for problems closely related to the one studied in this paper. The problem of testing for the presence of a single change point in piecewise polynomials has previously been considered by \cite{jaruvskova1999testing, aue2008testing, aue2009extreme} by studying the supremum of likelihood ratio tests, and by \cite{jandhyala1997iterated, macneill1978properties} by studying partial sums of residuals from a least squares fit. These tests however do not extend to the case of multiple change points, which is the focus of this paper. Given estimates $\widehat{\Theta}$ and $\widehat{N}$ some authors focus on testing whether a change did in fact occur at each estimated change point location. This is a post selection inference problem as it requires conditioning on the estimation of $\widehat{\Theta}$, and has been studied by \cite{hyun2021post, jewell2022testing, carrington2023improving} for piecewise constant $f_\circ(\cdot)$ and by \cite{mehrizi2021valid} for generic piecewise polynomial $f_\circ(\cdot)$. However, the goal of these methods is to quantify uncertainty about the size of each change, whereas our goal is to simultaneously quantify uncertainty about the existence and the location of the change point. The piecewise polynomial problem is closely related to the problem of detecting changes in the smoothness of the regression function in nonparametric regressions \cite{raimondo1998minimax, hall1992edge}. The distribution of certain estimators for the location of a single change have been derived for instance by \cite{muller1992change}; such results allow for inference on the location of the change. Our focus on the piecewise polynomial problem is motivated by practical considerations: a parametric model is often preferable to practitioners due to ease of interpretability. Finally,  Bayesian approaches to change point detection provide an alternative approach to uncertainty quantification, via credible intervals derived from the posterior. However, choosing sensible priors and sampling from the posterior remain non-trial tasks. Methods for evaluating the posterior have been studied by \cite{rigaill2012exact, fearnhead2006exact, nam2012quantifying}.

The remainder of the paper is structured as follows. In Section \ref{section: local tests} we introduce local tests for the presence of a change based on differences of local sums of the data, and study their behaviour under the null of no change points in terms of the family-wise error when the test are applied over a sparse grid. In Section \ref{section: fast algorithm} we introduce a fast algorithm for turning our local tests into a collection of disjoint intervals which each must contain a change at a prescribed significance level, and show the algorithm's consistency and optimality in terms of recovering narrow intervals which each contain a change point location. In section \ref{section: numerical illustrations} we compare the performance of our algorithm with that of existing procedures when applied to simulated data. Finally in Section \ref{section: real data examples} we show the practical use of our algorithm via two real data examples. 

\section{Difference based tests with family-wise error control} \label{section: local tests}

\subsection{Local tests for a change point}

We begin by describing tests for the presence of a change on a localised segment of the data. Motivated by the fact that a polynomial trend will be eliminated by differencing, if it were suspected that a segment of the data contained a change point location one could divide the segment into $p+2$ chunks of roughly equal size and take the $(p+1)$-th difference of the sequence of local sums on each chunk. Since summing boosts the signal from the change point, and differencing eliminates the polynomial trend, one could then declare a change if the resulting quantity coming from the summed and differenced sequence, appropriately scaled, was large in absolute value. By contrast, simply differencing the data on the segment would reduce the signal from the change, and any statistic based on the differenced data only would be sub-optimal for detecting the change. 

For each local test we write $l$ for the location of the data segment being inspected for a change point and $w$ for the width of the data segment. Following the reasoning above, to test for the presence of a change point on the interval $\left \{ l, \dots, l + w - 1 \right \}$ we first compute the following non-overlapping local sums:
\begin{equation*}
\bar{Y}^j_{l,w} = Y_{l + j \left \lfloor \frac{w}{p+2} \right \rfloor} + \dots + Y_{l + (j+1) \left \lfloor \frac{w}{p+2} \right \rfloor -1} \hspace{2em} j = 0, \dots, p + 1 
\label{equation: local sum}
\end{equation*}
We then declare a change if the test statistic defined below, which corresponds to the the $(p+1)$-th differences of the sequence $\bar{Y}^0_{l,w}, \dots, \bar{Y}^{p+1}_{l,w}$ scaled so that its variance is constant independent of $l$ and $w$ when the noise is homoskedastic and independently distributed, is large in absolute value. 
\begin{equation}
D_{l,w}^p \left ( \boldsymbol{Y} \right ) = \left \{ \left \lfloor \frac{w}{p+2} \right \rfloor \sum_{i=0}^{p+1} \binom{p+1}{i}^2 \right \}^{-1/2} \sum_{j=0}^{p+1} \left ( - 1 \right ) ^ {p+1-j} \binom{p+1}{j} \bar{Y}^j_{l,w}
\label{equation: local test statistic}
\end{equation}
The functional introduced in (\ref{equation: local test statistic}) enjoys the following properties, which make it well suited for the task of change change point testing on piecewise polynomials: 
\begin{itemize}
\item \underline{Additivity:} for any two vectors $\boldsymbol{f},\boldsymbol{g} \in \mathbb{R}^n$ it holds that $D_{l,w}^p \left ( \boldsymbol{f + g} \right ) = D_{l,w}^p \left ( \boldsymbol{f} \right ) + D_{l,w}^p \left ( \boldsymbol{g} \right )$ for all admissible $l$'s and $w$'s.
\item \underline{Annihilation of polynomials:} if the entries of $\boldsymbol{f} \in \mathbb{R}^n$ are from a polynomial of degree no larger than $p$ then $D_{l,w}^p \left ( \boldsymbol{f} \right ) = 0$ for all admissible $l$'s and $w$'s. 
\item \underline{Large for discontinuous functions:} if the entries of $\boldsymbol{f} \in \mathbb{R}^n$ are from a piecewise monomial with a single discontinuity at location $\eta$ then $ | D_{l,w}^p \left ( \boldsymbol{f} \right ) | > 0$ for all $l$'s and $w$'s such that $\eta \in \{ l, \dots, l + w - 1 \}$. 
\end{itemize}
The first two properties ensure (\ref{equation: local test statistic}) is small under the local null of no change, whereas the third property can be used to show that for some admissible $(l,w)$ pair the the statistic will be large in absolute value in the presence of a change.

Consequently, for some $\lambda > 0$ to be chosen later on, each local test for the presence of a change on an interval $\left \{ l, \dots, l + w - 1 \right \}$ takes the following form:
\begin{equation}
T_{l,w}^\lambda \left ( \mathbf{Y} \right ) = \boldsymbol{1} \left \{ |D_{l,w}^p \left ( \boldsymbol{Y} \right )| > \lambda \right \}.
\label{equation: local tests}
\end{equation}
When $p = 0$ the statistic (\ref{equation: local test statistic}) recovers the moving sum filter used for change point detection in the piecewise constant model \cite{eichinger2018mosum}. This also corresponds to the (square root of) the likelihood ratio statistic for testing the null of a constant mean on the segment under Gaussian noise, as well as the Wald statistic for the same problem. Typical approaches for generalising to higher order polynomial change point problems involve local likelihood-ratio or Wlad statistics for testing the null of a polynomial mean on the segment \cite{baranowski2019narrowest, fang2020segmentation, anastasiou2022detecting, kim2022moving}, which however are hard to stochastically control. We show that simply extending the order of differencing leads to simple and powerful tests. 

\subsection{Local tests on a sparse gird}
For the purpose of making inference statements about an unknown number of change point locations, we would like to apply the local tests (\ref{equation: local tests}) over a grid which is both dense enough to cover all potential change point locations well and sparse enough to allow all local tests to be computed quickly. Given a suitable grid $\mathcal{G}$ of $(l,w)$ pairs, if $\lambda$ were chosen to control the family-wise error of the collection of tests
\begin{equation}
\mathcal{T}_{\mathcal{G}}^\lambda \left ( \mathbf{Y} \right ) = \left \{ T_{l,w}^\lambda \left ( \mathbf{Y} \right ) \mid (l,w) \in \mathcal{G} \right \}
\label{equation: collection of local tests}
\end{equation}
at some level $\alpha$, we could be sure that with probability $1 - \alpha$ every $(l,w)$ pair on which a test rejects corresponds to a segment of the data containing at least one change point location. 

We propose to use the following grid, which is parameterised by a minimum grid scale parameter $W$, controlling the minimum support of the detection statistic (\ref{equation: local test statistic}), and a decay parameter $a > 1$, controlling the density of the grid:
\begin{gather}
\mathcal{G} \left ( W, a \right ) = \left \{ \left ( l , w \right ) \in \mathbb{N}^2 \mid w \in \mathcal{W} (W,a) , 1 \leq l \leq n - w \right \} \label{equation: a-adic grid} \\ 
\mathcal{W} \left ( W , a \right ) = \left \{ w = \left \lfloor a^k \right \rfloor \mid \left \lfloor \log_a (W) \right \rfloor \leq k \leq \left \lfloor \log_a (n/2) \right \rfloor \right \}. \nonumber
\end{gather}
Associated with the grid is the collection of sub-intervals of $\left \{ 1, \dots, n \right \}$ whose length is larger than $W$ and can be written as an integer power of $a$. For example, the collection of intervals $\left \{ l, \dots, l + w - 1 \right \} $ associated with the $(l,w)$ pairs in the grid obtained when $n = 20$ and setting $W = 2$ and $a = 2$ is shown in Figure \ref{figure: example gird} below. For this configuration of $a$ and $W$, the associated collection of intervals consists of all contiguous sub-interval of $\left \{ 1, \dots, 20 \right \}$ having dyadic length. 

\begin{figure}[h!]
\centering
\includegraphics[width=\textwidth]{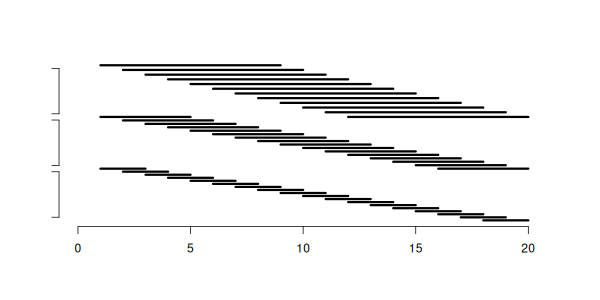}
\caption{intervals associated with $\mathcal{G} \left ( W , a \right )$ when $n = 20$, $W = 2$, and $a = 2$}
\label{figure: example gird}
\end{figure}

The grid defined by (\ref{equation: a-adic grid}) is similar to several grids already proposed for different change point detection problems \cite{kovacs2020seeded, chan2013detection, pilliat2023optimal}, in that the size of scales decays exponentially. Two key difference are first that for any scale $w$ all possible locations $l$ are considered, and second that all scales with $w = o \left ( W \right )$ are excluded from the grid. Regarding the minimum grid scale, if the noise were known to be independently distributed and Gaussian we could take $W = \mathcal{O}(1)$ and still retain family-wise error control using our proof technique. However, under dependent or non-Gaussian noise letting the minimum grid scale diverge at an appropriate rate with $n$ is necessary for controlling the family-wise error, as this allows local sums of the noise to be treated as approximately uncorrelated and Gaussian. 

\subsection{Family-wise error control under Gaussianity} \label{section: Family-wise error control under Gaussianity}

As a starting point for family-wise error analysis in more general noise settings, we first show how to control the family-wise error of the local tests (\ref{equation: local tests}) over the grid (\ref{equation: a-adic grid}) when the noise terms are independently distributed and Gaussian. The approach is to tightly bound the maximum of the local test statistics (\ref{equation: local test statistic}) under the null of no change points, and use this bound to select an appropriate threshold $\lambda$ for (\ref{equation: collection of local tests}). We impose the following assumptions on the minimum grid scale and on the noise components. 

\begin{assumption}
The noise terms $\zeta_1, \dots, \zeta_n$ are mutually independent with marginal $\mathcal{N} (0, \sigma^2)$ distribution for some $\sigma > 0$. 
\label{assumption: gaussian noise}
\end{assumption}

\begin{assumption}
The minimum grid scale $W$ satisfies $W / \log (n) \rightarrow d$ for some $d \in (0, \infty)$.
\label{assumptioin: logarithmic segment length}
\end{assumption}

With these assumption in place we have the following result on the behaviour of the maximum of local test statistics (\ref{equation: local test statistic}) under the null of no change points. 

\begin{theorem}
Let $\boldsymbol{Y} = \left ( Y_1, \dots, Y_n \right )'$ be from model (\ref{equation: signal + noise}) with signal component having no change points and grant Assumptions \ref{assumption: gaussian noise} - \ref{assumptioin: logarithmic segment length} hold. For fixed $a > 1$ introduce the following quantity: 

\begin{equation*}
M_{\mathcal{G}(W,a)}^\sigma \left ( \boldsymbol{Y} \right ) = \max_{(l,w) \in \mathcal{G} (W,a)} \left \{ \frac{1}{\sigma} D_{l,w}^p \left ( \boldsymbol{Y} \right ) \right \}.
\end{equation*}

(i) Putting $\mathfrak{a}_{n} = \sqrt{2 \log (n)}$ and $\mathfrak{b}_{n} = 2 \log(n) - \frac{1}{2} \log \log (n) - \log( 2 \sqrt{\pi})$ the sequence of random variables $\left \{ \mathfrak{a}_{n} M_{\mathcal{G}(W,a)}^\sigma \left ( \boldsymbol{Y} \right ) - \mathfrak{b}_{n} \mid n \in \mathbb{N} \right \}$ is tight, and there are constants $H_{1,1}$ and $H_{1,2}$ depending only on $a$, $p$, and $d$ such that for fixed $x \in \mathbb{R}$ the following holds
\begin{equation*}
o(1) + \exp \left ( - H_{1,1} e^{-x} \right ) \leq \mathbb{P} \left ( \mathfrak{a}_{n} M_{\mathcal{G}(W,a)}^\sigma \left ( \boldsymbol{Y} \right ) - \mathfrak{b}_{n} \leq x \right ) \leq \exp \left ( - H_{1,2} e^{-x} \right ) + o(1). 
\end{equation*}

(ii) Moreover the result in (i) continues to hold if $\sigma$ is replaced with any consistent estimator $\widehat{\sigma}$ which satisfies $\left | \widehat{\sigma} / \sigma -  1 \right | = o_\mathbb{P} \left ( \log ^{-1} (n)  \right )$. 

\label{theorem: Gaussian tightness of normalized maximum}
\end{theorem}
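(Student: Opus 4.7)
First I would verify that, under the null, $D^p_{l,w}(\boldsymbol{Y})/\sigma$ is a centered $\mathcal{N}(0,1)$ variate. The coefficients $\{(-1)^{p+1-j}\binom{p+1}{j}\}_{j=0}^{p+1}$ in (\ref{equation: local test statistic}) are the weights of the $(p+1)$-th forward-difference operator; since under the null $f_\circ$ is a polynomial of degree at most $p$ throughout $[0,1]$, and since summing such a polynomial over a moving chunk of fixed length $m:=\lfloor w/(p+2)\rfloor$ yields a polynomial of degree $p$ in the chunk index $j$ (by Faulhaber's formula), the $(p+1)$-th difference annihilates the signal entirely. Hence $D^p_{l,w}(\boldsymbol{Y})$ is a linear functional of $(\zeta_1,\ldots,\zeta_n)$, and the normalising constant in (\ref{equation: local test statistic}) is exactly the standard deviation of that functional under Assumption \ref{assumption: gaussian noise}. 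Thus $\{D^p_{l,w}(\boldsymbol{Y})/\sigma\}_{(l,w)\in\mathcal{G}(W,a)}$ is a centered Gaussian field with unit marginals, and the analysis reduces to studying its supremum.

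Next I would work out the covariance structure. For fixed $w$, the process $\{D^p_{l,w}/\sigma\}_l$ is stationary with autocovariance $r_w(h)$ given by an explicit piecewise-polynomial function of $h/m$ that vanishes whenever $|h|\geq (p+2)m$, so each single-scale process is finitely dependent. By the classical theorem of Berman for Gaussian sequences with compactly supported correlations, $\mathbb{P}(\mathfrak{a}_n\max_l D^p_{l,w}/\sigma-\mathfrak{b}_n\leq x)\to \exp(-e^{-x})$. Since $M^\sigma_{\mathcal{G}(W,a)}\geq\max_l D^p_{l,w}/\sigma$ for any single scale $w$, this directly yields the right-hand sandwich inequality with a constant $H_{1,2}$. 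For the cross-scale analysis, one checks that for $(l_1,w_1),(l_2,w_2)$ with $w_1\neq w_2$ the cross-covariance is an explicit bounded function of the scale ratio and the location offset; because $\mathcal{W}(W,a)$ is geometrically spaced with ratio at least $a>1$, the supremum of these cross-covariances is bounded by some $\rho(a,p)<1$.

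The left-hand sandwich inequality requires an upper bound on the tail of $M^\sigma_{\mathcal{G}(W,a)}$. A naive Bonferroni union over $|\mathcal{G}(W,a)|=\Theta(n\log n)$ pairs would produce a spurious $\log n$ factor that is not absorbed by the normalisation $\mathfrak{a}_n,\mathfrak{b}_n$. The remedy is to exploit the strong positive correlation between maxima at adjacent scales via a Slepian-type comparison, or equivalently a chaining/meshing argument as in the scan-statistic work of \citet{kabluchko2007extreme,kabluchko2014limiting}, which collapses the effective grid volume back down to $\Theta(n)$ and yields $\mathbb{P}(M^\sigma_{\mathcal{G}(W,a)}>t)\leq H_{1,1}e^{-x}(1+o(1))$ at $t=(x+\mathfrak{b}_n)/\mathfrak{a}_n$, with constant $H_{1,1}\geq H_{1,2}$ depending only on $a$, $p$, and $d$. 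This multi-scale aggregation is the main obstacle of the argument: the role of Assumption \ref{assumptioin: logarithmic segment length} is precisely to keep the smallest scale large enough that the chaining argument closes up with scale-independent constants.

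For part (ii), decompose
\begin{equation*}
\mathfrak{a}_n M^{\widehat{\sigma}}_{\mathcal{G}(W,a)}-\mathfrak{b}_n=(\mathfrak{a}_n M^\sigma_{\mathcal{G}(W,a)}-\mathfrak{b}_n)\tfrac{\sigma}{\widehat{\sigma}}+\mathfrak{b}_n\bigl(\tfrac{\sigma}{\widehat{\sigma}}-1\bigr).
\end{equation*}
The tightness established in (i) gives $\mathfrak{a}_n M^\sigma_{\mathcal{G}(W,a)}-\mathfrak{b}_n=O_\mathbb{P}(1)$, and by hypothesis $\sigma/\widehat{\sigma}-1=o_\mathbb{P}(\log^{-1}n)$; since $\mathfrak{b}_n=O(\log n)$, both summands on the right are $o_\mathbb{P}(1)$, so the sandwich bounds from (i) transfer verbatim.
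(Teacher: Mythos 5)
Your reduction of the problem to the supremum of a unit-variance Gaussian field, and your part (ii) argument (which is Khintchine's lemma written out explicitly, matching the paper's appeal to Lemma \ref{lemma: convergence with other scaling sequcnes}), are both sound. The central gap is in the single-scale analysis. You invoke the classical Berman theorem for Gaussian sequences with compactly supported correlations to conclude that $\max_l D^p_{l,w}(\boldsymbol{Y})/\sigma$, normalised by $\mathfrak{a}_n$ and $\mathfrak{b}_n$, converges to $\exp(-e^{-x})$. That theorem applies to a fixed stationary sequence; here the process is a triangular array whose correlation at lag $h$ behaves like $1-C_p|h|/w+o(|h|/w)$ with $w\asymp d\log n$, so at the relevant threshold $u_n\sim\sqrt{2\log n}$ one has $u_n^2/w\to 2/d$, a positive constant. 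This is exactly the Pickands regime of strong local correlation, not the Berman regime, and the correct single-scale statement (the paper's Theorem \ref{theorem: distribution on optimal scales}, proved by a blocking argument, Berman's comparison inequality to decouple blocks, and the discrete Pickands lemma within blocks) is $\mathbb{P}\left(\mathfrak{a}_n M^\sigma_{c\log(n)}(\mathbf{Y})-\mathfrak{b}_n\le x\right)\to\exp\left(-(2C_p/c)\,F(2C_p/c)\,e^{-x}\right)$, with a constant depending on the scale through the Pickands-type functional $F$ of Lemma \ref{lemma: F fuction}. Your version would give a limiting constant equal to $1$ independently of $d$, which already contradicts the theorem's own assertion that $H_{1,1}$ and $H_{1,2}$ depend on $d$.

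Second, your account of where the two-sided sandwich comes from misidentifies its source. In the paper the gap between $H_{1,1}$ and $H_{1,2}$ is not the residue of a lossy union bound repaired by chaining or Slepian comparison: the union over scales is handled by an exact per-scale Pickands asymptotic whose contribution at scale $a^k$ decays geometrically in $k$ (so scales much larger than $W$ are negligible), and cross-scale dependence is removed by Berman's comparison lemma, after which the per-scale limits simply multiply. The sandwich instead arises because the quantity $b_n=a^{\lfloor\log_a W\rfloor}/W$ does not converge; along subsequences where $b_n\to b\in[1/a,1]$ one obtains a genuine Gumbel limit with constant $\sum_{j\ge0}(2C_p/(a^jbd))F(2C_p/(a^jbd))$, and $H_{1,1}$, $H_{1,2}$ are the extreme values of this expression over $b$. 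Your proposal contains no mechanism producing this phenomenon, so the specific form of the conclusion --- tightness plus a two-constant sandwich rather than weak convergence --- cannot be recovered from your argument as written.
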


Note that for large values of $n$ the quantity 
\begin{equation*}
L_{\mathcal{G}(W,a)}^\sigma \left ( \boldsymbol{Y} \right ) = \max_{(l,w) \in \mathcal{G} (W,a)} \left \{ \frac{1}{\sigma} \left | D_{l,w}^p \left ( \boldsymbol{Y} \right ) \right | \right \}
\end{equation*}
behaves asymptotically like the maximum of two independent copies of $M_{\mathcal{G}(W,a)}^\sigma \left ( \boldsymbol{Y} \right )$. We do not give a formal proof of this statement, however the statement can be understood intuitively by writing $L_{\mathcal{G}(W,a)}^\sigma \left ( \boldsymbol{Y} \right ) = M_{\mathcal{G}(W,a)}^\sigma \left ( \boldsymbol{Y} \right ) \vee M_{\mathcal{G}(W,a)}^\sigma \left ( -\boldsymbol{Y} \right )$ and then using the well known fact that order statistics are asymptotically independent \citep{falk1988independence, kabluchko2014limiting}. Therefore, in light of Theorem \ref{theorem: Gaussian tightness of normalized maximum} it follows that under Assumptions \ref{assumption: gaussian noise} - \ref{assumptioin: logarithmic segment length}, for any $\alpha \in (0,1)$, choosing $\lambda = \widehat{\sigma} \lambda_\alpha$ with $\widehat{\sigma}$ satisfying the condition given in part (ii) and $\lambda_\alpha$ defined as follows
\begin{equation}
\lambda_\alpha = \sqrt{2 \log(n)} + \frac{-\frac{1}{2} \log\log (n) - \log \left ( 2 \sqrt{\pi} / H_{1,2} \right ) + \log \left ( -2 \log ^ {-1} \left ( 1 - \alpha \right ) \right )}{\sqrt{2 \log(n)}}
\label{equation: Gaussian FWE control lambda}
\end{equation}
will result in the collection of tests $\mathcal{T}^\lambda_{\mathcal{G} \left ( W , a \right )} \left ( \mathbf{Y} \right )$ having family-wise error asymptotically no larger than $\alpha$. In Section \ref{section: varaince and LRV estimation} we given an example of an estimator $\widehat{\sigma}$ which satisfies condition (ii) in Theorem \ref{theorem: Gaussian tightness of normalized maximum} above, even if the data contains change points, provided the number of change points does not grow too quickly with the $n$. 

Importantly, the threshold (\ref{equation: Gaussian FWE control lambda}) explicitly accounts for the grid used, in the sense that if one chooses a coarser gird a lower price is paid for multiple testing. More specifically, if one chooses a coarser grid by increasing $a$ the constant $H_{1,2}$ adjusts which reduces the size of (\ref{equation: Gaussian FWE control lambda}). As a result, each local test performed will have higher power with the same family-wise error guarantee. Naturally, on a coarser grid the collection of tests may overall have lower power for detecting a change, since fewer tests are carried out in total. 

The constants $H_{1,1}$ and $H_{1,2}$ are defined explicitly below, where we put $b_1 = 1/a$ and $b_2 = 1$, and $\bar{\Phi} \left ( \cdot \right )$ for the tail function of a standard Gaussian random variable. 
\begin{align}
& H_{1,i} = \sum_{j=0}^\infty p_\infty^2 \left ( \frac{2 C_p}{a^j b_i d} \right ) \hspace{2em} i = 1,2 \label{equation: gaussian limit constants} \\
& p_\infty \left ( x \right ) = \exp \left ( - \sum_{k=1}^\infty \frac{1}{k} \bar{\Phi} \left ( \sqrt{kx / 4} \right ) \right ) \nonumber \\ 
& C_p = \left ( p+2 \right ) \left ( 1 + \sum_{j=1}^{p+1} \binom{p+1}{j}\binom{p+1}{j-1} \big / \sum_{i=0}^{p+1}\binom{p+1}{i}^2 \right ) \nonumber
\end{align}
The effect of the decay parameter $a$ on $H_{1,1}$ and $H_{1,2}$ can now be understood via (\ref{equation: gaussian limit constants}) using the additional fact that \cite[Corollary 3.18]{kabluchko2007extreme} for any $C > 0$ the quantity $p_\infty^2 \left ( C / x \right )$ behaves like $C/(2x)$ when $x$ is large. 

We now explain the origin of the double inequality in Theorem \ref{theorem: Gaussian tightness of normalized maximum}, and why it is sufficient for strong family-wise error control. In Theorem \ref{theorem: Gaussian tightness of normalized maximum} we are only able to establish tightness of the normalised maximum, as opposed to convergence to an extreme value distribution, for the following reason: the maximum over standardised increments of a sequence of Gaussian variables will be achieved on scales of the order $\mathcal{O} \left ( \log (n) \right )$ as was shown by \cite{kabluchko2007extreme, kabluchko2014limiting}, but scales of this order cannot necessarily be expressed as integer powers of $a$. Consequently the choice of grid introduces small fluctuations in the maximum, which persist in the limit, and correspond to the difference between $\log (n)$ and the closest integer power of $a$. However for a sub-sequence of $n$'s on which the quantity $b_{n} = a^{\left \lfloor \log_a (W) \right \rfloor} / W$ converges the normalised maximum does converge. The constants $H_{1,1}$ and $H_{1,2}$ therefore correspond to the largest and smallest constants which may appear in the extreme value limit on a sub-sequence of $n$'s on which $b_{n}$ converges to some constant. Such fluctuations can arise even for maxima of sequences of i.i.d. random variables \cite{anderson1970extreme}; for instance the maximum of $n$ i.i.d. Poisson random variables with fixed rate fluctuates between two integers \cite{kimber1983note}. 

\subsection{Extension to dependent and non-Gaussian noise} \label{section: extension to dependent and non-Gaussian noise}

We now extend the result of Theorem \ref{theorem: Gaussian tightness of normalized maximum} to dependent and non-Gaussian noise. This is done through the standard approach \cite{huvskova2001permutation, kirch2021moving, eichinger2018mosum} of computing local tests only on scales large enough such that partial sums of the data can be replaced by increments of a Wiener process without affecting the asymptotics. Therefore, we impose the following assumptions on the minimum grid scale and the noise component. 

\begin{assumption}
The noise terms are mean zero and weakly stationary, with auto-covariance function $\gamma_h = \text{Cov} \left ( \zeta_0, \zeta_h \right )$ and strictly positive long run variance $\tau^2 = \gamma_0 + 2 \sum_{h > 0} \gamma_h$. 
\label{assumption: positive long run varaince}
\end{assumption}

\begin{assumption}
There exists a Wiener process $\left \{ B (t) \right \}_{t>0}$ such that for some $\nu > 0$, possibly after enlarging the probability space, it holds $\mathbb{P}$-almost surely that $\sum_{t=1}^n \zeta_t - \tau B (n) = \mathcal{O} \left ( n^\frac{1}{2+\nu} \right )$. 
\label{assumption: strong approximation}
\end{assumption}

\begin{assumption}
With the same $\nu$ as in Assumption \ref{assumption: strong approximation}, the minimum grid scale $W$ satisfies $n / W \rightarrow \infty$ and $n^\frac{2}{2+\nu} \log (n) / W \rightarrow 0$. 
\label{assumption: minimum segment}
\end{assumption}

Assumption \ref{assumption: strong approximation} holds under a wide range of common dependence conditions such as $\beta$-mixing, functional dependence, and auto-covaraince decay \cite{berkes2014komlos, philipp1975almost, kuelbs1980almost}; these dependence conditions in turn hold for a range of popular time series models such as ARMA, GARCH, and bilinear models \cite{doukhan2012mixing,wu2005nonlinear}. If the noise terms are independently distributed Assumption \ref{assumption: strong approximation} holds as long as their $(2+\nu)$-th moment is bounded \cite{komlos1975approximation, csorgo2014strong}. With these assumption in place we have the following result on the behaviour of the maximum of local test statistics (\ref{equation: local test statistic}) under the null of no change points. 

\begin{theorem}
Let $\boldsymbol{Y} = \left ( Y_1, \dots, Y_n \right )'$ be from model (\ref{equation: signal + noise}) with signal component having no change points and grant Assumptions \ref{assumption: positive long run varaince} - \ref{assumption: minimum segment} hold. For fixed $a > 1$ introduce the following quantity: 

\begin{equation*}
M_{\mathcal{G}(W,a)}^\tau \left ( \boldsymbol{Y} \right ) = \max_{(l,w) \in \mathcal{G}(W,a)} \left \{ \frac{1}{\tau} D_{l,w}^p \left ( \boldsymbol{Y} \right ) \right \}.
\end{equation*}

(i) Putting $\mathfrak{a}_{n,W} = \sqrt{2 \log (n / W)}$ and $\mathfrak{b}_{n,W} = 2 \log(n / W) + \frac{1}{2} \log \log (n / W) - \log(\sqrt{\pi})$ the sequence of random variables $\left \{ \mathfrak{a}_{n,W} M_{\mathcal{G}(W,a)}^\tau \left ( \boldsymbol{Y} \right ) - \mathfrak{b}_{n,W} \mid n \in \mathbb{N} \right \}$ is tight, and there are constants $H_{2,1}$ and $H_{2,2}$ depending only on $a$ and $p$ such that for fixed $x \in \mathbb{R}$ the following holds

\begin{equation*}
o(1) + \exp \left ( - H_{2,1} e^{-x} \right ) \leq \mathbb{P} \left ( \mathfrak{a}_{n,W} M_{\mathcal{G}(W,a)}^\tau \left ( \boldsymbol{Y} \right ) - \mathfrak{b}_{n,W} \leq x \right ) \leq \exp \left ( - H_{2,2} e^{-x} \right ) + o(1)
\end{equation*}.

(ii) Moreover the result in (i) continues to hold if $\tau$ is replaced with any consistent estimator $\widehat{\tau}$ 
 which satisfies $\left | \widehat{\tau} / \tau -  1 \right | = o_\mathbb{P} \left ( \log ^ {-1} (n / W) \right )$. 

\label{theorem: tightness of non-gaussian dependent maximum}
\end{theorem}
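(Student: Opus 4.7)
The plan is to reduce the dependent non-Gaussian case to one driven by a Wiener process via the strong invariance principle, and then invoke an extreme-value analysis of the same flavour as in Theorem \ref{theorem: Gaussian tightness of normalized maximum}. By Assumption \ref{assumption: strong approximation} there is a Wiener process $B(\cdot)$ such that, setting $S_t = \sum_{s=1}^t \zeta_s$, we have $\sup_{t \leq n}|S_t-\tau B(t)| = \mathcal{O}(n^{1/(2+\nu)})$ almost surely. Because the signal is a single polynomial of degree $p$ and this trend is annihilated by the $(p+1)$-th difference of $p+2$ consecutive local sums, under the null $D_{l,w}^p(\boldsymbol{Y})$ depends only on the noise terms $(\zeta_t)$, and each local sum $\bar\zeta^{\,j}_{l,w}$ is a difference of two partial sums of $\zeta$.

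First I would introduce the Wiener analogue $\tilde D_{l,w}^p$, obtained by substituting $\tau\{B(l+(j+1)h_w)-B(l+jh_w)\}$ in place of $\bar\zeta^{\,j}_{l,w}$ with $h_w = \lfloor w/(p+2)\rfloor$. Since $D_{l,w}^p$ is a fixed linear combination of the $p+2$ local sums with coefficients $\binom{p+1}{j}$, scaled by $\{h_w\sum_i\binom{p+1}{i}^2\}^{-1/2}$, the remainder $D_{l,w}^p(\boldsymbol{Y}) - \tilde D_{l,w}^p$ is at most $\mathcal{O}(n^{1/(2+\nu)}/\sqrt{w})$ uniformly in $l$, so on $\mathcal{G}(W,a)$
\begin{equation*}
\max_{(l,w)\in\mathcal{G}(W,a)}\left|\tfrac{1}{\tau}D_{l,w}^p(\boldsymbol{Y}) - \tfrac{1}{\tau}\tilde D_{l,w}^p\right| = \mathcal{O}_{\mathbb{P}}\!\left(n^{1/(2+\nu)}/\sqrt{W}\right).
\end{equation*}
Multiplying by $\mathfrak{a}_{n,W} = \sqrt{2\log(n/W)}$ and invoking Assumption \ref{assumption: minimum segment}, this bound is $o_{\mathbb{P}}(1)$, hence $\mathfrak{a}_{n,W} M^\tau_{\mathcal{G}(W,a)}(\boldsymbol{Y})-\mathfrak{b}_{n,W}$ shares its asymptotic distribution with its Wiener counterpart.

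The second step is to derive the tightness and the sandwich for the Wiener-based maximum. After dividing by $\tau$, the field $\{\tilde D_{l,w}^p/\tau\}_{(l,w)\in\mathcal{G}(W,a)}$ is mean-zero, unit-variance Gaussian, with covariance depending only on $p$ and on the ratios of scales in $\mathcal{W}(W,a)$. Because $W\to\infty$, on each scale the field behaves locally like a standardised continuum Gaussian increment process, so one can apply Pickands/Kabluchko-type extreme-value theorems \cite{kabluchko2007extreme,kabluchko2014limiting} to obtain the normalising sequences $\sqrt{2\log(n/W)}$ and $2\log(n/W)+\tfrac{1}{2}\log\log(n/W)-\log\sqrt{\pi}$ scale-by-scale; the $+\tfrac{1}{2}\log\log$ (versus $-\tfrac{1}{2}\log\log$ in Theorem \ref{theorem: Gaussian tightness of normalized maximum}) reflects that a continuum of locations and scales is effectively being scanned once $W$ diverges. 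Combining across the $\mathcal{O}(\log(n/W))$ distinct scales by a union-type argument, and accounting as in Theorem \ref{theorem: Gaussian tightness of normalized maximum} for the fact that the dominant scale $\asymp\log(n/W)$ need not be a power of $a$, produces the two-sided sandwich with constants $H_{2,1}, H_{2,2}$ depending only on $a$ and $p$; the $d$-dependence seen in Theorem \ref{theorem: Gaussian tightness of normalized maximum} washes out because now $W/\log n \to \infty$. The main technical obstacle will be the extreme-value bookkeeping for this continuum field: one needs a Berman-type decoupling, exploiting the fact that on each scale the field's covariance decays at polynomial rate in the separation of grid locations, to reduce the maximum over each scale to a Poisson-approximated count of exceedances, and then a careful asymptotic counting of how much the geometric grid discreteness shifts the constant in the Gumbel limit.

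Finally, for part (ii), writing $1/\widehat\tau = (1/\tau)\{1+o_{\mathbb{P}}(\log^{-1}(n/W))\}$ gives $\mathfrak{a}_{n,W}M^{\widehat\tau}_{\mathcal{G}(W,a)}(\boldsymbol{Y}) = \mathfrak{a}_{n,W}M^\tau_{\mathcal{G}(W,a)}(\boldsymbol{Y})\cdot\{1+o_{\mathbb{P}}(\log^{-1}(n/W))\}$. Since part (i) entails $M^\tau_{\mathcal{G}(W,a)}(\boldsymbol{Y}) = \mathcal{O}_{\mathbb{P}}(\sqrt{\log(n/W)})$, the induced additive error in the centred and scaled statistic is $o_{\mathbb{P}}(1)$ and the sandwich of part (i) is preserved.
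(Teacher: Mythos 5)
Your proposal is correct and follows essentially the same route as the paper: a strong-approximation step replacing the noise partial sums by $\tau B(\cdot)$ with uniform error $\mathcal{O}_{\mathbb{P}}(n^{1/(2+\nu)}/\sqrt{W})$, which is $o_{\mathbb{P}}(1/\mathfrak{a}_{n,W})$ under Assumption \ref{assumption: minimum segment}; a continuous Pickands-type excursion bound per scale (the source of the $+\tfrac{1}{2}\log\log$ and the disappearance of $d$, exactly as you say); Berman-type decoupling and geometric summation across scales with the subsequence argument for the grid mismatch; and Khintchine/Slutsky for part (ii). The only quibble is that the scale whose mismatch with the nearest power of $a$ drives the tightness-rather-than-convergence phenomenon is $W$ itself (via $b_n = a^{\lfloor\log_a W\rfloor}/W$), not a scale of order $\log(n/W)$, but this does not affect the argument.
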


By the same reasoning used in Section \ref{section: Family-wise error control under Gaussianity} under assumptions \ref{assumption: positive long run varaince} - \ref{assumption: minimum segment} Theorem \ref{theorem: tightness of non-gaussian dependent maximum} guarantees that choosing $\lambda = \widehat{\tau} \lambda_\alpha$, with $\widehat{\tau}$ satisfying the condition given in part (ii), and with $\lambda_\alpha$ defined as follows 
\begin{equation}
\lambda_\alpha = \sqrt{2 \log(n/W)} + \frac{\frac{1}{2} \log\log (n/W) - \log \left ( \sqrt{\pi} / H_{2,2} \right ) + \log \left ( -2 \log ^ {-1} \left ( 1 - \alpha \right ) \right )}{\sqrt{2 \log(n/W)}}
\label{equation: generic FWE control lambda}
\end{equation}
will result in the collection of tests $\mathcal{T}^\lambda_{\mathcal{G} \left ( W , a \right )} \left ( \mathbf{Y} \right )$ having family-wise error asymptotically no larger than $\alpha$. In Section \ref{section: varaince and LRV estimation} we give examples of variance and long run variance estimators which satisfy condition (ii) in Theorem \ref{theorem: tightness of non-gaussian dependent maximum}, even in the presence of change points, provided the number of change points does not grow too quickly with $n$.  

By the same mechanism as in Theorem \ref{theorem: Gaussian tightness of normalized maximum} the threshold (\ref{equation: generic FWE control lambda}) is adaptive to the chosen grid. The constants $H_{2,1}$ and $H_{2,2}$ in Theorem \ref{theorem: tightness of non-gaussian dependent maximum} are as shown below, where $C_p$ and $b_i$ are as in Section \ref{section: Family-wise error control under Gaussianity}. 

\begin{equation*}
H_{2,i} = \frac{b_i^{-1 }C_p}{1 - a^{-1}} \hspace{2em} i = 1,2
\end{equation*}

The proofs of Theorems \ref{theorem: Gaussian tightness of normalized maximum} and \ref{theorem: tightness of non-gaussian dependent maximum} reveal that maxima achieved over different scales in the grid (\ref{equation: a-adic grid}) will be asymptotically independent. This combined with the tightness of the normalised maximum shows that the thresholds (\ref{equation: Gaussian FWE control lambda}) and (\ref{equation: generic FWE control lambda}) are the sharpest possible for each scale in the grid, under their respective sets of assumptions. That is, if one were to restrict tests to a single scale of the order $\mathcal{O} \left ( W \right )$ the threshold needed to control the family-wise error of the collection of tests would be asymptotically equivalent to the thresholds presented for controlling the family-wise error of test conducted on the whole grid. 

\section{A fast algorithm for change point inference} \label{section: fast algorithm}

\subsection{The algorithm}

We now present an algorithm, based on the tests introduced in Section \ref{section: local tests}, for efficiently recovering disjoint sub-intervals of the index set $\left \{ 1, \dots n \right \}$ in such a way that each must contain a change point uniformly at some prescribed significance level $\alpha$. The algorithm is motivated by the Narrowest Significance Pursuit proposed by \cite{fryzlewicz2023narrowest}, in that the focuses is on recovering theses intervals through a series of local tests so that each interval is the narrowest possible. However, there are several important differences between our approach and the approach in \cite{fryzlewicz2023narrowest}, which we outline below before presenting the algorithm. 
\begin{itemize}
    \item Each of our local tests can be computed in constant time as a function of the sample size and independently of the scale of the computation. This is not the case for \cite{fryzlewicz2023narrowest}, where each local test requires solving a linear program. 
    \item We compute local tests over the sparse grid defined in (\ref{equation: a-adic grid}), whereas \cite{fryzlewicz2023narrowest} uses a two stage procedure where local tests are initially performed over a coarse grid and intervals flagged in the first stage are exhaustively sub-searched. In the worst case the former leads to $\mathcal{O} \left ( n \log (n) \right )$ tests being carried out, whereas the latter may lead to $\mathcal{O} \left ( n^2 \right )$ test being performed. 
    \item The thresholds used used in our local tests are designed to adapt to the chosen grid, which accounts for the statistical-computational trade off in large scale problems change point problems. However, the threshold used in \cite{fryzlewicz2023narrowest} does not depend on the chosen grid. 
\end{itemize}

Given a grid of $(l,w)$ pairs $\mathcal{G} \left ( W , a \right )$ constructed according to (\ref{equation: a-adic grid}) our approach is to greedily search for a pair on which the associated local test (\ref{equation: local tests}) declares a change, starting from the finest scale in the grid. When such a pair is found the associated interval $\left \{ l , \dots, l + w - 1 \right \}$ is recorded and the search is recursively repeat to the left and right of this interval. Pseudo code for the procedure is given below in Algorithm \ref{algorithm: binary segmentation over grids}. In the pseudo code given integers $s$ and $e$ which satisfy $1 \leq s < e \leq n$ we write $\mathcal{G}_{s,e} \left ( W, a \right )$ for the set of $(l,w)$ pairs in $\mathcal{G} \left ( W , a \right )$ which can be associated with an interval satisfying $\left \{ l, \dots, l + w - 1\right \} \subseteq \left \{ s, \dots, e \right \}$. We write $\lambda_\alpha$ for either of the thresholds (\ref{equation: Gaussian FWE control lambda}) or (\ref{equation: generic FWE control lambda}), depending on whether we are operating under Assumptions \ref{assumption: gaussian noise} - \ref{assumptioin: logarithmic segment length} or Assumptions \ref{assumption: positive long run varaince} - \ref{assumption: minimum segment}. Finally we write $\widehat{\tau}$ for a generic estimator of the (long run) standard deviation of the noise which satisfies either the of the conditions in of part (ii) of Theorem \ref{theorem: Gaussian tightness of normalized maximum} or in part (ii) of Theorem \ref{theorem: tightness of non-gaussian dependent maximum}, depending on the set of assumptions we are operating under. 

\begin{algorithm}[H]
\SetKwFunction{greedyIntervalSearch}{greedyIntervalSearch}
\SetKwFunction{RecordInterval}{RecordInterval}
\SetKwProg{Fn}{function}{:}{\KwRet}

\Fn{\greedyIntervalSearch{$\boldsymbol{Y}, s, e$}}{

\If{$e-s < \min \left ( W, p+1 \right )$}
{STOP}
detection $\leftarrow$ \texttt{False} \\
\For{$(l,w)$ in $\mathcal{G}_{s,e} \left ( W, a \right )$}
{
\If{$\left | D_{l,w}^p \left ( \boldsymbol{Y} \right ) \right | > \widehat{\tau} \lambda_\alpha$}
{
\RecordInterval{$l,w$} \\
\greedyIntervalSearch{$Y, s, l$} \\ 
\greedyIntervalSearch{$Y, l+w-1, e$} \\
detection $\leftarrow$ \texttt{True}
}
\If{detection}
{
BREAK
}
}
}
\caption{The greedy interval search algorithm for change point inference in piecewise polynomials. Given an appropriate threshold, the algorithm returns a collection of mutually disjoint intervals which each must contain a change point uniformly with probability at least $1-\alpha + o(1)$.}
\label{algorithm: binary segmentation over grids}
\end{algorithm}

A consequence of using thresholds (\ref{equation: Gaussian FWE control lambda}) and (\ref{equation: generic FWE control lambda}) in Algorithm \ref{algorithm: binary segmentation over grids} is that with no assumptions on the number of change points in the data or their spacing, with high probability, every interval returned is guaranteed to contain at least one change point. The number of intervals returned therefore functions as an assumption free lower bound on the number of change points in the data. This behaviour is summarised in Corollary \ref{corollary: coverage guarantee} below. 

\begin{corollary}
Let $\hat{I}_1, \dots, \hat{I}_{\hat{N}}$ be intervals returned by Algorithm \ref{algorithm: binary segmentation over grids}. On a set with probability asymptotically larger than $1 - \alpha$ the following events occur simultaneously:
\begin{gather*}
E_1^* = \left \{ \widehat{N} \leq N \right \} \\
E_2^* = \left \{ \hat{I}_k \cap \Theta \neq \emptyset \mid k = 1, \dots, \hat{N} \right \}.
\end{gather*}
\label{corollary: coverage guarantee}
\end{corollary}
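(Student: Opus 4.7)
The plan is to exhibit a single high-probability event $E^*$ on which both $E_1^*$ and $E_2^*$ are forced deterministically by the recursive structure of Algorithm \ref{algorithm: binary segmentation over grids}. The corollary then reduces to (a) controlling $\mathbb{P}(E^*)$ through the family-wise error results of Section \ref{section: local tests}, and (b) a short combinatorial argument about how the algorithm explores the grid.

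For (a), I would set
\begin{equation*}
E^* = \bigl\{ |D_{l,w}^p(\boldsymbol{Y})| \leq \widehat{\tau}\,\lambda_\alpha \ \text{ for every } (l,w) \in \mathcal{G}(W,a) \text{ with } \{l,\dots,l+w-1\} \cap \Theta = \emptyset \bigr\}.
\end{equation*}
The key observation is that whenever $\{l,\dots,l+w-1\}$ contains no element of $\Theta$, the restriction of $f_\circ(\cdot/n)$ to this interval is a single polynomial of degree $\leq p$. Hence the block sums $\bar{Y}^0_{l,w},\dots,\bar{Y}^{p+1}_{l,w}$ have deterministic parts forming a polynomial of degree $\leq p$ in the block index $j$, and the $(p+1)$-th forward difference appearing in $D_{l,w}^p$ annihilates that polynomial. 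This gives $D_{l,w}^p(\boldsymbol{Y}) = D_{l,w}^p(\boldsymbol{\zeta})$ on such $(l,w)$, so the maximum of $|D_{l,w}^p(\boldsymbol{Y})|$ over this subcollection is bounded above by the full-grid maximum computed on pure noise. The discussion immediately following Theorem \ref{theorem: Gaussian tightness of normalized maximum} (or Theorem \ref{theorem: tightness of non-gaussian dependent maximum} under the non-Gaussian set of assumptions), together with the specific choice of $\lambda_\alpha$ in (\ref{equation: Gaussian FWE control lambda}) or (\ref{equation: generic FWE control lambda}), then yields $\mathbb{P}(E^*) \geq 1 - \alpha + o(1)$.

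For (b), I would show that $E^*$ forces both of the claimed events. Event $E_2^*$ is immediate: every recorded interval $\hat{I}_k = \{l,\dots,l+w-1\}$ is recorded exactly because $|D_{l,w}^p(\boldsymbol{Y})| > \widehat{\tau}\lambda_\alpha$, which under $E^*$ is only compatible with $\hat{I}_k \cap \Theta \neq \emptyset$. For $E_1^*$, observe that after $\hat{I}_k$ is recorded the algorithm recurses on $[s,l]$ and $[l+w-1,e]$, so every later-recorded interval has support contained in one of these two sub-segments and thus interior disjoint from $\hat{I}_k$. Iterating shows $\hat{I}_1,\dots,\hat{I}_{\hat N}$ have pairwise disjoint interiors; assigning a distinct change point to each then gives an injection from $\{1,\dots,\hat N\}$ into $\Theta$, whence $\hat N \leq N$.

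The only step worth singling out is the polynomial-annihilation reduction underlying (a): one must check that for any polynomial $g$ of degree $\leq p$ the block sum $\sum_{i=0}^{m-1} g(l+jm+i)$ with $m=\lfloor w/(p+2) \rfloor$ is itself a polynomial of degree $\leq p$ in $j$, so that taking $(p+1)$-th forward differences in $j$ returns zero. This is the standard identity recalled in \cite{chan1977note}. Everything else is direct bookkeeping from the algorithm's recursion and the family-wise error guarantees already proven.
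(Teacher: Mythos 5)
Your proof is correct and coincides with the argument the paper leaves implicit: the paper gives no standalone proof of this corollary, treating it as an immediate consequence of the family-wise error control of Theorems \ref{theorem: Gaussian tightness of normalized maximum} and \ref{theorem: tightness of non-gaussian dependent maximum} via exactly your event $E^*$ --- polynomial annihilation of the block sums on change-point-free grid intervals reduces $D_{l,w}^p(\boldsymbol{Y})$ to $D_{l,w}^p(\boldsymbol{\zeta})$, so with probability $1-\alpha+o(1)$ no such test rejects, which forces $E_2^*$, and the disjointness of the recorded intervals forces $E_1^*$. The only point deserving a half-sentence more care is that the recursion restarts at $s=l$ and $s=l+w-1$, so consecutive recorded intervals can share a single endpoint; your ``disjoint interiors'' observation is the right fix, though strictly the injection into $\Theta$ should address the degenerate case of a change point sitting exactly at such a shared endpoint (an imprecision the paper itself inherits by describing the output as ``mutually disjoint'' intervals).
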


Although the coverage guarantee provided by Corollary \ref{corollary: coverage guarantee} is asymptotic in nature, in practice we find that Algorithm \ref{algorithm: binary segmentation over grids} provides accurate coverage in finite samples, and in fact tends to deliver over coverage; see the simulation results in Section \ref{section: coverage simulations} and in Section \ref{section: additional simulation study}. The thresholds proposed for Algorithm \ref{algorithm: binary segmentation over grids} rely on extreme value asymptotics. For such results the convergence rate is often slow, and indeed we conjecture that the convergence rate for our procedure is no better than $\mathcal{O} \left ( \log^{-1} (n)\right ) $. However, it is also known that that the extreme value approximation for upper quantiles works well even for moderate sample sizes; confer for instance \cite[Figure 1.3.1]{csorgo1997limit}, \cite[Section 2.4]{leadbetter2012extremes}, and \cite{aue2009extreme, aue2008testing}.  

The worst case run time of Algorithm \ref{algorithm: binary segmentation over grids} is always of the order $\mathcal{O} \left ( n \log (n) \right )$, independent of the number of change points in the data, their spacing, and the polynomial degree of the signal. This is because the worst case run time will be attained when a test has to be carried out for every $(l,w)$ pair in the grid $\mathcal{G} \left ( W, a \right )$. However, for any fixed $a > 1$ the the grid contains at most of the order $\mathcal{O} \left ( n \log (n) \right )$ such pairs, and by first calculating all cumulative sums of the data, which can be done in $\mathcal{O} \left ( n \right )$ time, each local test can be carried out in constant time.

We finally remark that many existing procedures for change point detection make use of thresholds which involve unknown constants other than the scale of the noise. In general these constants are either chosen sub-optimally, or calibrated via Monte Carlo. See for instance the implementation of \cite{verzelen2020optimal} by \cite{liehrmann2023ms} for an example in in the piecewise constant setting, and the discussion on the practical selection of tuning parameters in \cite{kim2022moving} for an example in the piecewise linear setting. Meanwhile, the thresholds used in Algorithm \ref{algorithm: binary segmentation over grids} are the sharpest possible, and do not rely on any unknown constants other than the scale of the noise. 

\subsection{Variance and long run variance estimation} \label{section: varaince and LRV estimation}

In general the scale of the noise will not be known, and to make Algorithm \ref{algorithm: binary segmentation over grids} operational the (long run) standard deviation of the noise will need to be estimated consistently, according to the conditions given in part (ii) of either Theorem \ref{theorem: Gaussian tightness of normalized maximum} or Theorem \ref{theorem: tightness of non-gaussian dependent maximum}. In this section we give several strategies for consistently estimating the noise level in the presence of an unknown piecewise polynomial signal. 

\subsubsection{Variance estimation under Gaussian noise}

In change point problems where the noise is independently distributed, homoskedastic, and Gaussian the standard deviation is commonly estimated using the median absolute deviation (MAD) estimator \cite{hampel1974influence}. To account for the unknown piecewise polynomial signal we propose to use the following generalisation of the MAD estimator based on the $(p+1)$-th difference of the data. Letting $X_{p+2},\dots, X_n$ be the $(p+1)$-th difference of the sequence $Y_1, \dots, Y_n$ the estimator is defined as follows:
\begin{equation}
\widehat{\sigma}_{\text{MAD}} = \frac{\text{median} \left \{ \left | X_{p+2} \right |, \dots, \left | X_n \right | \right \}}{\Phi^{-1} \left ( 3 / 4 \right ) \sqrt{\sum_{j=0}^{p+1}\binom{p+1}{j}^2}}.
\label{equation: MAD estimator}
\end{equation}

As shown by the following lemma, when the assumptions of Theorem \ref{theorem: Gaussian tightness of normalized maximum} hold the modified MAD estimator satisfies the condition in part (ii) of the Theorem \ref{theorem: Gaussian tightness of normalized maximum} as long as the number of change points grows more slowly than $n / \log (n)$. 

\begin{lemma}
If the noise terms are independently distributed and Gaussian with common variance $\sigma^2$ it holds that $\left | \widehat{\sigma}_\text{MAD} - \sigma \right | = \mathcal{O}_\mathbb{P} \left ( \frac{1}{\sqrt{n}} \vee \frac{N}{n} \right )$. 
\label{lemma: MAD consistency}
\end{lemma}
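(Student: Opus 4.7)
The plan is to treat the estimator as a rescaled sample median of $|X_{p+2}|,\dots,|X_n|$, control the empirical CDF of these differences uniformly, and invert at the level $1/2$. The main obstacles will be the $(p+1)$-dependence of the clean differences and the bookkeeping around contamination from the change points; both can be handled with elementary arguments.

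First, a clean difference (one not straddling any change point) has the form $X_t = \sum_{j=0}^{p+1}(-1)^j\binom{p+1}{j}\zeta_{t-j}$, which is Gaussian with variance $\sigma^2 c_p^2$ where $c_p^2 = \sum_{j=0}^{p+1}\binom{p+1}{j}^2$. Letting $F$ denote the CDF of $|\mathcal{N}(0,\sigma^2 c_p^2)|$, we have $F^{-1}(1/2) = \sigma c_p \Phi^{-1}(3/4)$, so that $\sigma = F^{-1}(1/2) / \{c_p \Phi^{-1}(3/4)\}$. Since $F$ has a strictly positive, Lipschitz density at $F^{-1}(1/2)$, a standard quantile-inversion argument reduces the lemma to establishing
\begin{equation*}
\sup_{x \in \mathbb{R}} \bigl| F_n(x) - F(x) \bigr| = \mathcal{O}_{\mathbb{P}}\!\left( \tfrac{1}{\sqrt{n}} \vee \tfrac{N}{n} \right),
\end{equation*}
where $F_n$ is the empirical CDF of $|X_{p+2}|,\dots,|X_n|$; the factor $c_p \Phi^{-1}(3/4)$ is then absorbed by linearity.

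Second, I would decompose the index set $\{p+2,\dots,n\}$ into the set $\mathcal{C}$ of clean indices (those $t$ with $\{t-p-1,\dots,t\}$ containing no change point) and its complement. By the structure of $(p+1)$-th differencing, $|\mathcal{C}^c| \le (p+2)N$, so the contamination contributes at most $(p+2)N/n = \mathcal{O}(N/n)$ to the sup-norm of $F_n - F$. On $\mathcal{C}$ the sequence $\{|X_t|\}$ is a functional of a stationary Gaussian process with $(p+1)$-dependence, and a standard empirical-process bound for $m$-dependent (or more generally strongly mixing) sequences — for instance a blocking argument combined with the DKW inequality on block sums, or directly Doukhan--Massart--Rio type results — yields $\sup_x |F_n^{\mathcal{C}}(x) - F(x)| = \mathcal{O}_{\mathbb{P}}(1/\sqrt{n})$, where $F_n^{\mathcal{C}}$ is the empirical CDF restricted to $\mathcal{C}$. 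Combining the two pieces gives the required rate on $F_n - F$.

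The main obstacle is handling the $(p+1)$-dependence of the clean differences in the empirical process step; however, because $p$ is fixed and the marginals are Gaussian, any of the standard tools cited above apply without modification. Putting the empirical CDF bound through the quantile inversion at $1/2$ gives $|\mathrm{median}\{|X_t|\} - F^{-1}(1/2)| = \mathcal{O}_{\mathbb{P}}(1/\sqrt{n} \vee N/n)$, and dividing by the deterministic constant $c_p \Phi^{-1}(3/4)$ delivers the conclusion $|\widehat{\sigma}_{\mathrm{MAD}} - \sigma| = \mathcal{O}_{\mathbb{P}}(1/\sqrt{n} \vee N/n)$.
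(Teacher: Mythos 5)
Your proposal is correct, and its skeleton coincides with the paper's: both split the differenced indices into the $\mathcal{O}(N)$ contaminated ones (those whose window straddles a change point, at most $p+2$ per change) and the clean ones, both handle the $(p+1)$-dependence of the clean differences by partitioning into residue classes so that each subsequence is i.i.d., and both then concentrate the sample median. Where you diverge is in the packaging of that last step. The paper does not prove a uniform empirical-CDF bound: it evaluates the exceedance count at the single threshold $\Phi^{-1}(3/4)\,c_p(\sigma+\delta)$, applies Hoeffding's inequality to the resulting Bernoulli indicators within each residue class, and completes the argument with an explicit lower bound $p_\delta \geq 1/2 - (3/2)(\delta/\sigma)$ on the Gaussian exceedance probability; this yields a fully explicit non-asymptotic tail bound of the form $2(p+1)\exp\left(-2m\left[(3/2)(\delta/\sigma) - N/m\right]^2\right)$, from which the $\mathcal{O}_{\mathbb{P}}(n^{-1/2}\vee N/n)$ rate is read off. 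Your route through $\sup_x|F_n(x)-F(x)|$ for an $m$-dependent sequence (blocking plus DKW) followed by generic quantile inversion is heavier machinery than strictly needed --- a pointwise bound at the target quantile suffices --- and it trades the paper's explicit constants for a cleaner modular structure; the role played by your ``density bounded away from zero at the median'' condition is exactly the role played by the paper's explicit bound on $p_\delta$. Both arguments are valid and deliver the same rate.
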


\subsubsection{Variance estimation under non-Gaussian noise}

For variance estimation under independently distributed light tailed homoskedastic noise, difference based estimators are often used \citep{dumbgen2001multiscale, rice1984bandwidth, gasser1986residual}. To account for the unknown piecewise polynomial signal we propose to use the following estimator based on the $(p+1)$-th difference of the data sequence. The estimator is defined as follows:
\begin{equation}
\widehat{\sigma}^2_{\text{DIF}} = \frac{1}{n-(p+1)}\sum_{t=p+2}^n \left \{ \frac{X_t^2}{\sum_{j=0}^{p+1} \binom{p+1}{j}^2} \right \}.
\label{equation: difference based var estimator}
\end{equation}

As shown by the following lemma, under some mild conditions on signal component the difference based estimator satisfies condition (ii) in Theorem \ref{theorem: tightness of non-gaussian dependent maximum} as long as the number of change points again grows more slowly than $n / \log (n)$. 

\begin{lemma}
If the function $f_\circ \left ( \cdot \right )$ is bounded and the noise terms are independently distributed with common variance $\sigma^2$ and bounded fourth moments it holds that $\left | \widehat{\sigma}_\text{DIF}^2 - \sigma^2 \right | = \mathcal{O}_\mathbb{P} \left ( \frac{1}{\sqrt{n}} \vee \frac{N}{n} \right )$. 
\label{lemma: consistency of diff-sd estimator}
\end{lemma}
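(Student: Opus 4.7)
The plan is to decompose each difference $X_t$ into a signal part and a noise part, and then separately control each contribution to the sum defining $\widehat{\sigma}^2_{\text{DIF}}$. Write $X_t = A_t + B_t$ where
\begin{equation*}
A_t = \sum_{j=0}^{p+1} (-1)^{p+1-j} \binom{p+1}{j} f_\circ\!\left( (t-p-1+j)/n \right), \qquad B_t = \sum_{j=0}^{p+1} (-1)^{p+1-j} \binom{p+1}{j} \zeta_{t-p-1+j}.
\end{equation*}
The key observation is that by the definition of change points in model (\ref{equation: signal + noise}), on any index interval of length $p+2$ not containing a change point location, the function $f_\circ$ is a polynomial of degree at most $p$, so its $(p+1)$-th difference vanishes. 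Consequently $A_t \neq 0$ only for $t$ in a window of width $p+1$ around each $\eta_k$, giving at most $N(p+1)$ nonzero terms. Since $f_\circ$ is bounded, $|A_t| \leq C$ uniformly for some constant $C$ depending only on $p$ and $\|f_\circ\|_\infty$.

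Next I would expand $X_t^2 = A_t^2 + 2 A_t B_t + B_t^2$ and bound each of the three resulting sums. The signal sum satisfies $\sum_t A_t^2 = \mathcal{O}(N)$ deterministically. For the cross term, only $\mathcal{O}(N)$ summands are nonzero, and each $A_t B_t$ has mean zero and bounded variance (since the fourth moment of $\zeta_t$ is bounded, hence so is the variance of $B_t$), so by Chebyshev $\sum_t A_t B_t = \mathcal{O}_\mathbb{P}(\sqrt{N})$. The purely stochastic sum $S = \sum_{t=p+2}^n B_t^2$ is the main term: its expectation equals $(n-p-1) \sigma^2 \sum_{j=0}^{p+1}\binom{p+1}{j}^2$ because the $\zeta_i$'s are independent. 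For its variance, note that $\{B_t\}$ is a stationary $(p+1)$-dependent sequence, so $\{B_t^2\}$ is also $(p+1)$-dependent, and each $B_t^2$ has bounded variance by the fourth moment hypothesis on $\zeta_t$. Hence $\operatorname{Var}(S) = \mathcal{O}(n)$, which by Chebyshev gives $S = \mathbb{E}[S] + \mathcal{O}_\mathbb{P}(\sqrt{n})$.

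Combining these bounds, dividing by $(n-(p+1)) \sum_{j=0}^{p+1}\binom{p+1}{j}^2$ yields
\begin{equation*}
\widehat{\sigma}^2_{\text{DIF}} - \sigma^2 = \mathcal{O}_\mathbb{P}\!\left( \frac{1}{\sqrt{n}} \right) + \mathcal{O}\!\left( \frac{N}{n} \right) + \mathcal{O}_\mathbb{P}\!\left( \frac{\sqrt{N}}{n} \right),
\end{equation*}
where the last (cross) term is always dominated by the previous two via $\sqrt{N}/n \leq (1/\sqrt{n}) \vee (N/n)$. This yields the claimed rate $\mathcal{O}_\mathbb{P}(1/\sqrt{n} \vee N/n)$.

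The only step requiring real care is the variance bound for $S$: I would verify explicitly that stationary $(p+1)$-dependence plus a uniform second-moment bound on $B_t^2$ (which follows from $\mathbb{E}\zeta_t^4 < \infty$) yields $\operatorname{Var}(S) \leq C_p \cdot n \cdot \mathbb{E}[B_1^4]$ by bounding the $\mathcal{O}(n \cdot p)$ nonzero covariances $\operatorname{Cov}(B_t^2, B_{t+h}^2)$ via Cauchy--Schwarz. All other steps are essentially bookkeeping, so this variance estimate is the main analytical hinge of the argument.
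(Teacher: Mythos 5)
Your argument is correct, and it reaches the stated rate by a genuinely different route from the paper. The paper does not decompose $X_t^2$ by hand: it writes $\widehat{\sigma}^2_{\text{DIF}}$ as a quadratic form $\boldsymbol{Y}'\boldsymbol{D}_p\boldsymbol{Y}/(n-p-1)$ and invokes the exact mean-squared-error expansion for difference-based variance estimators (equation (6) of \cite{dette1998estimating}), which packages the signal term $(\boldsymbol{f}'\boldsymbol{D}_p\boldsymbol{f})^2$, the cross terms, and the fourth-moment corrections into a single formula; it then bounds the trace terms by $\mathcal{O}(n)$ and the signal-dependent terms by $\mathcal{O}(N^2)$ using boundedness of $f_\circ$ and the fact that the $(p+1)$-th difference of a degree-$p$ polynomial vanishes, and finishes with Chebyshev. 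Your decomposition $X_t = A_t + B_t$ followed by a direct second-moment computation for $\sum_t B_t^2$ via $(p+1)$-dependence is more elementary and self-contained — it makes transparent exactly where the $1/\sqrt{n}$ (noise fluctuation) and $N/n$ (signal leakage at change points) contributions come from, at the cost of having to verify the covariance bookkeeping yourself, which you correctly identify as the one step needing care. Two minor points: the lemma assumes the $\zeta_t$ are independent with common variance and bounded fourth moments but not identically distributed, so you should say ``$(p+1)$-dependent with uniformly bounded fourth moments'' rather than ``stationary'' — nothing in your variance bound actually uses stationarity, so this is cosmetic; and your domination claim $\sqrt{N}/n \leq (1/\sqrt{n}) \vee (N/n)$ holds simply because $N \leq n$. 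With those cosmetic fixes the proof is complete and delivers the same bound as the paper's.
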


\subsubsection{Long-run variance estimation}

For estimating the long run variance we extend the estimator proposed in \cite{wu2007inference}, based on first order differences of local sums of the data, to $(p+1)$-th differences. To form the estimator we choose a scale $W'$, which is not necessarily related to any of the scales in the grid (\ref{equation: a-adic grid}), and form the following local sums:
\begin{equation}
\bar{Y}_{t,W'} = Y_{(t-1)W' + 1} + \dots + Y_{tW'}, \hspace{2em} t = 1, \dots, \left \lfloor n / W' \right \rfloor
\label{equation: LRV local sums}
\end{equation}
Then, putting $\bar{X}_{p+2,W'}, \dots, \bar{X}_{\left \lfloor n / W' \right \rfloor,W'}$ for the $(p+1)$-th difference of the sequence of $\bar{Y}_{W'}$'s, the estimator is defined as follows:
\begin{equation}
\widehat{\tau}^2_{\text{DIF}} = \frac{1}{\left \lfloor n / W' \right \rfloor - (p+1)} \sum_{t=p+2}^{\left \lfloor n / W' \right \rfloor} \left \{ \frac{\bar{X}_{t,W'}^2}{W' \sum_{i=0}^{p+1} \binom{p+1}{i}^2} \right \}.
\label{equation: long run varaince estimator}
\end{equation}
In order to show consistency of our long run variance estimator we need to impose the following assumption, which states that the sequence of auto-covariances for the noise decay sufficiently fast and can be estimated well from a finite sample. 
\begin{assumption}
The auto-covariances decay fast enough that $\sum_{h>1} h \left | \gamma_h \right | < \infty$, and for any fixed integer $h$ and any ordered subset of $\left \{ 1, \dots, n - h \right \}$, say $M$, it holds that $\left | M \right | ^ {-1} \sum_{t \in M} \zeta_t \zeta_{t+h} = \gamma_h + \mathcal{O}_\mathbb{P} \left ( 1 / \sqrt{\left | M \right |} \right )$. 
\label{assumtion: ACVF}
\end{assumption}
With the above assumption in place, we have the following guarantee on the consistency of the estimator. 
\begin{lemma}
If the function $f_\circ \left ( \cdot \right )$ is bounded and the noise terms satisfy Assumption \ref{assumption: positive long run varaince} and Assumption \ref{assumtion: ACVF} it holds that $\left | \hat{\tau}_{\text{DIF}}^2 - \tau^2 \right | = \mathcal{O}_{\mathbb{P}} \left ( \frac{W'}{\sqrt{n}} \vee \frac{1}{W'} \vee \frac{NW'^2}{n} \right )$. 
\label{lemma: consistency of LRV estimator}
\end{lemma}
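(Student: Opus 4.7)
The plan is to decompose $\widehat{\tau}^2_{\text{DIF}}$ according to the signal-plus-noise structure of $\boldsymbol{Y}$ and analyse the signal and noise contributions separately. Setting $\bar{F}_{t,W'} = \sum_{s=(t-1)W'+1}^{tW'} f_\circ(s/n)$ and $\bar{Z}_{t,W'} = \sum_{s=(t-1)W'+1}^{tW'} \zeta_s$ so that $\bar{Y}_{t,W'} = \bar{F}_{t,W'} + \bar{Z}_{t,W'}$, and writing $\bar{X}_{t,W'}^F, \bar{X}_{t,W'}^Z$ for the corresponding $(p+1)$-th differences, the square expands as $\bar{X}_{t,W'}^2 = (\bar{X}_{t,W'}^F)^2 + 2 \bar{X}_{t,W'}^F \bar{X}_{t,W'}^Z + (\bar{X}_{t,W'}^Z)^2$. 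I will show that after the normalisation in (\ref{equation: long run varaince estimator}) the three contributions are $\mathcal{O}(NW'^2/n)$, negligible, and $\tau^2 + \mathcal{O}_\mathbb{P}(1/W' \vee W'/\sqrt{n})$ respectively, which combine to give the claimed rate.

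For the signal term, a telescoping identity $\bar{F}_{t,W'} = Q_k(tW') - Q_k((t-1)W')$ with $Q_k$ an antidifference of $f_\circ$ on the interval between consecutive change points shows that $j \mapsto \bar{F}_{t-p-1+j,W'}$ is a polynomial of degree at most $p$ in $j$ on every window of $p+2$ consecutive blocks that contains no change point, and hence is annihilated by the $(p+1)$-th difference. Consequently $\bar{X}_{t,W'}^F = 0$ except on at most $(p+2)N$ indices $t$, on which boundedness of $f_\circ$ gives $|\bar{X}_{t,W'}^F| = \mathcal{O}(W')$. Summing yields $\sum_t (\bar{X}_{t,W'}^F)^2 = \mathcal{O}(NW'^2)$, contributing $\mathcal{O}(NW'^2/n)$ after division by $m a_0 W' \sim n$, where $a_0 = \sum_j \binom{p+1}{j}^2$ and $m = \lfloor n/W' \rfloor - (p+1)$. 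The cross term is dispatched by Cauchy-Schwarz applied to the $(p+2)N$ bad indices together with $\mathbb{E}[\sum_{t \text{ bad}} (\bar{X}_{t,W'}^Z)^2] = \mathcal{O}(NW')$, yielding $\mathcal{O}_\mathbb{P}(NW'^{3/2}/n)$ which is absorbed.

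For the noise term I first compute the bias. Under $\sum_h h|\gamma_h|<\infty$ a standard Abel-summation expansion gives $\text{Var}(\bar{Z}_{t,W'}) = W'\tau^2 + \mathcal{O}(1)$ and $|\text{Cov}(\bar{Z}_{t,W'}, \bar{Z}_{t+k,W'})| = \mathcal{O}(1)$ uniformly in $k \neq 0$; combined with the annihilation identity $\sum_j (-1)^j \binom{p+1}{j} = 0$ that governs the cross-covariance contributions at leading order, this produces $\mathbb{E}[(\bar{X}_{t,W'}^Z)^2] = a_0 W'\tau^2 + \mathcal{O}(1)$ uniformly in $t$, whose normalised bias is $\mathcal{O}(1/W')$. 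For the concentration, expand $(\bar{X}_{t,W'}^Z)^2 = \sum_{s,s'} c_{s,t} c_{s',t} \zeta_s \zeta_{s'}$ with $c_{s,t}$ a signed binomial coefficient supported on the $(p+2)$-block window defining $\bar{X}_{t,W'}$, and regroup by lag to write $\sum_t (\bar{X}_{t,W'}^Z)^2 = \sum_{|h| \leq (p+2)W'} \sum_s d_{s,h} \zeta_s \zeta_{s+h}$ where $d_{s,h} = \sum_t c_{s,t} c_{s+h,t}$ is uniformly of order $\mathcal{O}(1)$ and takes only finitely many values as $s$ varies. For each fixed $h$, splitting the index set $\{s\}$ into subsets of constant $d_{s,h}$ and invoking Assumption \ref{assumtion: ACVF} on each subset gives $\sum_s d_{s,h} \zeta_s \zeta_{s+h} = (\sum_s d_{s,h}) \gamma_h + \mathcal{O}_\mathbb{P}(\sqrt{n})$; a variance-based aggregation over the $\mathcal{O}(W')$ admissible lags then produces a total deviation of $\mathcal{O}_\mathbb{P}(W'\sqrt{n})$, contributing $\mathcal{O}_\mathbb{P}(W'/\sqrt{n})$ after normalisation.

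The main obstacle is the concentration step. A naive union bound over the $\mathcal{O}(W')$ lags would inflate the rate, so the accumulation of per-lag errors must be controlled via a variance computation rather than in probability, and the bookkeeping of the weights $d_{s,h}$ must be tight enough that Assumption \ref{assumtion: ACVF} applies cleanly, finitely often, per lag. Care is also needed to check that the bias expansion is uniform in $t$ without hidden $\log n$ factors, which is delivered by the stronger $\sum_h h|\gamma_h| < \infty$ moment condition rather than merely by summability of the autocovariances.
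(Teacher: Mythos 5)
Your proposal is correct and follows essentially the same route as the paper's proof: the same four-way decomposition into the pure-signal quadratic form, the cross term, the deviation of the noise quadratic form from its mean, and the bias of that mean, with the identical rates $NW'^2/n$, (absorbed), $W'/\sqrt{n}$, and $1/W'$ obtained from the same ingredients (annihilation of block sums of a degree-$p$ polynomial by $(p+1)$-th differencing, Assumption \ref{assumtion: ACVF} applied lag by lag, and $\sum_h h|\gamma_h|<\infty$ for the bias). The only differences are organisational --- you regroup the noise quadratic form by lag over the whole sample where the paper works block by block, and you spell out the signal-annihilation step that the paper asserts without detail --- neither of which changes the argument.
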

Lemma \ref{lemma: consistency of LRV estimator} shows that if, for example, $W'$ is chosen to be of the order $W' = \mathcal{O} \left ( n ^ \theta \right )$ for some $\theta < 1/2$ then (\ref{equation: long run varaince estimator}) satisfies the condition in part (ii) of Theorem \ref{theorem: tightness of non-gaussian dependent maximum} as long as the number of change points grows more slowly than $n^{1-2\theta} \log^{-1} \left ( n / W \right )$. In practice we follow \cite{wu2007inference} in setting $W' = n ^ {1/3}$. 

\subsection{Consistency of the algorithm}

We now investigate the conditions under which algorithm Algorithm \ref{algorithm: binary segmentation over grids} is consistent, in the sense that with high probability it is able to detect all change points and returns no spurious intervals. It is useful to parameterise the signal in model (\ref{equation: signal + noise}) between change point locations as follows:
\begin{equation}
f_\circ \left ( t / n \right ) = 
\begin{cases}
\sum_{j=0}^p \alpha_{j,k} \left ( t/n - \eta_k / n \right )^j & \text{ if } \eta_{k-1} < t \leq \eta_k \\
\sum_{j=0}^p \beta_{j,k} \left ( t/n - \eta_k / n \right )^j & \text{ if } \eta_{k} < t \leq \eta_{k+1} 
\end{cases}
\hspace{2em} k = 1, \dots, N.
\label{equation: signal between change}
\end{equation}

Therefore, the absolute change in the $j$-th derivative of $f_\circ(\cdot)$ at the $k$-th change point location can be written as $\Delta_{j,k} = \left | \alpha_{j,k} - \beta_{j,k} \right |$. Putting $\eta_0 = 0$ and $\eta_{N+1} = n$ we write $\delta_k = \min \left ( \eta_{k} - \eta_{k-1}, \eta_{k+1} - \eta_{k} \right )$ for the effective sample size associated with the $k$-th change location. The most prominent change in derivative at each change point location can therefore be defined as follows:

\begin{equation}
p^*_k \in \argmax_{0 \leq j \leq p} \left \{ \Delta_{j,k} \left ( \frac{\delta_k}{n} \right ) ^ j \right \} \hspace{2em} k = 1, \dots, N.
\label{equation: most prominent change}
\end{equation}

In order to show the consistency of Algorithm \ref{algorithm: binary segmentation over grids} we impose two restriction on the signal. The first states that the changes in derivative at each change point location are bounded. The second states that although multiple changes in the derivatives of $f_\circ \left ( \cdot \right )$ can occur at each change point location, there is always one dominating change. This excludes the possibility of signal cancellation occurring. 

\begin{assumption}
There is a constant $C_\Delta > 0$ such that $\left | \Delta_{jk} \right | < C_\Delta$ for each $j,k$. 
\label{assumption: bounded changes}
\end{assumption}

\begin{assumption}
For each $k = 1, \dots, N$ the quantity $p^*_k$ is uniquely defined, and for any sequence $\left ( \rho_{k,n} \right )_{n \geq 1}$ with the property $\rho_{k,n} \leq \delta_k / n$ for all $n \geq 1$ it holds that $\left | \Delta_{j,k} \right | \rho_{k,n}^j \leq C_{p^*_k} \left | \Delta_{p^*_k,k} \right | \rho_{k,n}^{p^*_k} $ for all $j \neq p_k^*$, where $C_{p^*_k} = \frac{1}{2^{p_k^*+2}(p^*+1)p}$. 
\label{assumption: One prominent jump}
\end{assumption}

For example, Assumption \ref{assumption: One prominent jump} would be violated by the piecewise linear signal shown in (\ref{equation: cancelation example}) for which $n = 8$ and $\eta = 4$, and the scaled difference in slopes between the first four entries and the last four had the same magnitude but the opposite sign to the corresponding difference in levels. That is: $\Delta_0 = \Delta_1 \left ( \delta / n \right )$.  
\begin{equation}
\mathbf{f} = \left ( -7/8, -6/8, -5/8, -4/8, 3/8, 2/8, 1/8, 0 \right ) ' 
\label{equation: cancelation example}
\end{equation}
In practice, in situations when Assumption \ref{assumption: One prominent jump} is violated our procedure is still able to detect the corresponding change point. This is because although signal cancellation such as in (\ref{equation: cancelation example}) may occur on a particular interval considered by Algorithm \ref{algorithm: binary segmentation over grids}, it is unlikely to occur on every interval considered. In the above example, if we were to look at the sub-vector $\left ( -5/8, -4/8, 3/8, 2/8 \right )'$ no cancellation would occur. See also Remark \ref{remark: assumption 3} in the Proofs section, where we show how the assumption can be relaxed for piecewise linear functions, and show good practical performance via simulation on higher order piecewise polynomials which violate the assumption. With these assumptions in place we have the following result.

\begin{theorem}
Let $\hat{I}_1, \dots, \hat{I}_{\hat{N}}$ be intervals returned by Algorithm \ref{algorithm: binary segmentation over grids} run on data $\boldsymbol{Y} = \left ( Y_1, \dots, Y_n \right )'$ from model (\ref{equation: signal + noise}), with parameters $a>1$, $W$, and $\alpha \in (0,1)$. Grant Assumptions \ref{assumption: bounded changes}-\ref{assumption: One prominent jump} and either of Assumptions \ref{assumptioin: logarithmic segment length}-\ref{assumption: gaussian noise} or \ref{assumption: positive long run varaince}-\ref{assumption: minimum segment} hold, and let the threshold $\lambda_\alpha$ chosen according to (\ref{equation: Gaussian FWE control lambda}) or (\ref{equation: generic FWE control lambda}) accordingly. If the the effective sample size at each change point location satisfies
\begin{equation}
\delta_k > C_1 \left ( W \vee n^{\frac{2p^*_k}{2p^*_k + 1}} \left ( \frac{\tau^2 \log (n)}{\Delta_{p^*_k,k}^2} \right ) ^ {\frac{1}{2p^*_k + 1}} \right ) \hspace{2em} k = 1, \dots, N
\label{equation: spacing condition}
\end{equation}
then on a set with probability $a - \alpha + o(1)$ the following events occur simultaneously:
\begin{gather*}
E^*_3 = \left \{ \hat{N} = N \right \} \\ 
E^*_4 = \left \{ \hat{I}_k \cap \Theta = \left \{ \eta_k \right \} \mid k = 1, \dots, N \right \} \\
E^*_5 = \left \{ \left | \hat{I}_k \right | \leq C_2 \left ( W \vee n^{\frac{2p^*_k}{2p^*_k + 1}} \left ( \frac{\tau^2 \log (n)}{\Delta_{p^*_k,k}^2} \right ) ^ {\frac{1}{2p^*_k + 1}} \right ) \big \vert k =  1, \dots, N \right \}. 
\end{gather*}
Here $C_1$ and $C_2$ depend only on $\alpha$, $a$ and $p$.
\label{theorem: optimality of change point detection}
\end{theorem}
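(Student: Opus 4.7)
I would prove the result by conditioning on the family-wise error event $A$ from Corollary \ref{corollary: coverage guarantee}, on which (i) no local test over $\mathcal{G}(W,a)$ rejects on an interval containing no change point, and (ii) $\max_{(l,w)\in\mathcal{G}(W,a)} |D_{l,w}^p(\boldsymbol{\zeta})| \leq \tau \lambda_\alpha$ where $\boldsymbol{\zeta}$ is the noise vector. Event $A$ has asymptotic probability at least $1-\alpha$ by Theorem \ref{theorem: Gaussian tightness of normalized maximum} or Theorem \ref{theorem: tightness of non-gaussian dependent maximum}, together with the consistency hypothesis on $\widehat{\tau}$. On $A$, Corollary \ref{corollary: coverage guarantee} already gives $E_1^*$ and $E_2^*$; for the remaining events it suffices to show that for every $\eta_k$ there exists $(l_k,w_k)\in\mathcal{G}(W,a)$, with $w_k$ at most a constant multiple of the right-hand side of (\ref{equation: spacing condition}), whose associated segment contains $\eta_k$ as its only change point and satisfies $|\mathbb{E} D_{l_k,w_k}^p(\boldsymbol{Y})| > 2\tau\lambda_\alpha$. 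On $A$, linearity of $D_{l,w}^p$ in $\boldsymbol{Y}$ then gives $|D_{l_k,w_k}^p(\boldsymbol{Y})| \geq |\mathbb{E} D_{l_k,w_k}^p(\boldsymbol{Y})| - \tau\lambda_\alpha > \tau\lambda_\alpha$, and the $o_\mathbb{P}(1)$ error between $\widehat{\tau}$ and $\tau$ absorbs into the constants.

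\textbf{Signal lower bound.} The technical core is a lower bound on $|\mathbb{E} D_{l,w}^p(\boldsymbol{Y})|$ on a segment containing exactly one change point $\eta_k$, positioned in the interior of the partition into $p+2$ chunks. Writing the signal locally as degree $p$ polynomials on either side of $\eta_k$, within any chunk not straddling $\eta_k$ the expected local sum $\mathbb{E}\bar{Y}^j_{l,w}$ is, as a function of $j$, a polynomial of degree at most $p$ in $j$. Since $(p+1)$-th differences annihilate such sequences, the only surviving contribution comes from the chunk containing $\eta_k$. Taylor expanding the right-side polynomial about $\eta_k$ inside this chunk and collecting terms expresses the residual as $\sum_{j=0}^p \Delta_{j,k}\, c_j(w,\eta_k)$, with $c_j(w,\eta_k)$ of order $(w/n)^j\lfloor w/(p+2)\rfloor$ times a bounded combinatorial factor depending on the position of $\eta_k$ within its chunk. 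The quantitative form of Assumption \ref{assumption: One prominent jump}, with the constant $C_{p_k^*}$, forces the $p_k^*$-th term to dominate the sum; combining with the variance normalisation $\{\lfloor w/(p+2)\rfloor \sum_i \binom{p+1}{i}^2\}^{-1/2}$ gives $|\mathbb{E} D_{l,w}^p(\boldsymbol{Y})| \geq c\, |\Delta_{p_k^*,k}|(w/n)^{p_k^*}\sqrt{w}$ for a positive constant $c$ depending only on $a$ and $p$.

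\textbf{Choosing $(l_k,w_k)$ on the grid.} Put $w_k^\star = W \vee n^{2p_k^*/(2p_k^*+1)}(\tau^2\log n/\Delta_{p_k^*,k}^2)^{1/(2p_k^*+1)}$. By the $a$-adic structure of $\mathcal{W}(W,a)$, there is some $w_k \in \mathcal{W}(W,a)$ with $w_k \in [w_k^\star, a w_k^\star]$, provided $w_k^\star \leq n/2$, which follows from (\ref{equation: spacing condition}) together with Assumption \ref{assumption: bounded changes}. The spacing condition $\delta_k > C_1 w_k^\star$, for $C_1$ large relative to $a$ and $p$, allows one to choose $l_k$ such that the segment $\{l_k,\ldots,l_k+w_k-1\}$ contains $\eta_k$ at distance at least $w_k/(2(p+2))$ from every chunk boundary and contains no other element of $\Theta$. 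Substituting into the signal lower bound then yields $|\mathbb{E} D_{l_k,w_k}^p(\boldsymbol{Y})| \geq c'\, |\Delta_{p_k^*,k}|(w_k^\star/n)^{p_k^*}\sqrt{w_k^\star} \geq c''\, \tau\sqrt{\log n} \geq 2\tau\lambda_\alpha$ once $C_1$ is chosen large enough, since $\lambda_\alpha = \sqrt{2\log n}(1+o(1))$ in either (\ref{equation: Gaussian FWE control lambda}) or (\ref{equation: generic FWE control lambda}).

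\textbf{Conclusion via the algorithm.} Algorithm \ref{algorithm: binary segmentation over grids} scans $\mathcal{G}_{s,e}(W,a)$ from smallest to largest scale, records the first rejecting pair, and recurses on the two segments to its left and right. On $A$, every recorded interval contains a change point, and by the construction above each recursive call whose active segment contains some $\eta_k$ possesses a rejecting pair of width at most $w_k$ in the grid. An induction on the number of change points in the active segment, exploiting the pairwise disjointness of the detection segments $\{l_k,\ldots,l_k+w_k-1\}$ (which sit in a $\delta_k$-window around each $\eta_k$ that is narrower than half the spacing), shows that exactly one interval is recorded per change point and its width is at most $w_k$. This gives $E_3^*$, $E_4^*$, and $E_5^*$. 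The main obstacle is the signal lower bound of the second paragraph: it requires careful combinatorial bookkeeping of which chunk $\eta_k$ falls into together with the quantitative use of Assumption \ref{assumption: One prominent jump} to preclude cancellation between jumps at different derivative orders. A secondary subtlety is that the expected signal degenerates when $\eta_k$ sits too close to a chunk boundary, which is handled by the freedom in choosing $l_k$ afforded by the spacing condition.
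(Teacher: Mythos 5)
Your overall architecture matches the paper's: condition on the family-wise error event, establish a deterministic lower bound on the expected statistic over a well-chosen interval containing a single change point, use the $a$-adic structure to find such an interval in the grid at the right scale, and conclude via non-overlap of the detection intervals and the greedy recursion. However, there is a genuine gap in the central step, the signal lower bound, and it stems from both the annihilation argument and the choice of where to place $\eta_k$ within the test interval.

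First, the claim that ``since $(p+1)$-th differences annihilate degree-$p$ polynomial sequences, the only surviving contribution comes from the chunk containing $\eta_k$'' does not follow. The sequence $j \mapsto \mathbb{E}\bar{Y}^j_{l,w}$ is only \emph{piecewise} polynomial in $j$: it follows one degree-$p$ polynomial $Q_L(j)$ for chunks left of $\eta_k$ and a different one $Q_R(j)$ for chunks to the right. The operator $\sum_j (-1)^{p+1-j}\binom{p+1}{j}$ annihilates a single global polynomial, not the concatenation of two; after subtracting, say, $Q_R$ extended to all $j$, every chunk at or to the left of the straddling one contributes, and these contributions can cancel the straddling chunk's. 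Second, your prescription that $\eta_k$ sit at distance at least $w_k/(2(p+2))$ from every chunk boundary forces it to the midpoint of its chunk, which is exactly where such cancellation occurs. Concretely, take $p=1$, a pure level shift $\Delta_{0,k}\neq 0$, $\Delta_{1,k}=0$ (so $p^*_k=0$ and Assumption \ref{assumption: One prominent jump} holds trivially), chunk length $c=\lfloor w/3\rfloor$, and $\eta_k$ at position $m$ inside the middle chunk: the residual chunk sums are $R_0=\Delta_{0,k}c$, $R_1=\Delta_{0,k}m$, $R_2=0$, so $R_0-2R_1+R_2=\Delta_{0,k}(c-2m)$, which vanishes at $m=c/2$. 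Your lower bound $|\mathbb{E}D^p_{l_k,w_k}(\boldsymbol{Y})|\geq c\,|\Delta_{p^*_k,k}|(w/n)^{p^*_k}\sqrt{w}$ therefore fails for your choice of $(l_k,w_k)$. The paper's Lemma \ref{lemma: change point detection on an interval} avoids this by choosing $l$ so that $\eta_k$ falls exactly at a chunk boundary: after exploiting the invariance of $D^p_{l,w}$ to the addition of a global degree-$p$ polynomial, the residual signal is supported on a single chunk, and the lower bound follows from Faulhaber's formula (giving $\frac{1}{\delta}\sum_{t=1}^{\delta}(1-t/\delta)^{p^*}\geq 2^{-(p^*+1)}/(p^*+1)$) together with the quantitative no-cancellation constant $C_{p^*_k}$ of Assumption \ref{assumption: One prominent jump}. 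Your proof can be repaired by adopting that positioning, but as written the construction breaks on admissible signals.
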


Theorem \ref{theorem: optimality of change point detection} states that on a set with probability asymptotically larger than $1 - \alpha$, where $\alpha$ can be tuned by the user, the number of intervals returned by Algorithm \ref{algorithm: binary segmentation over grids} coincides with the true number of change points (event $E_3^*$), and every interval returned contains exactly one change point (event $E_4^*$). Event $E_5^*$ provides bounds on the widths of intervals returned, which in turn implies a bound on the localisation rate of any change point estimator which lies within a given interval returned by the algorithm. 

Theorem \ref{theorem: optimality of change point detection} leads to the following large sample consistency result. 

\begin{corollary}
Let $\hat{I}_1, \dots, \hat{I}_{\hat{N}}$ be intervals returned by Algorithm \ref{algorithm: binary segmentation over grids} under the same conditions as Theorem \ref{theorem: optimality of change point detection} but with threshold $\lambda = \left ( 1 + \varepsilon \right ) a_{W,n}$ for some fixed $\varepsilon > 0$, where $a_{W,n}$ is as defined in Theorem \ref{theorem: tightness of non-gaussian dependent maximum}. Then on a set with probability $1 - o(1)$ the events $E_1^*$, $E_2^*$, and $E_3^*$ occur simultaneously. 
\label{corollary: large sample consistency}
\end{corollary}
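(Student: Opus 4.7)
The strategy is to repeat the two halves of the proof of Theorem \ref{theorem: optimality of change point detection}, while tracking how replacing the asymptotically sharp threshold $\lambda_\alpha$ by the inflated threshold $\lambda = (1+\varepsilon)\mathfrak{a}_{n,W}$ upgrades the coverage from $1 - \alpha + o(1)$ to $1 - o(1)$. One half rules out false detections, the other preserves detection of every change point; together they give $E_1^*$, $E_2^*$ and $E_3^*$.

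First I would control false detections. Suppose a recursive call of Algorithm \ref{algorithm: binary segmentation over grids} inspects a sub-interval $[s,e]$ lying between two consecutive elements of $\Theta \cup \{1,n\}$. On such a sub-interval $f_\circ(\cdot)$ is a single polynomial of degree $p$, so $(p+1)$-th order differencing annihilates it and $D^p_{l,w}(\boldsymbol{Y}) = D^p_{l,w}(\boldsymbol{\zeta})$ for every $(l,w) \in \mathcal{G}_{s,e}(W,a)$. By part (i) of Theorem \ref{theorem: tightness of non-gaussian dependent maximum} the sequence $\mathfrak{a}_{n,W} M^\tau_{\mathcal{G}(W,a)}(\boldsymbol{\zeta}) - \mathfrak{b}_{n,W}$ is tight; combined with $\mathfrak{b}_{n,W}/\mathfrak{a}_{n,W} = \mathfrak{a}_{n,W} + o(1)$ this forces $M^\tau_{\mathcal{G}(W,a)}(\boldsymbol{\zeta})/\mathfrak{a}_{n,W} \to 1$ in probability, and therefore
\begin{equation*}
\mathbb{P}\bigl( M^\tau_{\mathcal{G}(W,a)}(\boldsymbol{\zeta}) > (1+\varepsilon)\mathfrak{a}_{n,W} \bigr) \longrightarrow 0.
\end{equation*}
Part (ii) of the same theorem lets $\tau$ be replaced by a consistent estimator $\widehat{\tau}$, and the same conclusion carries over from $M^\tau$ to the two-sided maximum $L^\tau_{\mathcal{G}(W,a)}$ actually used by Algorithm \ref{algorithm: binary segmentation over grids} since the two are of the same asymptotic order. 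A union bound over the at most $N+1$ change-point-free sub-intervals generated during the recursion then shows that with probability $1 - o(1)$ no test applied on a change-point-free sub-interval rejects. This yields $E_2^*$ directly, and since the intervals returned are disjoint and each must then contain an element of $\Theta$, event $E_1^*$ follows.

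Second I would reuse the detection part of the proof of Theorem \ref{theorem: optimality of change point detection} essentially verbatim. That argument exhibits, for each $\eta_k$, an $(l,w) \in \mathcal{G}(W,a)$ concentrated around $\eta_k$ whose local test statistic dominates $\widehat{\tau} \lambda_\alpha$ with probability $1 - o(1)$, provided the spacing condition (\ref{equation: spacing condition}) holds. Because $\lambda_\alpha/\mathfrak{a}_{n,W} \to 1$ while $\lambda/\mathfrak{a}_{n,W} = 1 + \varepsilon$, the inflated threshold $\lambda$ is at most a constant multiple of $\lambda_\alpha$, so the same detection argument goes through after absorbing the factor $1 + \varepsilon$ into the constant $C_1$ of (\ref{equation: spacing condition}). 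Since (\ref{equation: spacing condition}) is assumed, every change point is detected with probability $1 - o(1)$; combining with the first step delivers $\widehat{N} = N$, hence $E_3^*$.

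The only delicate step is the first one, and the key observation is that the inflated threshold exceeds the asymptotic centring $\mathfrak{b}_{n,W}/\mathfrak{a}_{n,W}$ by an unbounded amount $\varepsilon \mathfrak{a}_{n,W}$, which suffices to push the tail probability of the standardised maximum to zero; once this is noted, the rest of the argument reduces to a union bound together with the constants bookkeeping already carried out in Theorem \ref{theorem: optimality of change point detection}.
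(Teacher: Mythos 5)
Your proposal is correct and follows the argument the paper evidently intends (the corollary's proof is omitted in the paper): tightness of $\mathfrak{a}_{n,W} M^\tau_{\mathcal{G}(W,a)} - \mathfrak{b}_{n,W}$ together with $\mathfrak{b}_{n,W}/\mathfrak{a}_{n,W} = \mathfrak{a}_{n,W} + o(1)$ forces the inflated threshold $(1+\varepsilon)\mathfrak{a}_{n,W}$ to be exceeded under the null with probability $o(1)$, while the detection half of Theorem \ref{theorem: optimality of change point detection} survives after absorbing $(1+\varepsilon)$ into $C_1$. One cosmetic remark: the union bound over the (random) change-point-free sub-intervals generated by the recursion is unnecessary and slightly awkward --- the single global event $\left \{ L^{\widehat{\tau}}_{\mathcal{G}(W,a)}\left ( \boldsymbol{\zeta} \right ) \leq (1+\varepsilon)\mathfrak{a}_{n,W} \right \}$ already rules out every rejection on every change-point-free $(l,w)$ in the grid, exactly as in the paper's proof of Theorem \ref{theorem: optimality of change point detection}.
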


An important consequence of Theorem \ref{theorem: optimality of change point detection} and Corollary \ref{corollary: large sample consistency} is that any point-wise estimator $\hat{\eta}_k$ for the $k$-th change point location which lies in an interval $\hat{I}_k$ will inherit the localisation rate implied by event $E^*_5$. As explained in Section \ref{section: Optimality of the algorithm} this rate is unimprovable in a minimax sense. This extends to the naive estimator formed by setting $\hat{\eta}_k$ to the midpoint of the interval $\hat{I}_k$. However, more sophisticated estimators can be used; for example one may choose $\hat{\eta}_k$ to be the split point which results in the lowest sum of squared residuals when a piecewise polynomial function is fit over $\hat{I}_k$ (see for example Figure \ref{figure: piecewise polynomials examples}).

\subsection{Optimality of the algorithm} \label{section: Optimality of the algorithm}

In \cite{yu2022localising, yu2020review} it was shown that, under independent sub-Gaussian noise with Orlicz-$\psi_2$ norm bounded from above by some $\omega^2$, the mini-max localisation rate for each change point in the generic piecewsie polynomial model is of the order
\begin{equation}
\mathcal{O} \left ( n^{\frac{2p^*_k}{2p^*_k + 1}} \left ( \frac{\omega^2}{\Delta_{p^*_k,k}^2} \right ) ^ {\frac{1}{2p^*_k + 1}} \right ), \hspace{2em} k = 1, \dots, N.
\label{equation: minimax localisation rate}
\end{equation}
Examining the proof of Lemma 2 in \cite{yu2022localising} one can see that the same rate holds for weakly dependent noise by replacing the sub-Gaussian parameter $\omega^2$ with the long run variance $\tau^2$. Therefore, under Assumptions \ref{assumption: gaussian noise}- \ref{assumptioin: logarithmic segment length} where $W$ is of the order $\mathcal{O} \left ( \log (n) \right )$, the bounds guaranteed by $E_5^*$ can be seen to be optimal up to to log terms. That is, the width of each interval returned matches (up to log terms) the best possible rate at which the corresponding change point can be localised. Under Assumptions \ref{assumption: positive long run varaince}-\ref{assumption: minimum segment}, where $W$ grows slightly faster than $n^{2/(2+\nu)}$, the bounds provided by event $E_5^*$ are again optimal as long as $\nu > 1$ and the most prominent change occurs in derivatives of order $1$ or higher. However, whenever $p^*_k = 0$ comparing to (\ref{equation: minimax localisation rate}) it is clear the bounds are no longer optimal. 

The aforementioned lack of optimality is due to Assumption \ref{assumption: minimum segment}, which requires the minimum support of our detection statistic to be relatively larger. This is needed in order that a strong approximation result may be invoked for a range of noise distributions. However, the requirement that $W$ grows at a polynomial rate with $n$ can be overly conservative. For example, if the noise terms are independently distributed with finite moment generating function in a neighbourhood of zero, which is the setting studied by \cite{yu2022localising, yu2020review}, then Theorem 1 in \cite{komlos1975approximation} states that after enlarging the probability space
\begin{equation*}
\sum_{t=1}^n \zeta_t - \tau B (n) = \mathcal{O} \left ( \log (n) \right ), \hspace{2em} \mathbb{P}\text{-almost surely}.
\end{equation*}
Consequently, in this setting the results of Theorem \ref{theorem: optimality of change point detection} continue to hold with $W$ of the order $o \left ( \log^3 (n) \right )$. In which case, setting $\lambda_\alpha$ accordingly, the bound provided by event $E^*_5$ again results optimal up to the log factors.

The width of the $k$-th interval depends (up to constants) only on the order of the derivative at which the most prominent change occurs, and not on the overall polynomial degree of the signal. This shows the intervals adapt locally to the smoothness of the signal. Interestingly the rate $\mathcal{O} \left ( n^{2p^*/(2p^*+1)} \right )$ is the same as the mini-max bound on the sup-norm risk for $p^*$-smooth Holder regression functions \cite[Theorem 2.10]{tsybakov2004introduction}. The error probability $\alpha$ does not appear explicitly in Theorem \ref{theorem: optimality of change point detection} as it is absorbed into the constants $C_1$ and $C_2$. Indeed for different but fixed choices of $\alpha$ all thresholds constructed according to the rules discussed in Sections \ref{section: Family-wise error control under Gaussianity} and \ref{section: extension to dependent and non-Gaussian noise} will be asymptotically equivalent. However in finite samples there is a clear price to pay for requesting higher coverage since as $\alpha \downarrow 0$ we have that $-2 \log ^{-1} \left ( 1 - \alpha \right ) \sim 2 / \alpha$. 

Finally, the effect of the degree of serial dependence on the lengths of the intervals is explicit, as the long run variance of the noise appears in the upper bound on the interval lengths. The nature of this dependence is similar to that found by \cite{enikeeva2020bump} in the simpler problem of detecting a bump in the mean function of a stationary Gaussian process. 

\subsection{On the polynomial order of the signal}

We emphasise that in the problem statement $p$ refers to the maximum polynomial order of the signal on any stationary segment, and that the polynomial order of the signal is permitted to vary between segments. If $p$ is unknown, it should be considered as an input to our algorithm. However, provided this input is chosen large than or equal to the maximum polynomial order, it only affects the output in terms of constants and not rates. Of course, in a finite sample there is a price to pay: choosing $p$ larger leads to longer intervals through inflating the constant $C_2$, and changes the change point detection condition through inflating the constant $C_1$.

We observe that in applications analysts usually have in mind a reasonable idea of $p$ motivated by knowledge of the problem at hand. However, it may be unreasonable to assume that the maximum polynomial order is known exactly. Therefore, we present two methods for determining $p$ from data given upper and lower bounds $\underline{p}$ and $\overline{p}$ such that $p \in \left \{ \underline{p}, \dots, \overline{p} \right \}$. The methods are designed for the setup in Sections \ref{section: Family-wise error control under Gaussianity} and \ref{section: extension to dependent and non-Gaussian noise} respectively. 

\subsubsection{Estimating $p$ via the  strengthened Schwarz Information Criterion}

\cite{fryzlewicz2014wild} introduced the  strengthened Schwarz Information Criterion (sSIC) for consistently estimating the number of change points in the canonical change point model for which the signal is piecewise constant and the contaminating noise is independently distributed and Gaussian. The same approach can be extended to estimating $p$ in the piecewise polynomial model. 

Given data $\boldsymbol{Y} = \left ( Y_1, \dots, Y_n \right )'$ from model (\ref{equation: signal + noise}) and some $p' \in \left \{ \underline{p}, \dots, \overline{p} \right \}$ let $\hat{I}_1, \dots, \hat{I}_{\hat{N}_{p'}}$ be the output of Algorithm \ref{algorithm: binary segmentation over grids} under the assumption that the maximum polynomial degree is $p'$, run with threshold $\lambda = (1+\varepsilon) \mathfrak{a}_{W,n}$ for some fixed $\varepsilon > 0$. Let $\hat{\eta}_1, \dots, \hat{\eta}_{\hat{N}_{p'}}$ be the split points within each interval associated with the piecewsie polynomial fit providing the lowest sum of squared residuals and let $\hat{f}_{p'} \left ( \cdot \right )$ be the function estimated via least squares between these knots. Following Section 3.4 in \cite{fryzlewicz2014wild} for some arbitrary but fixed $\alpha > 1$ the sCIC at $p'$ is defined as
\begin{equation*}
\text{sSIC} \left ( p' \right ) = \frac{n}{2} \log \left ( \hat{\sigma}^2_{p'} \right ) + ( \hat{N}_{p'} + 1 ) \left ( p' + 1 \right ) \log^\alpha \left ( n \right ), 
\end{equation*}
where in particular 
\begin{equation*}
\hat{\sigma}^2_{p'} = \frac{1}{n} \sum_{t=1}^n \left ( Y_t - \hat{f}_{p'} (t/n) \right )^2.
\end{equation*}
Then, the maximum polynomial degree of the signal can be estimated as
\begin{equation}
\hat{p} = \argmin_{\underline{p} \leq p' \leq \overline{p}} \text{sSIC} \left ( p' \right ). 
\label{equation: sSIC estimator}
\end{equation}
Regarding the large sample consistency of $\hat{p}$, we have the following result. 

\begin{lemma}
Let $\hat{p}$ be the estimator defined in (\ref{equation: sSIC estimator}). Grant Assumptions \ref{assumptioin: logarithmic segment length} and \ref{assumption: gaussian noise} as well as condition (\ref{equation: spacing condition}) hold, and moreover assume moreover that: (i) $\underline{p} \leq p \leq \overline{p}$ and $(\overline{p} - \underline{p}) = \mathcal{O} (1)$, (ii) $N = \mathcal{O}(1)$, and (iii) the coefficients in (\ref{equation: signal between change}) are all of the order $\mathcal{O}(1)$. Then $\mathbb{P} \left ( \hat{p} = p \right ) \rightarrow 1$ as $n \rightarrow \infty$. 
\label{lemma: sSIC consistency}
\end{lemma}

\subsubsection{Estimating $p$ via recursive testing on null intervals}

The finite difference functional which has so far been used to test for the presence of a change point can itself be used to estimate the maximum degree of the signal. For some $p' \in \left \{ \underline{p}, \dots, \overline{p} \right \}$ let $K$ be a contiguous subset of $\left \{ 1, \dots, n \right \}$ for which $\left | K \right |$ is a multiple of $(p'+2)$. Therefore, introduce the statistic
\begin{equation}
D_{K}^{p'} \left ( \boldsymbol{Y} \right ) = \left \{ \left \lfloor \frac{ \left | K \right |}{p'+2} \right \rfloor \sum_{i=0}^{p'+1} \binom{p'+1}{i}^2 \right \}^{-1/2} \sum_{j=0}^{p'+1} \left ( - 1 \right ) ^ {p'+1-j} \binom{p'+1}{j} \bar{Y}^j_{K}
\label{equation: K local statistic}
\end{equation}
where in particular letting $K$ have elements $\left \{ k_1, \dots, k_{\left | K \right |} \right \}$ we write
\begin{equation*}
\bar{Y}_K^j = Y_{k_1 + j \frac{\left | K \right |}{p'+2}} + \dots + Y_{ (j+k_1) \frac{\left | K \right |}{p'+2}}, \hspace{2em} j = 0, \dots, p'+1
\end{equation*}
for non-overlapping sums of the data over the $(p'+2)$ equally sized contiguous partitions of $K$. Note that if $K$ corresponds to a stretch of data which contains no change points and $p' < p$ then (\ref{equation: K local statistic}) will be large in (absolute) expectation, whereas if $p' \geq p$ then (\ref{equation: K local statistic}) will be small. 

Using the above intuition, to estimate $p$ we first run Algorithm \ref{algorithm: binary segmentation over grids} with threshold $\lambda = (1 + \varepsilon) \mathfrak{a}_{W,n}$ for some small but fixed $\varepsilon > 0$ assuming the maximum polynomial order of the signal is $\overline{p}$. We then obtain sets $\widehat{\mathbb{K}} = \{ \hat{K}_1, \hat{K}_2, \dots \}$ by retaining indices \textit{between} each interval returned, and trimming either the first or last few indices so that the number of elements in each $\hat{K}$ is a multiple of $(\overline{p}+1)$. Note that since $\overline{p} \geq p$ by Corollary \ref{corollary: large sample consistency} with high probability each $\hat{K}$ corresponds to a stretch of data which contains no change points. Finally we test whether $ | D_{\hat{K}}^{\overline{p}-1} \left ( \boldsymbol{Y} \right ) | > (1 + \varepsilon) \mathfrak{a}_{W,n}$ for each $\hat{K}$. If any such test is not passed we conclude that $p = \overline{p}$. Else, we repeat the procedure with $\overline{p} -1$. The procedure automatically ends once $\underline{p}$ is reached, since we assume $p \geq \underline{p}$, and by this point we have concluded that $p < p'$ for all $p' > \underline{p}$. The procedure is sumarized in Algorithm \ref{algorithm: p estimation}. Regarding the large sample consistency of the output of Algorithm \ref{algorithm: p estimation} we have the following result.  

\begin{algorithm}[!htbp]
\SetKwFunction{maxDegreeEstimation}{maxDegreeEstimation}
\SetKwProg{Fn}{function}{:}{\KwRet}

\Fn{\maxDegreeEstimation{$\boldsymbol{Y}, \overline{p}, \underline{p}$}}{
$p' \leftarrow \overline{p}$ \\
Detection $\leftarrow$ \texttt{False} \\

\While{$p' > \underline{p}$}{
Obtain intervals $\hat{\mathbb{K}} = \{ \hat{K}_1, \hat{K}_2, \dots \}$ from Algorithm \ref{algorithm: binary segmentation over grids} 
using \\ threshold $\lambda = (1 + \varepsilon) \mathfrak{a}_{W,n}$ and assuming maximum degree $p'$.\\
\For{$\hat{K} \in \hat{\mathbb{K}}$}
{
\If{$ | D_{\hat{K}}^{p'-1} \left ( \boldsymbol{Y} \right ) | > (1 + \varepsilon) \mathfrak{a}_{W,n}$}
{
Detection $\leftarrow$ \texttt{True}
}
}
\If{{\upshape Detection}}
{
BREAK
}
$p' \leftarrow (p'-1)$
}
}
\caption{An algorithm for determining the maximum polynomial order of the signal by progressively estimating intervals of significance and testing null intervals for the presence of a change points in a lower degree polynomial.}
\label{algorithm: p estimation}
\end{algorithm}

\begin{lemma}
Let $\hat{p}$ be the output of Algorithm \ref{algorithm: p estimation}. Grant Assumptions \ref{assumption: positive long run varaince}, \ref{assumption: strong approximation}, and \ref{assumption: minimum segment} as well as condition (\ref{equation: spacing condition}) hold, and moreover assume moreover that: (i) $\underline{p} \leq p \leq \overline{p}$ and $(\overline{p} - \underline{p}) = \mathcal{O} (1)$, (ii) $N = \mathcal{O}(1)$, and (iii) the coefficients in (\ref{equation: signal between change}) are all of the order $\mathcal{O}(1)$. Then $\mathbb{P} \left ( \hat{p} = p \right ) \rightarrow 1$ as $n \rightarrow \infty$. 
\label{lemma: p algo consistency}
\end{lemma}

\section{Simulation studies} \label{section: numerical illustrations}

\subsection{Alternative methods for change point inference} \label{section: alternative methods}

We will compare our proposed methodology with existing algorithms with publicly available implementations, which each promise to return intervals containing true change point locations uniformly at a significance level chosen by the user. These are: the Narrowest Significance Pursuit (NSP) algorithm of \cite{fryzlewicz2023narrowest}, its self-normalised variant (NSP-SN), and its extension to auto-regressive signals (NSP-AR); the bootstrap confidence intervals for moving sums (MOSUM) of \cite{cho2022bootstrap} using a single bandwidth (uniscale) and multiple bandwidths (multiscale); the simultaneous multiscale change point estimator (SMUCE) of \cite{frick2014multiscale}, as well as its extension to heterogeneous noise (H-SMUCE) developed by \cite{pein2017heterogeneous}, and its extension to dependent noise (Dep-SMUCE) developed by \cite{dette2020multiscale}. We also consider the conditional confidence intervals of \cite{bai1998estimating} (B\&P) with significance level Bonferroni-corrected for the estimated number of change-points. For our own procedure we write DIF1 for Algorithm \ref{algorithm: binary segmentation over grids} run under the assumptions of Theorem \ref{theorem: Gaussian tightness of normalized maximum} and DIF2 for the algorithm run under the assumption of Theorem \ref{theorem: tightness of non-gaussian dependent maximum}. Additionally we write MAD if the scale of the noise is estimated using the median absolute deviation estimator (\ref{equation: MAD estimator}), SD if the scale is estimated using the difference based estimator of the standard deviation (\ref{equation: difference based var estimator}), and LRV if the long run variance is estimated using (\ref{equation: long run varaince estimator}). Each of the methods considered is designed for different noise types and different change point models, and we summarise this information in Table \ref{table: methods proprties} below. 

\begin{table}[!htbp]
\centering
\caption{Suitability of each method to non-Gaussian noise, dependent noise, and change point detection in higher order polynomial signals. The the letter \textbf{e} indicates that no theoretical guarantees are given but the authors observe good empirical performance of the method.}
\begin{tabular}{|l|c|c|c|}
\hline
Method & \makecell{non-Gaussian \\ noise} & \makecell{dependent \\ noise} & \makecell{higher order \\ polynomials} \\ 
\hline
DIF1-MAD & \xmark & \xmark & \cmark \\
DIF2-SD & \cmark & \xmark & \cmark \\
DIF2-LRV & \cmark & \cmark & \cmark \\
NSP & \xmark & \xmark & \cmark \\
NSP-SN & \cmark & \xmark & \cmark \\
NSP-AR & \xmark & \cmark & \cmark \\ 
B\&P & \cmark & \xmark & \xmark \\ 
MOSUM (uniscale) & \cmark & \xmark & \xmark \\
MOSUM (multiscale) & \cmark & \xmark & \xmark \\
SMUCE & \xmark & \xmark & \xmark \\
H-SMUCE & \textbf{e} & \xmark & \xmark \\
Dep-SMUCE & \cmark & \cmark & \xmark \\
\hline 
\end{tabular}    
\label{table: methods proprties}
\end{table}

Throughout the simulation studies, whenever a method requires the user to specify a minimum support parameter we set this to $W = 0.5 n ^ {1/2}$. Exceptions occur for Dep-SMUCE for which we follow the authors' recommendation in setting $W = n ^ {1/3}$, for DIF1-MAD in which we set $W = \log (n)$ following the results of Theorem \ref{theorem: Gaussian tightness of normalized maximum}, and for the multiscale MOSUM procedure for which we generate a grid of bandwidths using the \texttt{bandwidths.auto} function in the MOSUM package \cite{meier2021mosum}. For our own procedure we set the decay parameter regulating the density of the grid to $a = \sqrt{2}$ as was done in \cite{kovacs2020seeded} for the grid proposed therein. 

\subsection{Coverage on null signals} \label{section: coverage simulations}

We first investigate empirically the coverage provided by our algorithm and the alternatives introduced in Section \ref{section: alternative methods}. To investigate coverage we apply each method to a vector of pure noise with length $n = 750$ generated according to each of the noise types listed below, setting the noise level to $\sigma = 1$ , and over $100$ replications record the proportion of times no intervals of significance are returned. For each procedure we set appropriate tuning parameters in order that the family-wise error is nominally controlled at the level $\alpha = 0.1$. Where applicable we ask each procedure to test for change points in polynomial signals of degrees $0$, $1$, and $2$.

\begin{itemize}
    \item (\texttt{N1}): $\zeta_t \sim \mathcal{N} ( 0, \sigma^2 ) \text{ i.i.d.}$
    \item (\texttt{N2}): $\zeta_t \sim t_5 \times \sigma \sqrt{0.6} \text{ i.i.d.}$
    \item (\texttt{N3}): $\zeta_t \sim \sigma \times \text{Laplace} (0, 1/\sqrt{2}) \text{ i.i.d.}$ 
    \item (\texttt{N4}): $\zeta_t = \phi \zeta_{t-t} + \varepsilon_t$ with $\phi = 0.8$ and $\varepsilon_t \sim \mathcal{N} ( 0, \sigma^2 / (1 - \phi^2)$ i.i.d.
    \item (\texttt{N5}): $\zeta_t = \phi \zeta_{t-t} + \varepsilon_t$ with $\phi = 0.8$ and $\varepsilon_t \sim t_5 \times \sigma \sqrt{0.6/(1-\phi^2)}$ i.i.d.
    \item (\texttt{N6}): $\zeta_t = \phi_1 \zeta_{t-1} + \phi_2 \zeta_{t-2} + \sum_{j=1}^6 \theta_j \varepsilon_{t-j} + \varepsilon_t$ with $\phi_1 = 0.75$, $\phi_2 = -0.5$, $\theta_j = 0.1 \times (9-j)$ and $\varepsilon_t \sim \mathcal{N} (0, \sigma^2)$ i.i.d.
\end{itemize}

The results of the simulation study are reported in Tables \ref{table: coverage simulation} and \ref{table: coverage simulation dep}. We also highlight whether each method comes with theoretical coverage guarantees for each noise type, where the letter \textbf{c} indicates that the method should give correct coverage conditional on the event that the number of change points is correctly estimated. The majority of methods tested keep the nominal size well for noise types consistent with the assumptions under which they were developed and in general tend to provide over coverage. The only exception occurs for Dep-SMUCE which delivers significant under-coverage on noise types \texttt{N4} and \texttt{N5}. The coverage provided by our procedure is likewise accurate, and in particular under Gaussian noise tends to provide coverage closer to the level requested than that provided by competing methods. This shows that the asymptotic results in Theorems \ref{theorem: Gaussian tightness of normalized maximum} and \ref{theorem: tightness of non-gaussian dependent maximum} hold well in finite samples, and that that our procedure is generally better calibrated than other available methods; see also the additional simulation study in Section \ref{section: additional simulation study} of the appendix, which shows that the same results hold for a range of signal lengths. 

\begin{table}[!htbp]
\caption{Proportion of times out of $100$ replications each method returned no intervals of significance when applied to a noise vector of length $n = 750$, as well as whether each method is theoretically guaranteed to provide correct coverage. The letter \textbf{c} indicates that the method should give correct coverage conditional on the event that the number of change points is correctly estimated. The the letter \textbf{e} indicates that no theoretical guarantees are given but the authors observe good empirical performance of the method.} 
\label{table: coverage simulation}
\centering
\begin{subtable}{\linewidth}\centering
{\scalebox{0.9}{
\begin{tabular}{|l|c|c|c|c|}
  \hline
 & guarantee & degree 0 & degree 1 & degree 2 \\ 
  \hline
DIF1-MAD & \cmark & 0.93 & 0.92 & 0.95 \\ 
  DIF2-SD & \cmark & 0.98 & 1.00 & 1.00 \\ 
  DIF2-LRV & \cmark & 0.97 & 0.99 & 0.97 \\ 
  NSP & \cmark & 0.96 & 0.99 & 0.99 \\ 
  NSP-SN & \cmark & 1.00 & 1.00 & 1.00 \\ 
  NSP-AR & \cmark & 1.00 & 1.00 & 0.99 \\ 
  B\&P & \textbf{c} & 0.99 & - & - \\ 
  MOSUM (uniscale) & \textbf{c} & 0.98 & - & - \\ 
  MOSUM (multiscale) & \textbf{c} & 0.94 & - & - \\ 
  SMUCE & \cmark  & 0.96 & - & - \\ 
  H-SMUCE & \cmark  & 0.95 & - & - \\ 
  Dep-SMUCE & \cmark  & 0.92 & - & - \\ 
   \hline
\end{tabular}
}}
\caption{Coverage on noise type \texttt{N1} with $\sigma = 1$.}
\end{subtable}%

\begin{subtable}{\linewidth}\centering
{\scalebox{0.9}{
\begin{tabular}{|l|c|c|c|c|}
  \hline
 & guarantee & degree 0 & degree 1 & degree 2 \\ 
  \hline
DIF1-MAD & \xmark & 0.46 & 0.45 & 0.38 \\ 
  DIF2-SD & \cmark & 0.98 & 0.97 & 0.95 \\ 
  DIF2-LRV & \cmark & 0.93 & 0.92 & 0.91 \\ 
  NSP & \xmark & 0.05 & 0.04 & 0.04 \\ 
  NSP-SN & \cmark & 1.00 & 1.00 & 1.00 \\ 
  NSP-AR & \xmark & 0.14 & 0.10 & 0.19 \\ 
  B\&P & \textbf{c} & 0.97 & - & - \\ 
  MOSUM (uniscale) & \textbf{c} & 0.99 & - & - \\ 
  MOSUM (multiscale) & \textbf{c} & 0.98 & - & - \\ 
  SMUCE & \xmark & 0.21 & - & - \\ 
  H-SMUCE & \textbf{e} & 1.00 & - & - \\ 
  Dep-SMUCE & \cmark & 0.95 & - & - \\ 
   \hline
\end{tabular}
}}
\caption{Coverage on noise type \texttt{N2} with $\sigma = 1$.}
\end{subtable}%

\begin{subtable}{\linewidth}\centering
{\scalebox{0.9}{
\begin{tabular}{|l|c|c|c|c|}
  \hline
 & guarantee & degree 0 & degree 1 & degree 2 \\ 
  \hline
DIF1-MAD & \xmark & 0.36 & 0.33 & 0.37 \\ 
  DIF2-SD & \cmark & 0.97 & 0.99 & 0.99 \\ 
  DIF2-LRV & \cmark & 0.98 & 0.98 & 0.94 \\ 
  NSP & \xmark & 0.02 & 0.04 & 0.03 \\ 
  NSP-SN & \cmark & 1.00 & 1.00 & 1.00 \\ 
  NSP-AR & \xmark & 0.19 & 0.23 & 0.22 \\ 
  B\&P & \textbf{c} & 0.95 & - & - \\ 
  MOSUM (uniscale) & \textbf{c} & 1.00 & - & - \\ 
  MOSUM (multiscale) & \textbf{c} & 0.98 & - & - \\ 
  SMUCE & \xmark & 0.14 & - & - \\ 
  H-SMUCE & \textbf{e} & 1.00 & - & - \\ 
  Dep-SMUCE & \cmark & 0.90 & - & - \\ 
   \hline
\end{tabular}
}}
\caption{Coverage on noise type \texttt{N3} with $\sigma = 1$.}
\end{subtable}
\end{table}

\begin{table}[!htbp]
\caption{Proportion of times out of $100$ replications each method returned no intervals of significance when applied to a noise vector of length $n = 750$, as well as whether each method is theoretically guaranteed to provide correct coverage. The letter \textbf{c} indicates that the method should give correct coverage conditional on the event that the number of change points is correctly estimated. The the letter \textbf{e} indicates that no theoretical guarantees are given but the authors observe good empirical performance of the method.} 
\label{table: coverage simulation dep}
\centering
\begin{subtable}{\linewidth}\centering
{\scalebox{0.9}{
\begin{tabular}{|l|c|c|c|c|}
  \hline
 & guarantee & degree 0 & degree 1 & degree 2 \\ 
  \hline
DIF1-MAD & \xmark & 0.00 & 0.00 & 0.00 \\ 
  DIF2-SD & \xmark & 0.00 & 0.00 & 0.00 \\ 
  DIF2-LRV & \cmark & 0.90 & 0.90 & 0.89 \\ 
  NSP & \xmark & 0.00 & 0.00 & 0.00 \\ 
  NSP-SN & \xmark & 0.00 & 0.00 & 0.01 \\ 
  NSP-AR & \cmark & 1.00 & 0.99 & 0.98 \\ 
  B\&P & \xmark & 0.00 & - & - \\ 
  MOSUM (uniscale) & \xmark & 0.00 & - & - \\ 
  MOSUM (multiscale) & \xmark & 0.00 & - & - \\ 
  SMUCE & \xmark & 0.00 & - & - \\ 
  H-SMUCE & \xmark & 0.00 & - & - \\ 
  Dep-SMUCE & \cmark & 0.41 & - & - \\ 
   \hline
\end{tabular}
}}
\caption{Coverage on noise type \texttt{N4} with $\sigma = 1$.}
\end{subtable}%

\begin{subtable}{\linewidth}\centering
{\scalebox{0.9}{
\begin{tabular}{|l|c|c|c|c|}
  \hline
 & guarantee & degree 0 & degree 1 & degree 2 \\ 
  \hline
DIF1-MAD & \xmark & 0.00 & 0.00 & 0.00 \\ 
  DIF2-SD & \xmark & 0.00 & 0.00 & 0.00 \\ 
  DIF2-LRV & \cmark & 0.87 & 0.91 & 0.95 \\ 
  NSP & \xmark & 0.00 & 0.00 & 0.00 \\ 
  NSP-SN & \xmark & 0.00 & 0.01 & 0 \\ 
  NSP-AR & \xmark & 0.17 & 0.12 & 0.07 \\ 
  B\&P & \xmark & 0.00 & - & - \\ 
  MOSUM (uniscale) & \xmark & 0.00 & - & - \\ 
  MOSUM (multiscale) & \xmark & 0.00 & - & - \\ 
  SMUCE & \xmark & 0.00 & - & - \\ 
  H-SMUCE & \xmark & 0.00 & - & - \\ 
  Dep-SMUCE & \cmark & 0.32 & - & - \\ 
   \hline
\end{tabular}
}}
\caption{Coverage on noise type \texttt{N5} with $\sigma = 1$.}
\end{subtable}%

\begin{subtable}{\linewidth}\centering
{\scalebox{0.9}{
\begin{tabular}{|l|c|c|c|c|}
  \hline
 & guarantee & degree 0 & degree 1 & degree 2 \\ 
  \hline
DIF1-MAD & \xmark & 0.00 & 0.00 & 0.00 \\ 
  DIF2-SD & \xmark & 0.00 & 0.00 & 0.00 \\ 
  DIF2-LRV & \cmark & 0.99 & 0.95 & 1.00 \\ 
  NSP & \xmark & 0.00 & 0.00 & 0.00 \\ 
  NSP-SN & \xmark & 0.03 & 0.10 & 0.12 \\ 
  NSP-AR & \xmark & 0.83 & 0.87 & 0.93 \\ 
  B\&P & \xmark & 0.00 & - & - \\ 
  MOSUM (uniscale) & \xmark & 0.00 & - & - \\ 
  MOSUM (multiscale) & \xmark & 0.00 & - & - \\ 
  SMUCE & \xmark & 0.00 & - & - \\ 
  H-SMUCE & \xmark & 0.00 & - & - \\ 
  Dep-SMUCE & \cmark & 0.94 & - & - \\ 
   \hline
\end{tabular}
}}
\caption{Coverage on noise type \texttt{N6} with $\sigma = 1$.}
\end{subtable}
\end{table}

\subsection{Coverage in the presence of strong serial dependence} \label{section: strong serial dependence}

Calibrating change point procedures in the presence of serial dependence is a difficult problem, and in practice few available methods work well uniformly; see for instance the numerical comparison in \cite{cho2021multiple}. We remark that in the presence of strong serial dependence the coverage provided by our procedure can break down. To illustrate this, Table \ref{table: ar1 sims} reports the proportion of times over $100$ replications for which DIF2-LRV reported no intervals on significance on the the signal
\begin{align}
& \zeta_t = \phi_j \zeta_{t-1} + \varepsilon_t 
\label{equation: ar-1 noise} \text{ with } \phi_j = 0.8 + j / 100 
\end{align}
with $\varepsilon_t \sim \mathcal{N} (0,1)$ i.i.d. and $j = 0, \dots, 10$. For large $\phi$ the procedure no longer delivers the desired coverage. However, on closer inspection this appears to be a failure of the long run variance estimator proposed in (\ref{equation: long run varaince estimator}) which for values of $\phi$ close to $1$ tends to under-estimate the long run variance, rather than the asymptotic theory. This is because scaling each local test by the true time average variance constant (TAVC, \citep{wu2009recursive}), which for a given scale $W''$ is defined as 
\begin{equation}
\text{TAVC} \left ( W'' \right ) = \mathbb{E} \left [ \left ( \frac{1}{\sqrt{W''}} \sum_{t=1}^{W''} \zeta_t \right ) ^ 2 \right ],
\end{equation}
at a scale proportional to the $W$ supplied to DIF2-LRV our procedure attains the desired level of coverage. The time average variance constant converges to the long run variance as long as $W''$ diverges with the sample size. As argued by \cite{mcgonigle2023robust} it is preferable to scale by the time average variance constant, as opposed to the long run variance, as with a properly chosen scale the latter better accounts for the local variation of each test. In fact, close inspection of the proof of Lemma \ref{lemma: consistency of LRV estimator} reveals that (\ref{equation: long run varaince estimator}) is consistent for the TAVC calculated at the scale $W'$. However, the conditions of Lemma \ref{lemma: consistency of LRV estimator} limit (\ref{equation: long run varaince estimator}) to scales of the order $W' = o \left ( \sqrt{n} \right )$. 

\begin{table}[h] 
\caption{Proportion of times out of $100$ replications DIF2-LRV returned no intervals of significance when applied to a noise vector of length $n = 750$ form (\ref{equation: ar-1 noise}) and normalized with estimated long run standard deviation $\hat{\tau}_{\text{DIF}}$ calculated according to (\ref{equation: long run varaince estimator}) as well as the TAVC calculated at scale $W'' = \frac{2}{5}\sqrt{n} $, the true long run standard deviation.} 
\centering
\label{table: ar1 sims} 
\begin{tabular}{|c|c|c|c|c|c|c|c|c|c|c|c|}
  \hline
$\phi$ & 0.8 & 0.81 & 0.82 & 0.83 & 0.84 & 0.85 & 0.86 & 0.87 & 0.88 & 0.89 & 0.9 \\ 
  \hline
Estimated LRV & 0.94 & 0.86 & 0.89 & 0.89 & 0.86 & 0.81 & 0.84 & 0.83 & 0.62 & 0.72 & 0.51 \\ 
True TAVC & 0.98 & 0.96 & 0.96 & 0.98 & 0.96 & 0.94 & 0.93 & 0.96 & 0.90 & 0.95 & 0.87 \\ 
   \hline
\end{tabular}

\end{table} 
 
In light of the above, there are a number of approaches one may take if strong serial dependence is suspected. For instance, one could slightly pre-whiten the data using the heuristic methods suggested in Section 4.1 of \cite{baranowski2019narrowest}. Alternatively one may run DIF2-LRV with a conservative, but nonetheless consistent, estimator of the long run variance. For instance one may scale by $\hat{\tau}^2_{\text{DIF}} + C / W'''$ for some positive constant $C>0$ and some $W'''$ diverging with $n$. 

\subsection{Performance on test signals} \label{section: performance on test signals}

Next we investigate the performance of our method and its competitors on test signals containing change points. To investigate performance we apply each method to $100$ sample paths from the change point models \texttt{M1}, \texttt{M2}, and \texttt{M3} listed below, contaminated with each of the four noise types introduced in Section \ref{section: coverage simulations} above. On each iteration we record for each method: the number of intervals which contain at least one change point location (no. genuine), the proportion of intervals returned which contain at least one change point location (prop. genuine), the average length of intervals returned (length), and whether all intervals returned contain at least once change point location (coverage). We report the average of these quantities, and again highlight whether each method comes with theoretical coverage guarantees for each noise type (guarantee). 

\begin{itemize}
    \item (\texttt{M1}): the first $n = 512$ values of piecewise constant the \texttt{blocks} signal from \cite{donoho1994ideal}, shown in Figure \ref{subfigure: blocks}, with $N = 4$ change points at locations $\Theta = \left \{ 205, 267, 308, 472 \right \}$
    \item (\texttt{M2}): the first $n = 600$ values of the piecewise linear \texttt{waves} signal from \cite{baranowski2019narrowest}, shown in Figure \ref{subfigure: waves}, with $N = 3$ change points at locations $\Theta = \left \{ 150, 300, 450 \right \}$
    \item (\texttt{M3}): the piecewise quadratic \texttt{hills} signal with length $n = 400$, shown in Figure \ref{subfigure: hills}, with $N = 3$ change points at locations $\Theta = \left \{100, 200, 300 \right \}$
\end{itemize}

The results of the simulation study are reported in Tables \ref{table: blocks performance} - \ref{table: hills performance}. On the piecewise constant \texttt{blocks} function, among the methods which provide correct coverage, our algorithm is generally among the top performing methods in terms the number of change points detected and the lengths of intervals recovered. In fact, is only outperformed by the MOSUM procedure with multiscale bandwidth under noise types \texttt{N1} and \texttt{N2}. The family of SMUCE algorithms, as well as the B\&P procedure, all suffer from under coverage on noise types for which they should give accurate coverage. Among the methods compared to only the family of NSP algorithms is applicable to higher order piecewise polynomial signals. On the piecewise polynomial \texttt{waves} and \texttt{hills} signals our methods deliver correct coverage where theoretical guarantees are available and consistently outperform the only competitor, the family of NSP algorithms.

\begin{table}[!htbp] 
\caption{
Average of the number of intervals which contain at least one change point location (no. genuine), the proportion of intervals returned which contain at least one change point location (prop. genuine), the average length of intervals returned (length), and whether all intervals returned contain at least once change point location (coverage), on the piecewise constant \texttt{blocks} signal contaminated with noise \texttt{N1-N4} over $100$ replications. The noise level was set to $\sigma = 10$ for noise types \texttt{N1-2}and to $\sigma = 5$ for noise types \texttt{N3-4}. We also report whether each method is theoretically guaranteed to provide correct coverage.}
\centering
\label{table: blocks performance} 
\scalebox{0.9}{
\begin{tabular}{|c|c|c|c|c|c|c|c|}
  \hline
 &  & N1 & N2 & N3 & N4 & N5 & N6 \\ 
  \hline
 & no. genuine & 3.69 & 3.75 & 3.87 & 3.46 & 3.45 & 3.42 \\ 
  DIF1-MAD & prop. genuine & 0.99 & 0.89 & 0.85 & 0.13 & 0.11 & 0.09 \\ 
   & length & 34.86 & 27.19 & 23.89 & 9.36 & 8.87 & 8.30 \\ 
   & coverage & 0.97 & 0.61 & 0.42 & 0.00 & 0.00 & 0.00 \\ 
\hline
   & no. genuine & 3.34 & 3.36 & 3.40 & 3.27 & 3.41 & 3.65 \\ 
  DIF2-SD & prop. genuine & 1.00 & 1.00 & 1.00 & 0.17 & 0.19 & 0.17 \\ 
   & length & 43.72 & 43.80 & 43.41 & 16.63 & 16.59 & 16.15 \\ 
   & coverage & 1.00 & 0.99 & 0.99 & 0.00 & 0.00 & 0.00 \\ 
\hline
   & no. genuine & 1.98 & 2.03 & 1.97 & 1.35 & 1.33 & 1.39 \\ 
  DIF2-LRV & prop. genuine & 0.99 & 1.00 & 0.99 & 0.90 & 0.91 & 0.95 \\ 
   & length & 61.35 & 60.67 & 58.03 & 69.27 & 80.50 & 71.57 \\ 
   & coverage & 1.00 & 1.00 & 1.00 & 1.00 & 1.00 & 1.00 \\ 
\hline
   & no. genuine & 3.20 & 3.48 & 3.52 & 3.67 & 3.74 & 3.86 \\ 
  NSP & prop. genuine & 1.00 & 0.63 & 0.60 & 0.15 & 0.12 & 0.10 \\ 
   & length & 59.63 & 36.06 & 32.45 & 11.34 & 9.35 & 7.85 \\ 
   & coverage & 1.00 & 0.14 & 0.12 & 0.00 & 0.00 & 0.00 \\ 
\hline
   & no. genuine & 1.92 & 1.90 & 1.93 & 3.11 & 3.11 & 3.00 \\ 
  NSP-SN & prop. genuine & 1.00 & 1.00 & 1.00 & 0.83 & 0.84 & 0.94 \\ 
   & length & 120.41 & 117.62 & 113.95 & 75.43 & 75.16 & 73.08 \\ 
   & coverage & 1.00 & 1.00 & 1.00 & 0.36 & 0.42 & 0.80 \\ 
\hline
   & no. genuine & 0.12 & 0.87 & 0.88 & 0.68 & 0.99 & 1.56 \\ 
  NSP-AR & prop. genuine & 0.12 & 0.54 & 0.55 & 0.61 & 0.51 & 0.89 \\ 
   & length & 24.58 & 63.29 & 78.52 & 40.76 & 46.77 & 36.83 \\ 
   & coverage & 1.00 & 0.50 & 0.54 & 1.00 & 0.43 & 0.98 \\ 
\hline
   & no. genuine & 3.85 & 3.88 & 3.93 & 3.49 & 3.73 & 3.60 \\ 
  B\&P & prop. genuine & 0.96 & 0.96 & 0.98 & 0.19 & 0.20 & 0.25 \\ 
   & length & 16.78 & 17.14 & 16.32 & 13.71 & 13.68 & 14.84 \\ 
   & coverage & 0.83 & 0.85 & 0.92 & 0.00 & 0.00 & 0.00 \\ 
\hline
   & no. genuine & 1.96 & 2.02 & 2.06 & 3.54 & 3.59 & 3.51 \\ 
  MOSUM (uniscale) & prop. genuine & 0.83 & 0.83 & 0.88 & 0.20 & 0.21 & 0.24 \\ 
   & length & 14.03 & 14.21 & 14.21 & 15.87 & 15.82 & 15.36 \\ 
   & coverage & 0.89 & 0.91 & 0.94 & 0.00 & 0.00 & 0.00 \\ 
\hline
   & no. genuine & 3.90 & 3.98 & 3.97 & 4.90 & 4.96 & 4.69 \\ 
  MOSUM (multiscale) & prop. genuine & 0.96 & 0.98 & 0.98 & 0.23 & 0.23 & 0.25 \\ 
   & length & 22.01 & 21.14 & 20.62 & 21.30 & 21.13 & 22.02 \\ 
   & coverage & 0.86 & 0.93 & 0.92 & 0.00 & 0.00 & 0.00 \\ 
\hline
   & no. genuine & 3.71 & 3.61 & 3.82 & 2.12 & 1.87 & 1.68 \\ 
  SMUCE & prop. genuine & 0.95 & 0.72 & 0.73 & 0.09 & 0.06 & 0.05 \\ 
   & length & 36.02 & 24.21 & 23.99 & 8.59 & 7.03 & 5.91 \\ 
   & coverage & 0.89 & 0.35 & 0.24 & 0.00 & 0.00 & 0.00 \\ 
\hline
   & no. genuine & 3.40 & 3.08 & 3.21 & 2.81 & 2.91 & 2.42 \\ 
  H-SMUCE & prop. genuine & 0.92 & 0.86 & 0.89 & 0.84 & 0.84 & 0.77 \\ 
   & length & 49.42 & 44.92 & 45.63 & 49.32 & 52.98 & 54.19 \\ 
   & coverage & 0.80 & 0.70 & 0.72 & 0.63 & 0.65 & 0.56 \\ 
\hline
   & no. genuine & 2.12 & 2.15 & 2.30 & 3.11 & 2.94 & 3.42 \\ 
  Dep-SMUCE & prop. genuine & 0.80 & 0.78 & 0.82 & 0.49 & 0.46 & 0.61 \\ 
   & length & 73.66 & 69.10 & 72.25 & 32.73 & 29.83 & 33.06 \\ 
   & coverage & 0.57 & 0.59 & 0.61 & 0.00 & 0.01 & 0.06 \\ 
\hline
\end{tabular}
}
\end{table} 

\begin{table}[!htbp] 
\caption{Average of the number of intervals which contain at least one change point location (no. genuine), the proportion of intervals returned which contain at least one change point location (prop. genuine), the average length of intervals returned (length), and whether all intervals returned contain at least once change point location (coverage), on the piecewise linear \texttt{waves} signal contaminated with noise types \texttt{N1-N4} over $100$ replications. The noise level was set to $\sigma = 5$ for all noise types. We also report whether each method is theoretically guaranteed to provide correct coverage.} 
\centering
\label{table: waves performance} 
\begin{tabular}{|c|c|c|c|c|c|c|c|}
  \hline
 &  & N1 & N2 & N3 & N4 & N5 & N6 \\ 
  \hline
 & no. genuine & 2.98 & 2.77 & 2.66 & 1.51 & 1.60 & 2.28 \\ 
  DIF1-MAD & prop. genuine & 0.98 & 0.83 & 0.77 & 0.06 & 0.06 & 0.05 \\ 
   & length & 81.57 & 65.57 & 58.03 & 12.85 & 11.91 & 9.13 \\ 
   & coverage & 0.92 & 0.49 & 0.39 & 0.00 & 0.00 & 0.00 \\ 
   \hline
   & no. genuine & 2.99 & 2.98 & 2.97 & 1.66 & 1.71 & 2.56 \\ 
  DIF2-SD & prop. genuine & 1.00 & 0.99 & 0.99 & 0.08 & 0.09 & 0.08 \\ 
   & length & 94.25 & 92.87 & 92.98 & 16.67 & 17.14 & 15.05 \\ 
   & coverage & 0.99 & 0.98 & 0.96 & 0.00 & 0.00 & 0.00 \\ 
   \hline
   & no. genuine & 3.00 & 2.98 & 2.98 & 1.36 & 1.51 & 1.54 \\ 
  DIF2-LRV & prop. genuine & 1.00 & 0.99 & 0.99 & 0.96 & 0.99 & 0.99 \\ 
   & length & 95.78 & 97.32 & 98.90 & 233.37 & 219.28 & 239.46 \\ 
   & coverage & 0.99 & 0.98 & 0.97 & 0.97 & 0.97 & 0.99 \\ 
   \hline
   & no. genuine & 3.00 & 2.60 & 2.67 & 1.73 & 1.85 & 1.85 \\ 
  NSP & prop. genuine & 1.00 & 0.65 & 0.66 & 0.11 & 0.09 & 0.07 \\ 
   & length & 93.06 & 58.00 & 57.31 & 20.36 & 17.06 & 14.02 \\ 
   & coverage & 1.00 & 0.16 & 0.19 & 0.00 & 0.00 & 0.00 \\ 
   \hline
   & no. genuine & 2.99 & 3.00 & 3.00 & 2.42 & 2.34 & 2.44 \\ 
  NSP-SN & prop. genuine & 1.00 & 1.00 & 1.00 & 0.84 & 0.83 & 0.96 \\ 
   & length & 126.23 & 124.74 & 125.23 & 119.64 & 116.84 & 135.12 \\ 
   & coverage & 1.00 & 1.00 & 1.00 & 0.58 & 0.57 & 0.90 \\ 
   \hline
   & no. genuine & 0.62 & 1.38 & 1.63 & 0.00 & 0.44 & 0.09 \\ 
  NSP-AR & prop. genuine & 0.51 & 0.63 & 0.77 & 0.00 & 0.29 & 0.09 \\ 
   & length & 98.93 & 96.03 & 113.80 & 0.00 & 59.68 & 22.95 \\ 
   & coverage & 1.00 & 0.39 & 0.59 & 1.00 & 0.39 & 0.97 \\ 
   \hline
\end{tabular}

\end{table} 

\begin{table}[!htbp] 
\caption{Average of the number of intervals which contain at least one change point location (no. genuine), the proportion of intervals returned which contain at least one change point location (prop. genuine), the average length of intervals returned (length), and whether all intervals returned contain at least once change point location (coverage), on the piecewise quadratic \texttt{hills} signal contaminated with noise types \texttt{N1-N4} over $100$ replications. The noise level was set to $\sigma = 1$ for all noise types. We also report whether each method is theoretically guaranteed to provide correct coverage.} 
\centering
\label{table: hills performance} 
\begin{tabular}{|c|c|c|c|c|c|c|c|}
  \hline
 &  & N1 & N2 & N3 & N4 & N5 & N6 \\ 
  \hline
 & no. genuine & 3.00 & 2.85 & 2.90 & 1.86 & 1.79 & 1.99 \\ 
  DIF1-MAD & prop. genuine & 0.99 & 0.85 & 0.86 & 0.14 & 0.12 & 0.07 \\ 
   & length & 43.32 & 36.10 & 35.09 & 16.66 & 14.93 & 8.60 \\ 
   & coverage & 0.95 & 0.51 & 0.58 & 0.00 & 0.00 & 0.00 \\ 
   \hline
   & no. genuine & 3.00 & 2.99 & 3.00 & 1.72 & 1.92 & 2.60 \\ 
  DIF2-SD & prop. genuine & 1.00 & 0.99 & 1.00 & 0.16 & 0.18 & 0.12 \\ 
   & length & 51.96 & 51.37 & 51.16 & 19.77 & 20.36 & 16.01 \\ 
   & coverage & 1.00 & 0.97 & 1.00 & 0.00 & 0.00 & 0.00 \\ 
   \hline
   & no. genuine & 3.00 & 3.00 & 3.00 & 1.82 & 1.72 & 1.69 \\ 
  DIF2-LRV & prop. genuine & 1.00 & 1.00 & 1.00 & 0.98 & 0.95 & 0.98 \\ 
   & length & 69.29 & 68.27 & 68.84 & 122.55 & 121.99 & 131.37 \\ 
   & coverage & 1.00 & 1.00 & 1.00 & 0.99 & 0.95 & 1.00 \\ 
   \hline
   & no. genuine & 3.00 & 2.89 & 2.97 & 2.04 & 2.13 & 1.99 \\ 
  NSP & prop. genuine & 1.00 & 0.83 & 0.87 & 0.25 & 0.22 & 0.16 \\ 
   & length & 50.55 & 40.40 & 39.59 & 28.13 & 23.17 & 19.24 \\ 
   & coverage & 1.00 & 0.44 & 0.60 & 0.00 & 0.00 & 0.00 \\ 
   \hline
   & no. genuine & 2.96 & 2.96 & 2.93 & 2.77 & 2.77 & 2.69 \\ 
  NSP-SN & prop. genuine & 1.00 & 1.00 & 1.00 & 1.00 & 0.98 & 1.00 \\ 
   & length & 83.66 & 83.21 & 83.62 & 92.23 & 90.86 & 95.98 \\ 
   & coverage & 1.00 & 1.00 & 1.00 & 1.00 & 0.95 & 1.00 \\ 
   \hline
   & no. genuine & 0.52 & 1.40 & 1.60 & 0.03 & 0.55 & 0.24 \\ 
  NSP-AR & prop. genuine & 0.44 & 0.75 & 0.85 & 0.03 & 0.40 & 0.24 \\ 
   & length & 60.37 & 64.41 & 72.31 & 3.51 & 52.93 & 41.64 \\ 
   & coverage & 1.00 & 0.65 & 0.79 & 1.00 & 0.48 & 0.93 \\ 
   \hline
\end{tabular}

\end{table} 

\section{Real data examples} \label{section: real data examples}

\subsection{Application to bone mineral density acquisition curves}

We analyse data on bone mineral acquisition in 423 healthy males and females aged between $9$ and $25$. The data is available from \href{https://hastie.su.domains/ElemStatLearn/datasets/spnbmd.csv}{\texttt{hastie.su.domains}} and was first analysed in \cite{bachrach1999bone}. The data was originally collected as part of a longitudinal study where four consecutive yearly measurements of bone mass by dual energy x-ray absorptiometry were taken from each subject. We obtain bone density acquisition curves for males and females by grouping measurements by gender and age and averaging over measurements in each grouping. The processed data are plotted in the first row of Figure \ref{figure: bone density}. There is some disagreement over the age at which peak bone mass density is attained in adolescents \cite{kroger1993development, theintz1992longitudinal, lu1996volumetric}. One possible solution is to model the data in Figure \ref{figure: bone density} as following a piecewise linear trend, and to infer this information from any estimated change point locations. 

We apply the procedure DIF2-SD to the data, with the tuning parameters specified in Section \ref{section: numerical illustrations}, because as the data are strictly positive the assumption of Gaussian noise is unlikely to hold. We additionally estimate change point locations using five state of the art algorithms for recovering changes in piecewise linear signals which however do not come with any coverage guarantees. These are: the Narrowest-Over-Threshold algorithm (NOT) of \cite{baranowski2019narrowest}, and the same algorithm run with the requirement that the estimated signal be continuous (NOT-cont), the Isolate Detect algorithm (ID) of \cite{anastasiou2022detecting}, the dynamic programming based algorithm of \cite{bai1998estimating} (BP), and the Continuous-piecewise-linear Pruned Optimal Partitioning algorithm (CPOP) of \cite{fearnhead2019detecting}. When applying each method we use the default parameters in their respective R packages. 

The results of the analysis are shown in in the second row of Figure \ref{figure: bone density}. On both bone density acquisition curves all methods for change point detection estimate a single change point location, save for CPOP. However, on the male bone density acquisition data there is considerable disagreement among the methods regarding the location of the change point detected. Since the methods do not quantify the uncertainty around each estimated change point, it is difficult to say which estimate is closest to the truth. DIF2-SD also returns a single interval of significance when applied to each data set, and each interval returned contains all change point locations recovered by the other methods on each respective data set save the extraneous change point detected by CPOP. By Corollary \ref{corollary: coverage guarantee} one can be certain each interval contains at least one true change point location with high probability. We therefore re-apply the aforementioned change point detectors to this interval only. The results are shown in the third row of Figure \ref{figure: bone density}, where this time there is much greater agreement among the methods. We also note that the corresponding intervals returned by NSP-SN (not shown), which is the only competing method from Section \ref{section: alternative methods} applicable to the data, cover essentially the entire range of the data. 

\begin{figure}[!htbp]
\centering
\caption{black / grey solid lines (\textbf{---} / \textcolor{gray}{\textbf{---}}) represents bone density acquisition curves for males and females between the ages of $9$ and $25$, red shaded regions (\textcolor{transpred}{$\blacksquare$}) represent intervals of significance returned by DIF2-SD, dashed coloured lines represent change point locations recovered by NOT (\textcolor{blue}{\textbf{- - -}}), NOT-cont (\textcolor{blue}{$\cdots$}), ID (\textcolor{orange}{\textbf{- - -}}), BP (\textcolor{green}{\textbf{- - -}}), and CPOP (\textcolor{purple}{\textbf{- - -}}) } 
\label{figure: bone density}
\begin{subfigure}[b]{0.4\textwidth}
\centering
\includegraphics[width=\textwidth]{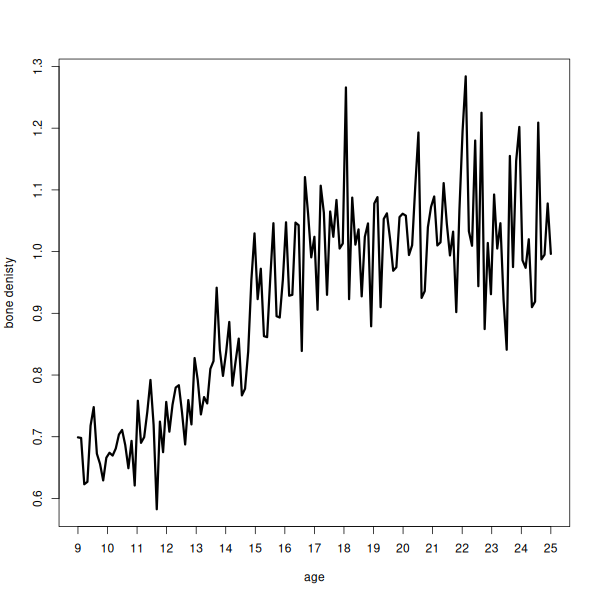}
\caption{male bone density}
\end{subfigure}
\begin{subfigure}[b]{0.4\textwidth}
\centering
\includegraphics[width=\textwidth]{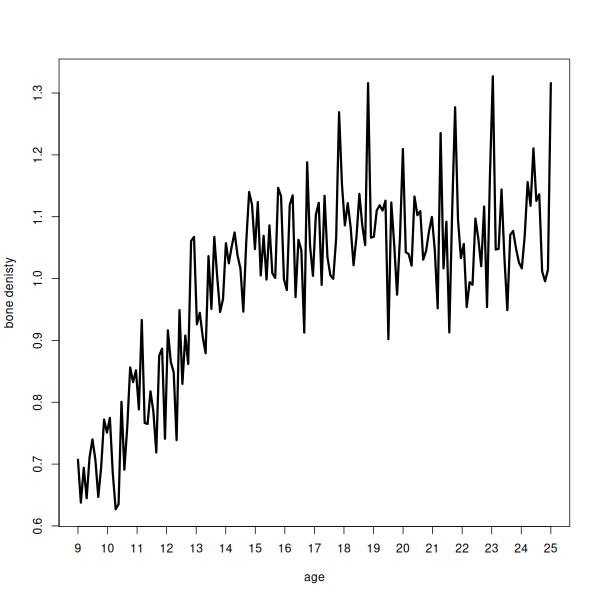}
\caption{female bone density}
\end{subfigure}

\begin{subfigure}[b]{0.4\textwidth}
\centering
\includegraphics[width=\textwidth]{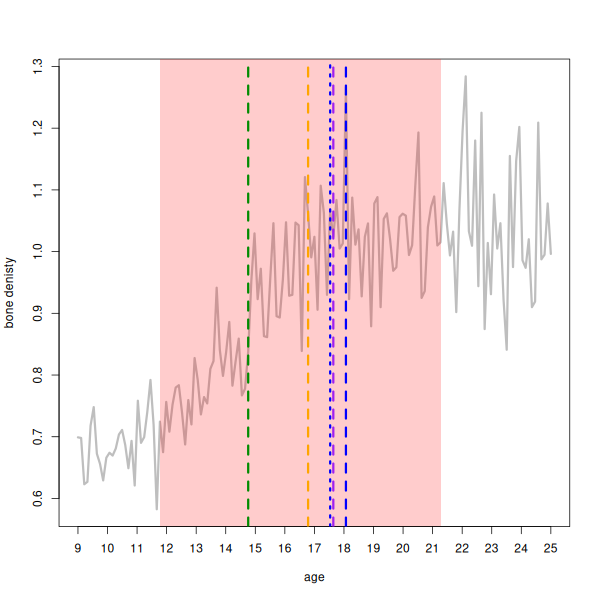}
\caption{estimated change points}
\end{subfigure}
\begin{subfigure}[b]{0.4\textwidth}
\centering
\includegraphics[width=\textwidth]{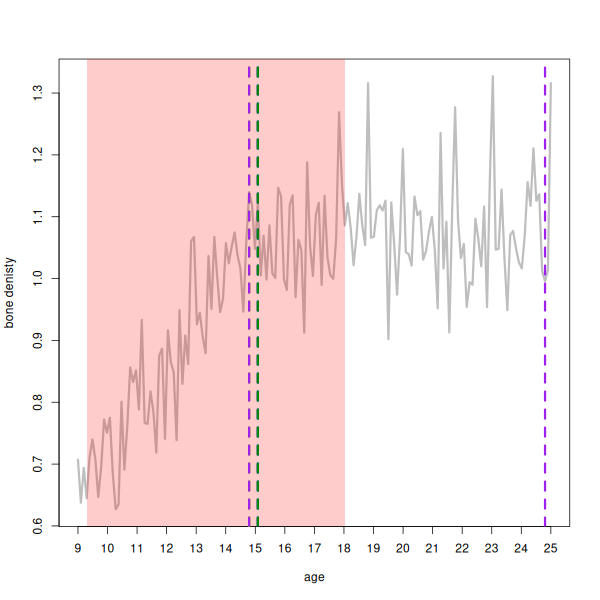}
\caption{estimated change points}
\end{subfigure}

\begin{subfigure}[b]{0.4\textwidth}
\centering
\includegraphics[width=\textwidth]{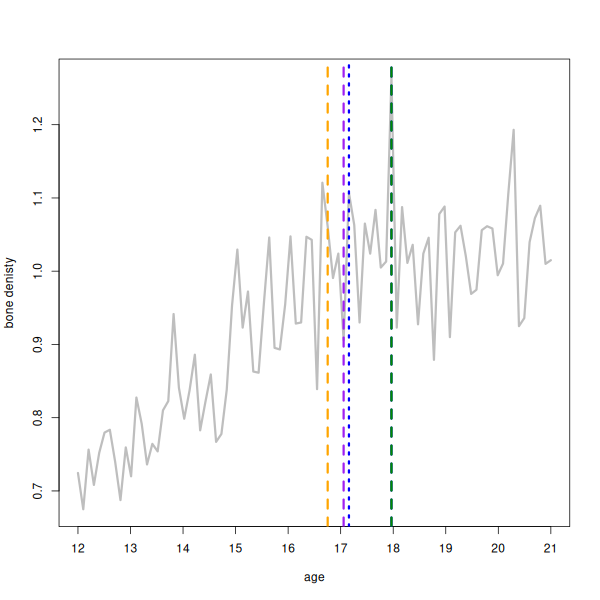}
\caption{change points on sub-interval}
\end{subfigure}
\begin{subfigure}[b]{0.4\textwidth}
\centering
\includegraphics[width=\textwidth]{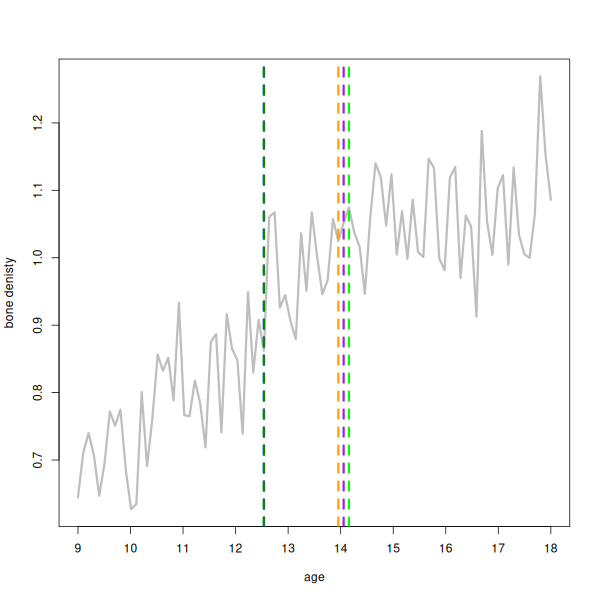}
\caption{change points on sub-interval}
\end{subfigure}
\end{figure}

\subsection{Applications to nitrogen dioxide concentration in London}

We analyse daily average concentrations of nitrogen dioxide ($\text{NO}_2$) at Marylebone Road in London between September 2, 2000 and September 30, 2020. The data are available from \href{https://uk-air.defra.gov.uk/}{\texttt{uk-air.defra.gov.uk}} and were originally analysed from a change point perspective, assuming a piecewsie constant mean, by \cite{cho2021multiple}. We follow their analysis in \cite{Cho2022git} by taking the square root transform of the data and removing seasonal and weekly variation. The processed data is plotted in Figure \ref{figure: London NO2 de-trended}. \cite{cho2021multiple} identify three historical events which are likely to have affected $\text{NO}_2$ concentration levels in London during the period in question, which are summarised below. 

\begin{itemize}
    \item \underline{February 2003:} \emph{installation of particulate traps on most London buses and other heavy duty diesel vehicles.}
    \item \underline{April 8, 2019:} \emph{introduction of Ultra Low Emission zones in central London.}
    \item \underline{March 23, 2020:} \emph{beginning of the nation-wide COVID-19 lockdown.}
\end{itemize}

We apply the procedure DIF2-LRV to the data with tuning parameters specified in Section \ref{section: numerical illustrations}, since time series of $\text{NO}_2$ concentrations are known to be strongly serially correlated. For comparison we additionally estimate change point locations using three state of the art algorithms for recovering changes in piecewise constant signals in the presence of serially correlated noise, which however do not come with coverage guarantees. These are: the algorithm of \cite{romano2022detecting} for Detecting Changes in Autocorrelated and Fluctuating Signals (DeCAFS), the algorithm of \cite{chakar2017robust} for estimating multiple change-points in the
mean of a Gaussian AR(1) process (AR1seg), and the Wild Contrast Maximisation and gappy Schwarz algorithm (WCM.gSa) of \cite{cho2021multiple}. When applying each method we use default parameters in their respective R packages save for the DeCAFS algorithm for which our choice of tuning parameters is guided by the \texttt{guidedModelSelection} function in the DeCAFS R package. 

\begin{figure}[!htbp]
\centering
\caption{daily average concentrations of $\text{NO}_2$ at Marylebone Road after square root transform and  with seasonal variation removed, red dashed lines (\textcolor{red}{\textbf{- - -}}) and dark red shaded region (\textcolor{red}{$\blacksquare$}) represent dates of events which are likely to have affected $\text{NO}_2$ concentration levels}
\label{figure: London NO2 de-trended}
\includegraphics[width=\textwidth]{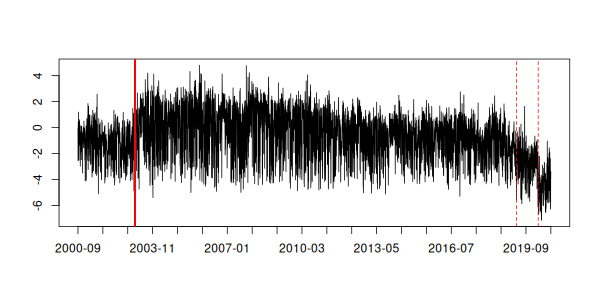}
\end{figure}

The results of the analysis are shown in Figure \ref{figure: LondoN-NO2}. DIF2-LRV returns four intervals, among which the first, third, and fourth cover the dates of important events identified by \cite{cho2021multiple}. Within each of these three intervals AR1seg, DeCAFS, and WCM.gSa each identify one change point, with the exception of WCM.gSa which identifies two change points in the third interval returned. However, when we re-apply WCM.gSa over the third interval only one change point is detected, suggesting the second change point in this interval was spuriously estimated. DeCAFS detects a change point between the first and second intervals returned by DIF2-LRV. However, re-applying the algorithms to data between the two intervals no change points are detected suggesting the original change points were also spuriously estimated. We finally note that the data analysed consists of $n = 7139$ observations, and running DIF2-LRV on a desktop computer with a 3.20GHz Intel (R) Core (TM) i7-8700 CPU took $4.1$ seconds. Running Dep-SMUCE and NSP-AR, which are the only competing methods from Section \ref{section: alternative methods} applicable to the data, on the same machine took $15.1$ seconds and $145.8$ seconds respectively. Dep-SMUCE returns similar intervals to DIF2-LRV, whereas NSP-AR does not detect any change points in the data. 

\begin{figure}[!htbp]
\centering
\caption{grey lines (\textcolor{gray}{---}) represent daily average concentrations of $\text{NO}_2$ at Marylebone Road after square root transform and  with seasonal variation removed, red shaded regions (\textcolor{transpred}{$\blacksquare$}) represent intervals of significance returned by DIF2-LRV, blue dashed lines (\textcolor{blue}{- - -}) represent change points recovered by a given algorithm, blue solid lines (\textcolor{blue}{---}) represent the corresponding fitted piecewise constant signal.} 
\label{figure: LondoN-NO2}
\begin{subfigure}[b]{\textwidth}
\centering
\includegraphics[width=0.9\textwidth]{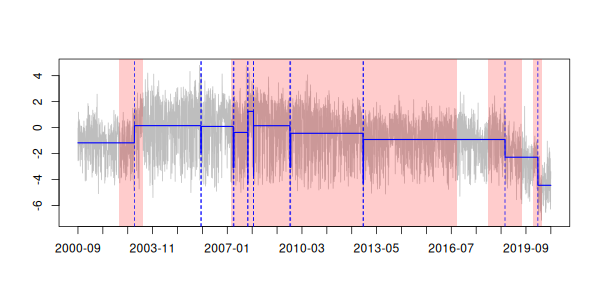}
\caption{change points and piecewise constant signal recovered by DeCAFS}
\label{subfigure: London NO2 decafs}
\end{subfigure}
\begin{subfigure}[b]{\textwidth}
\centering
\includegraphics[width=0.9\textwidth]{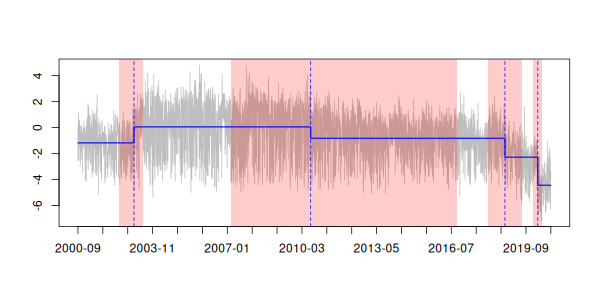}
\caption{change points and piecewise constant signal recovered by AR1seg}
\label{subfigure: London NO2 AR1-seg}
\end{subfigure}
\begin{subfigure}[b]{\textwidth}
\centering
\includegraphics[width=0.9\textwidth]{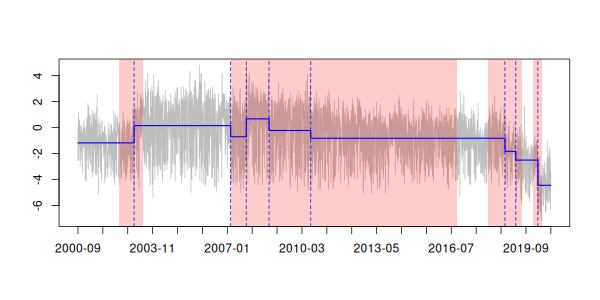}
\caption{change points and piecewise constant signal recovered by WCM.gSa}
\label{subfigure: London NO2 WGS}
\end{subfigure}
\end{figure}

\section{Acknowledgments}
The authors would like to thank Prof. Zakhar Kabluchko for helpful discussions.

\section{Proofs}

For sequences $\left \{ a_{n} \right \}_{n > 0}$ and $\left \{ b_{n} \right \}_{n > 0}$ we write $a_{n} \cleq b_{n}$ if there is a constant $C > 0$ for which $a_{n} \leq C b_{n}$ for every $n > 0$. We write $a_{n} \sim b_{n}$ if $a_{n} / b_{n} \rightarrow 1$ as $n \rightarrow \infty$. We write $|\mathcal{A}|$ for the cardinality of a set $\mathcal{A}$. The density, cumulative density, and tail functions of a standard Gaussian random variable are written respectively as $\phi \left ( \cdot \right )$, $\Phi \left ( \cdot \right )$, and $\bar{\Phi} \left ( \cdot \right )$. 

\subsection{Preparatory results}

\begin{definition}
Let $\left \{ \xi (t) \right \}_{t > 0}$ be a centred Gaussian process with unit variance, then if there are constants $C_\xi > 0$ and $\alpha \in (0, 2]$ such that for all $t > 0$ the following holds
\begin{equation*}
\text{Cov} \left ( \xi (t) , \xi (t + s) \right ) = 1 - C_\xi \left | s \right | ^ \alpha + o \left ( \left | s \right | ^ \alpha \right ), \hspace{2em} \left | s \right | \rightarrow 0,
\end{equation*}
the process is called stationary with index $\alpha$ and local structure $C_\xi$. Moreover, the process has almost surely continuous sample paths and for any compact $K \subset \mathbb{R}^{+}$ the quantity $M_K = \sup_{t \in K} \left  \{ \xi \left ( t \right ) \right \}$ is well defined. 
\label{defintion: stationary normal process}
\end{definition}

\begin{lemma}[Berman's lemma]
Let $\zeta_1, \dots, \zeta_n$ and $\tilde{\zeta}_1, \dots, \tilde{\zeta}_n $ be two sequences of Gaussian random variables with marginal $\mathcal{N}(0,1)$ distribution and covariances $\text{Cov} \left ( \zeta_i, \zeta_j \right ) = \Lambda_{ij}$ and $\text{Cov} \left ( \tilde{\zeta}_i, \tilde{\zeta}_j \right ) = \tilde{\Lambda}_{ij}$. Define $\rho_{ij} = \max \left ( \left | \Lambda_{ij} \right | , \left | \tilde{\Lambda}_{ij} \right | \right )$. For any real numbers $u_1, \dots, u_n $ the following holds. 
\begin{align*}
& \left | \mathbb{P} \left ( \zeta_j \leq u_j \mid 1 \leq j \leq n \right ) - \mathbb{P} \left ( \tilde{\zeta}_j \leq u_j \mid 1 \leq j \leq n \right ) \right | \\ 
& \leq \frac{1}{2 \pi} \sum_{1 \leq i < j \leq n} \left | \Lambda_{ij} - \tilde{\Lambda}_{ij} \right | \left ( 1 - \rho_{ij}^2 \right )^{-1/2} \exp \left (- \frac{ \frac{1}{2} \left ( u_i^2 + u_j^2 \right )}{1 + \rho_{ij}} \right )
\end{align*}

\label{lemma: normal comparison lemma}
\end{lemma}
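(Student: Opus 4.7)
The plan is to establish the inequality via the classical Gaussian interpolation method. First I would introduce a one-parameter family of covariance matrices
$$\Sigma^{(h)} = h \Lambda + (1-h) \tilde{\Lambda}, \qquad h \in [0,1],$$
where $\Lambda$ and $\tilde{\Lambda}$ denote the covariance matrices of $(\zeta_j)$ and $(\tilde{\zeta}_j)$. Since both matrices have unit diagonal, so does $\Sigma^{(h)}$, and its off-diagonal entries satisfy $|\Sigma^{(h)}_{ij}| \leq \rho_{ij}$ by convexity. Writing $F(h) = \mathbb{P}(\zeta_j^{(h)} \leq u_j, \, 1 \leq j \leq n)$ for $\zeta^{(h)} \sim \mathcal{N}(0, \Sigma^{(h)})$, the quantity of interest equals $|F(1) - F(0)|$, which by the fundamental theorem of calculus is bounded by $\int_0^1 |F'(h)|\, dh$.

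Next I would apply Plackett's identity, which expresses the partial derivative of a multivariate Gaussian CDF with respect to an off-diagonal covariance entry as
$$\frac{\partial F}{\partial \Sigma_{ij}^{(h)}} = \phi_2\big(u_i, u_j; \Sigma_{ij}^{(h)}\big) \cdot \mathbb{P}\Big(\zeta_k^{(h)} \leq u_k \text{ for all } k \neq i,j \,\Big|\, \zeta_i^{(h)} = u_i,\, \zeta_j^{(h)} = u_j\Big),$$
where $\phi_2(\cdot,\cdot;r)$ is the bivariate standard Gaussian density with correlation $r$. Bounding the conditional probability trivially by $1$ and combining with the chain rule yields
$$|F'(h)| \leq \sum_{1 \leq i < j \leq n} |\Lambda_{ij} - \tilde{\Lambda}_{ij}| \, \phi_2\big(u_i, u_j; \Sigma_{ij}^{(h)}\big).$$

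The key estimate is then an upper bound on the bivariate density. For $|r| \leq \rho < 1$ I would verify the elementary inequality
$$\phi_2(u, v; r) \leq \frac{1}{2\pi \sqrt{1-\rho^2}} \exp\!\left(-\frac{u^2 + v^2}{2(1+\rho)}\right),$$
which, after cancelling the leading coefficient, reduces to the algebraic claim that $u^2 + v^2 - 2r u v \geq (1-r)(u^2+v^2)$ when $r \geq 0$, i.e.\ $r(u-v)^2 \geq 0$; the case $r < 0$ is handled by the sign symmetry $\phi_2(u,v;r) = \phi_2(-u,v;-r)$. Applied with $\rho = \rho_{ij}$ inside each summand and integrated over $h \in [0,1]$ (the integrand no longer depends on $h$ once the bound in terms of $\rho_{ij}$ has been used), this produces the stated inequality after a factor of $1/(2\pi)$ is absorbed from the density normalisation.

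The main obstacle, or rather the point requiring care, lies in justifying Plackett's identity with full rigour. The standard derivation proceeds by differentiating the joint Gaussian density under the integral sign, recognising that mixed partials with respect to $\Sigma_{ij}$ agree with second spatial partials via the multivariate heat equation satisfied by the Gaussian kernel; this requires non-degeneracy of $\Sigma^{(h)}$ along the interpolation path. Under the implicit assumption that $\rho_{ij} < 1$ for all $i \neq j$ (otherwise the stated bound is vacuous), non-degeneracy holds uniformly in $h$ and the identity applies without issue. The remaining steps are computational and pose no serious difficulty.
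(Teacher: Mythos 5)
The paper does not actually prove this lemma; it cites Theorem 4.2.1 of Leadbetter, Lindgren and Rootz\'en, whose proof is precisely the interpolation-plus-Plackett argument you give (including the bivariate density bound via $r(u-v)^2\ge 0$), so your proposal is correct and essentially identical to the cited proof. One small caveat: having $\rho_{ij}<1$ for all $i\neq j$ does not by itself guarantee non-degeneracy of $\Lambda$ or $\tilde{\Lambda}$ (a correlation matrix can be singular with every off-diagonal entry strictly below one in modulus), so the degenerate case should be handled by perturbing the covariances with $\varepsilon I$, applying the identity along the now strictly positive-definite path, and letting $\varepsilon\downarrow 0$ by continuity of both sides in the covariance entries.
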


\begin{proof}
See Theorem 4.2.1 in \cite{leadbetter2012extremes}.
\end{proof}

\begin{lemma}[Khintchine's lemma]
Let $\left \{ M_n \right \}_{n > 0}$ be a sequence of random variables and let $G$ be a non-degenerate distribution. If $\left \{ (c_n, d_n) \right \}_{n > 0}$ are scaling and centring sequences such that $\left ( M_n - c_n \right ) / d_n \rightarrow G$ then for any alternative sequences $\left \{ (c_n', d_n') \right \}_{n > 0}$ satisfying $d_n / d_n' \sim 1$ and $\left ( c_n - c_n' \right ) / d_n = o(1)$ we also have that $\left ( M_n - c_d' \right ) / d_n' \rightarrow G$. 
\label{lemma: convergence with other scaling sequcnes}
\end{lemma}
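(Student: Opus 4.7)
The plan is to reduce the claim to a standard application of Slutsky's theorem by algebraically decomposing the ratio $(M_n - c_n')/d_n'$ into a piece that has the assumed limiting distribution (up to a deterministic multiplicative factor converging to $1$) plus a deterministic additive term that vanishes. This is the classical route to Khintchine's convergence of types theorem, and no probabilistic tightness-type argument is needed once the decomposition is set up correctly.

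Concretely, I would begin by writing
\begin{equation*}
\frac{M_n - c_n'}{d_n'} \;=\; \frac{M_n - c_n}{d_n} \cdot \frac{d_n}{d_n'} \;+\; \frac{c_n - c_n'}{d_n'}.
\end{equation*}
Set $X_n = (M_n - c_n)/d_n$, $a_n = d_n/d_n'$, and $b_n = (c_n - c_n')/d_n'$. By hypothesis $X_n \Rightarrow G$, and the assumption $d_n/d_n' \sim 1$ gives $a_n \to 1$. For $b_n$, I would write
\begin{equation*}
b_n \;=\; \frac{c_n - c_n'}{d_n} \cdot \frac{d_n}{d_n'},
\end{equation*}
which is the product of a sequence tending to $0$ (by the second assumption $(c_n - c_n')/d_n = o(1)$) and one tending to $1$, hence $b_n \to 0$.

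At this point the conclusion follows from two applications of Slutsky's theorem: first, since $a_n \to 1$ is deterministic, $a_n X_n \Rightarrow G$; second, adding the deterministic vanishing sequence $b_n$ preserves the weak limit, so $(M_n - c_n')/d_n' = a_n X_n + b_n \Rightarrow G$. The only mild subtlety — and I expect it to be the main thing to watch rather than a genuine obstacle — is ensuring that the non-degeneracy of $G$ is not actually used in this direction of the argument: non-degeneracy is needed for the converse (that \emph{any} two pairs of normalising sequences leading to the same weak limit must be asymptotically equivalent in the above sense), but the implication stated here is purely a Slutsky-type consequence of the given asymptotic relations between $(c_n, d_n)$ and $(c_n', d_n')$.
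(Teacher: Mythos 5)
Your argument is correct. Note first that the paper does not actually prove this lemma at all: its ``proof'' is a one-line citation to Theorem 1.2.3 of Leadbetter, Lindgren and Rootz\'en, i.e.\ the classical Khintchine convergence-of-types theorem, of which the stated lemma is the forward (``if'') direction with limiting affine map equal to the identity. Your self-contained route is therefore genuinely different in form, though it is exactly the standard proof of that direction: the decomposition
\begin{equation*}
\frac{M_n - c_n'}{d_n'} \;=\; \frac{M_n - c_n}{d_n}\cdot\frac{d_n}{d_n'} \;+\; \frac{c_n - c_n'}{d_n}\cdot\frac{d_n}{d_n'}
\end{equation*}
is algebraically exact, the two deterministic factors converge to $1$ and $0$ respectively under the stated hypotheses, and Slutsky's theorem (for deterministic multiplicative and additive perturbations of a weakly convergent sequence) yields the conclusion. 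Your remark on non-degeneracy is also accurate and worth keeping: non-degeneracy of $G$ is only needed for the converse implication (that two normalising pairs producing the same non-degenerate limit must be asymptotically equivalent), so it is a vacuous hypothesis for the direction the paper actually uses. What the citation buys the authors is brevity and access to the full two-sided statement; what your proof buys is transparency and the observation that the assumption of non-degeneracy can be dropped for the implication as stated.
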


\begin{proof}
See Theorem 1.2.3 in \cite{leadbetter2012extremes}. 
\end{proof}

\begin{lemma}[Pickand's lemma, continuous version]
Let $\left \{ \xi (t) \right \}_{t>0}$ be a stationary Gaussian process with index $\alpha \in (0,2]$ and local structure $C_\xi > 0$. There is a constant $H_\alpha > 0$ such that for any compact $K \subset \mathbb{R}^+$ the following holds. 
\begin{equation*}
\mathbb{P} \left ( \sup_{t \in K} \left \{ \xi \left ( t \right ) \right \} > u \right ) \sim H_\alpha C_{\xi}^{1/\alpha} \left | K \right | u^{2/\alpha - 1} \phi \left ( u \right ) 
\end{equation*}
Moreover the values $H_1 = 1$ and $H_2 = 1 / \sqrt{\pi}$ are known explicitly. 

\label{lemma: high excursion probability - continuous}
\end{lemma}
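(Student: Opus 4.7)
The plan is to apply Pickands' double-sum method. First I would partition the compact set $K$ into $N \sim |K|\, C_\xi^{1/\alpha} u^{2/\alpha}/T$ consecutive sub-intervals $I_k$ of common length $T \cdot C_\xi^{-1/\alpha} u^{-2/\alpha}$, where $T > 0$ is a large but fixed auxiliary parameter to be sent to infinity only at the very end. On each sub-interval I would introduce the rescaled process $\eta_u^{(k)}(s) = \xi(a_k + C_\xi^{-1/\alpha} u^{-2/\alpha} s)$ for $s \in [0,T]$, where $a_k$ is the left endpoint of $I_k$. The hypothesized local covariance of $\xi$ translates into $\mathrm{Cov}\bigl(\eta_u^{(k)}(s), \eta_u^{(k)}(s+h)\bigr) = 1 - |h|^\alpha u^{-2} + o(u^{-2})$ uniformly in $s$ and $k$ as $u \to \infty$, which is the standard scaling that produces a fractional Brownian motion limit $B_\alpha$ with variance function $\mathrm{Var}(B_\alpha(t)) = |t|^\alpha$.

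Second, via the classical Pickands tail-conditional limit theorem (decomposing $\eta_u^{(k)}$ as a regression on its value at $0$, applying a local CLT to the rescaled increments, and using dominated convergence), one obtains
\begin{equation*}
\mathbb{P}\Bigl(\sup_{t \in I_k} \xi(t) > u\Bigr) \sim H_\alpha(T)\, \bar\Phi(u), \qquad u \to \infty,
\end{equation*}
where $H_\alpha(T) = \mathbb{E}\bigl[\exp\bigl(\sup_{s \in [0,T]}(\sqrt{2}\, B_\alpha(s) - s^\alpha)\bigr)\bigr]$, and a subadditivity argument guarantees the existence of $H_\alpha := \lim_{T \to \infty} H_\alpha(T)/T \in (0, \infty)$. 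Summing over sub-intervals by Bonferroni gives
\begin{equation*}
\sum_{k} \mathbb{P}\bigl(\sup_{I_k}\xi > u\bigr) - \sum_{k<l} \mathbb{P}\bigl(\sup_{I_k}\xi > u, \sup_{I_l}\xi > u\bigr) \le \mathbb{P}\bigl(\sup_K \xi > u\bigr) \le \sum_k \mathbb{P}\bigl(\sup_{I_k}\xi > u\bigr),
\end{equation*}
whose leading term is $|K|\,C_\xi^{1/\alpha}\,(H_\alpha(T)/T)\, u^{2/\alpha}\,\bar\Phi(u)$; using $\bar\Phi(u) \sim \phi(u)/u$ and sending first $u \to \infty$ then $T \to \infty$ produces the stated asymptotic $H_\alpha C_\xi^{1/\alpha} |K|\, u^{2/\alpha - 1} \phi(u)$. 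The double-sum correction would be handled by Berman's normal comparison lemma (Lemma~\ref{lemma: normal comparison lemma}) applied to a dense grid covering $I_k \cup I_l$ against an independent Gaussian reference vector, showing each joint probability is $o(\bar\Phi(u))$ uniformly and that the aggregate is negligible at the target order.

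The main obstacle is the one-sub-interval conditional limit: one must control the $o(u^{-2})$ remainder in the covariance expansion of $\eta_u^{(k)}$ uniformly over all pairs $(s, s+h) \in [0,T]^2$ and all $k$, and then justify the passage from the conditional distribution given a high-level point to the unconditional joint limit with $B_\alpha$. Establishing existence, strict positivity and finiteness of $H_\alpha$ via sub-additivity of $H_\alpha(T)$ is a secondary but non-trivial step. The explicit values then follow from the resulting variational problems: for $\alpha = 2$ the limit process is standard Brownian motion and a direct Gaussian tail computation for $\sup_{s \ge 0}(\sqrt{2}B(s) - s^2)$ gives $H_2 = 1/\sqrt{\pi}$, while for $\alpha = 1$ the random variable $\sup_{s \ge 0}(\sqrt{2}B(s) - s)$ is exponentially distributed and yields $H_1 = 1$.
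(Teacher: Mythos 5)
The paper does not actually prove this lemma: it is filed among the preparatory results and disposed of by citation (Theorem 9.15 of Piterbarg for the asymptotic, Remark 12.2.10 of Leadbetter, Lindgren and Rootz\'en for the values $H_1$ and $H_2$). What you have written is the classical Pickands double-sum argument, which is precisely how those sources establish the result, so at the level of strategy you are reconstructing the intended proof rather than taking a different route. The rescaling to $\eta_u^{(k)}$, the local lemma with $H_\alpha(T)=\mathbb{E}\bigl[\exp\bigl(\sup_{s\in[0,T]}(\sqrt{2}B_\alpha(s)-s^\alpha)\bigr)\bigr]$, the subadditivity argument giving $H_\alpha=\lim_{T\to\infty}H_\alpha(T)/T$, and the Bonferroni sandwich are all correct ingredients.

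One step of your plan fails as stated, namely the claim that every cross term $\mathbb{P}\bigl(\sup_{I_k}\xi>u,\ \sup_{I_l}\xi>u\bigr)$ can be shown to be $o(\bar\Phi(u))$ via Lemma \ref{lemma: normal comparison lemma}. For adjacent sub-intervals ($l=k+1$) this is false outright: the two intervals share an endpoint, so the joint probability is at least $\mathbb{P}(\xi(b_k)>u)=\bar\Phi(u)$; and the comparison lemma degenerates exactly there, since $\rho_{ij}\to1$ for grid points straddling the common boundary. The classical repair is either (a) to insert short separating blocks between the long blocks, show these contribute negligibly to the single sum, and apply the comparison lemma only to the now well-separated long blocks --- which is exactly the $A_i/B_i$ device the paper itself deploys in the proof of Theorem \ref{theorem: distribution on optimal scales} --- or (b) to prove a two-interval version of the local lemma and use that $2H_\alpha(T)-H_\alpha(2T)=o(T)$ as $T\to\infty$; either way the adjacent-pair contribution only disappears after dividing by $T$ and sending $T\to\infty$, not pairwise at order $\bar\Phi(u)$. (For the well-separated pairs one also needs $\sup_{\varepsilon\le t\le\operatorname{diam}(K)}\operatorname{Cov}(\xi(0),\xi(t))<1$ for each $\varepsilon>0$, a condition implicit in the cited theorem but absent from Definition \ref{defintion: stationary normal process}.) A smaller imprecision concerns the explicit constants: $\mathbb{E}\bigl[\exp\bigl(\sup_{s\ge0}(\sqrt{2}B_\alpha(s)-s^\alpha)\bigr)\bigr]$ is infinite for both $\alpha=1$ and $\alpha=2$, so the values cannot be read off the all-time supremum; one computes on $[0,T]$ and normalises. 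For $\alpha=2$ the fBm degenerates to $B_2(s)=sN$ with $N\sim\mathcal{N}(0,1)$ and a direct calculation gives $\mathbb{E}\bigl[\exp\bigl(\sup_{s\in[0,T]}(\sqrt{2}sN-s^2)\bigr)\bigr]=1+T/\sqrt{\pi}$, whence $H_2=1/\sqrt{\pi}$; the case $\alpha=1$ likewise requires the finite-horizon identity for drifted Brownian motion rather than the bare exponentiality of the unrestricted supremum.
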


\begin{proof}
See Theorem 9.15 in \cite{piterbarg2015twenty}, and Remark 12.2.10 in \cite{leadbetter2012extremes} for the values of $H_\alpha$. 
\end{proof}

\begin{lemma}[Pickand's lemma, discrete version]
Let $\left \{ \xi (t) \right \}_{t>0}$ be a stationary Gaussian process with index $\alpha \in ( 0, 2 ]$ and local structure $C_\xi > 0$. If $q \rightarrow 0$ and $u \rightarrow \infty$ in such a way that $u^{2/\alpha} q \rightarrow a > 0$ the following holds for any compact $K \subset \mathbb{R}^+$.
\begin{equation*}
\mathbb{P} \left ( \sup_{t \in K \cap \mathbb{Z}q} \left \{ \xi \left ( t \right ) \right \} > u \right ) \sim  F_\xi \left ( a \right ) \left | K \right | u^{2/\alpha-1} \phi \left ( u \right ) 
\end{equation*}
The function $F_\xi \left ( \cdot \right )$ is defined as follows
\begin{equation*}
F_\xi \left ( a \right ) = \lim_{T \rightarrow \infty} \frac{1}{T} \mathbb{E} \left [ \exp \left ( \sup_{s \in [0,T] \cap a \mathbb{Z}} Z \left ( s \right ) \right ) \right ].
\end{equation*}
Where $\left \{ Z ( s ) \right \}_{s > 0}$ is a stationary Gaussian process with first and second moments as follows
\begin{gather*}
\mathbb{E} \left ( Z (s) \right ) = - C_\xi \left | s \right |^\alpha, \\
\text{Cov} \left ( Z (s_1), Z (s_2) \right ) = C_\xi \left | s_1 \right |^\alpha + C_\xi \left | s_2 \right |^\alpha - C_\xi \left | s_1 - s_2 \right |^\alpha. 
\end{gather*}
\label{lemma: high excursion probability - discrete}
\end{lemma}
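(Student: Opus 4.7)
The plan is to adapt the classical Pickands argument (as used for the continuous version in Lemma~\ref{lemma: high excursion probability - continuous}) to the setting of a grid whose spacing shrinks like $q \sim a\, u^{-2/\alpha}$. First I would partition $K$ into consecutive disjoint sub-intervals $I_1, \dots, I_{N_h}$ of length $h = h(u)$, where $h \to 0$ and $h\, u^{2/\alpha} \to \infty$; this scaling ensures that each block still contains many grid points from $\mathbb{Z}q$ (roughly $h/q \to \infty$), yet $h$ is small enough that the covariance of $\xi$ over each block is well approximated by the local expansion $1 - C_\xi|s|^\alpha + o(|s|^\alpha)$.

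On a single block $I_k = [t_k, t_k + h]$, the key step is the Pickands transformation: conditioning on the level of the process at a reference point and rescaling time by $u^{2/\alpha}$. Writing $t = t_k + s\, u^{-2/\alpha}$ and using the standard Gaussian regression identity
\begin{equation*}
\mathbb{P}\!\left(\sup_{t \in I_k \cap q\mathbb{Z}} \xi(t) > u\right)
= \int_{0}^{\infty} \mathbb{P}\!\left(\sup_{t \in I_k \cap q\mathbb{Z}} \xi(t) > u \,\Big|\, \xi(t_k) = u + x/u \right) \frac{\phi(u + x/u)}{u}\, dx \cdot (1+o(1)),
\end{equation*}
I would show that under the conditioning, and after the rescaling $s = (t - t_k) u^{2/\alpha}$, the centred field $u\bigl(\xi(t_k + s u^{-2/\alpha}) - u - x/u\bigr)$ converges in finite-dimensional distributions to $Z(s) - x$, where $Z$ has exactly the mean and covariance structure stated in the lemma. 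Because $q u^{2/\alpha} \to a$, the rescaled grid $I_k \cap q\mathbb{Z}$ maps onto (an increasing portion of) $a\mathbb{Z} \cap [0, h u^{2/\alpha}]$, so the conditional probability becomes a probability that $\sup_{s \in a\mathbb{Z}\cap [0, hu^{2/\alpha}]} Z(s) > x$. Letting $u \to \infty$ first and then $h u^{2/\alpha} \to \infty$, this contribution integrates against the tail weight $\phi(u + x/u)/u \sim u^{-1}\phi(u) e^{-x}$, producing $\phi(u)u^{-1} \cdot \mathbb{E}[\exp(\sup_{s \in a\mathbb{Z} \cap [0,T]} Z(s))]$; dividing by $T = hu^{2/\alpha}$ and letting $T \to \infty$ yields the constant $F_\xi(a)$, and multiplying back by $T$ restores the factor $h\, u^{2/\alpha - 1} \phi(u)$.

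Global assembly then proceeds by a two-sided Bonferroni sandwich. The union bound gives the upper direction directly, $\mathbb{P}(\sup_{K \cap q\mathbb{Z}} \xi > u) \le \sum_k \mathbb{P}(\sup_{I_k \cap q\mathbb{Z}} \xi > u)$, while for the lower bound one subtracts pairwise double-excursion probabilities $\sum_{k<k'}\mathbb{P}(\sup_{I_k} \xi > u,\ \sup_{I_{k'}} \xi > u)$. These pairwise terms are controlled by Berman's comparison (Lemma~\ref{lemma: normal comparison lemma}) against a process with zero correlation across distinct blocks: the Berman remainder is dominated by $\sum_{k \neq k'} |\Lambda_{I_k, I_{k'}}| \exp(-u^2/(1 + |\Lambda|))$, which under the weakly stationary covariance of $\xi$ (with $\text{Cov}(\xi(t),\xi(t+s)) \to 0$ or at least bounded away from $1$ for $|s| \geq h$) is negligible compared to $|K| u^{2/\alpha - 1} \phi(u)$ once $h u^{2/\alpha} \to \infty$. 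Summing the $\sim |K|/h$ block contributions and invoking Khintchine-type slack (Lemma~\ref{lemma: convergence with other scaling sequcnes}) for the choice of $h$ produces the claimed asymptotic with prefactor $F_\xi(a) |K| u^{2/\alpha - 1}\phi(u)$.

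The main obstacle I expect is establishing that the limit defining $F_\xi(a)$ exists in $(0, \infty)$ and coincides with the value produced by the block-level computation. Existence follows from a subadditivity argument: the functional $T \mapsto \log \mathbb{E}[\exp(\sup_{s \in [0,T] \cap a\mathbb{Z}} Z(s))]$ is superadditive in $T$ because $Z$ has stationary increments of the required form, so the limit of its $T$-average exists by Fekete's lemma; finiteness requires a Gaussian concentration bound ensuring the supremum has subgaussian tails, which follows from the drift $-C_\xi|s|^\alpha$ dominating the fluctuations of $Z$ at large $|s|$. The matching between this limiting constant and the one emerging from the conditional computation requires uniform integrability of $\exp(\sup_{s \in a\mathbb{Z} \cap [0, hu^{2/\alpha}]} Z^{(u)}(s))$, where $Z^{(u)}$ is the prelimit of $Z$; this is handled by a standard truncation argument exploiting the sub-Gaussian upper tail of Gaussian suprema over compact grids.
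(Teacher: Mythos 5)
The paper offers no proof of this lemma: it is a quoted preparatory result whose ``proof'' is a pointer to Lemma 12.2.1 of Leadbetter, Lindgren and Rootz\'en. Your sketch is therefore a reconstruction of the cited classical argument rather than an alternative to anything in the paper, and its architecture is the correct, standard one: blocking at an intermediate scale $h$ with $h \rightarrow 0$ and $h u^{2/\alpha} \rightarrow \infty$, the Pickands time rescaling $s = (t - t_k) u^{2/\alpha}$ under which the grid $q\mathbb{Z}$ becomes $a\mathbb{Z}$, conditional convergence of the rescaled field to the drifted field $Z$, and a Bonferroni/double-sum assembly with a Gaussian comparison to decouple blocks. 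Two concrete repairs are needed before this stands as a proof. First, the displayed conditioning identity is wrong as written: conditioning on $\xi(t_k) = u + x/u$ with $x > 0$ makes the conditional probability identically one, so the right-hand side collapses to $\bar{\Phi}(u)(1+o(1))$ and loses the excursions that occur away from $t_k$. You must either condition below the level, $\xi(t_k) = u - x/u$ with $x>0$ (so the limit event is $\sup_s Z(s) > x$ and the weight is $u^{-1}\phi(u)e^{x}$, recovering $\mathbb{E}[\exp(\sup Z)]$ after adding back $\mathbb{P}(\xi(t_k)>u)$), or let $x$ range over all of $\mathbb{R}$. Your surrounding prose is consistent with the correct version, but the display is not.

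Second, and more substantively, your argument for the existence of the limit defining $F_\xi(a)$ does not work. Fekete's lemma applied to $T \mapsto \log \mathbb{E}[\exp(\sup_{[0,T]\cap a\mathbb{Z}} Z(s))]$ would give convergence of $T^{-1}\log\mathbb{E}[\exp(\sup)]$, whereas $F_\xi(a)$ is the limit of $T^{-1}\mathbb{E}[\exp(\sup)]$; these are not interchangeable (if the former limit were any positive constant the latter would diverge). Moreover the claimed superadditivity is unjustified: $Z$ is not stationary --- only its Gaussian part has stationary increments, while the drift $-C_\xi|s|^\alpha$ does not tile additively across abutting blocks (for $\alpha<1$ one has $(T_1+r)^\alpha - T_1^\alpha \leq r^\alpha$, which pushes the inequality the wrong way). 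In the standard treatment the existence, finiteness and positivity of the limit are extracted from the two-sided sandwich produced by the main argument itself --- the elementary bound $\mathbb{E}[\exp(\sup_{[0,T_1+T_2]})] \leq \mathbb{E}[\exp(\sup_{[0,T_1]})] + \mathbb{E}[\exp(\sup_{[T_1,T_1+T_2]})]$ combined with the double-sum lower bound --- not from Fekete's lemma on the logarithm. Finally, note that decoupling distinct blocks via Lemma \ref{lemma: normal comparison lemma} requires $\text{Cov}(\xi(t),\xi(t+s))$ to be bounded away from $1$ for $|s|$ bounded away from $0$ on $K-K$; this does not follow from the purely local expansion in Definition \ref{defintion: stationary normal process} and must be imposed (as it is, implicitly, in the cited source).
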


\begin{proof}
See Lemma 12.2.1 in \cite{leadbetter2012extremes}. 
\end{proof}

\begin{lemma} 
Let $\left \{ B \left ( t \right ) \right \}_{t>0}$ be standard Brownian motion and define the function $F \left ( \cdot \right )$ as follows. 

\begin{equation*}
F \left ( x \right ) =  \lim_{T \rightarrow \infty} \frac{1}{T} \mathbb{E} \left [ \exp \left ( \sup_{s \in [0,T] \cap x \mathbb{Z}} \left \{ B (s) - s/2 \right \} \right )  \right ]
\end{equation*}

(i) For $x > 0$ it holds that $F \left ( x \right ) = p^2_\infty \left ( x \right ) / x$ where $p_\infty \left ( \cdot \right )$ is defined as follows. 
\begin{equation*}
p_\infty \left ( x \right ) = \exp \left ( - \sum_{k=1}^\infty \frac{1}{k} \bar{\Phi} \left ( \sqrt{kx / 4} \right ) \right )
\end{equation*}

(ii) Putting $G \left ( y \right ) = \left ( 1 / y \right ) F \left ( C / y \right )$ for any fixed $C > 0$ it holds that $G \left ( y \right ) \sim 1/2y$ as $y \rightarrow \infty$. 

\label{lemma: F fuction}
\end{lemma}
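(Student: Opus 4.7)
The plan for part (i) is to realise $F(x)$ as the discrete Pickands constant for Brownian motion with drift and to identify it with the square of a one--sided survival probability for the associated Gaussian random walk. Write $S_k = B(kx) - kx/2$, which is a random walk with i.i.d.\ increments $\xi_k \sim \mathcal{N}(-x/2, x)$; note that $\mathbb{E}[e^{\xi_k}] = 1$, so $e^{S_k}$ is a mean--one martingale. Taking $T = nx$ (the general case follows by Khintchine's lemma and monotonicity), set $M_n = \max_{0 \leq k \leq n} S_k$ and decompose by the first index at which the maximum is attained:
\begin{equation*}
\mathbb{E}[e^{M_n}] \;=\; \sum_{k=0}^{n} \mathbb{E}\bigl[e^{S_k}\mathbf{1}\{S_j < S_k \;\forall\, j < k\}\,\mathbf{1}\{S_j \leq S_k \;\forall\, j > k\}\bigr].
\end{equation*}

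The Markov property at time $k$, together with the fact that $\{S_j - S_k\}_{j > k}$ is an independent fresh copy of the walk, shows that the right--hand indicator contributes $\mathbb{P}(S_i \leq 0 \;\forall\, 1 \leq i \leq n-k)$, which decreases to $\pi(x) := \mathbb{P}(S_i \leq 0 \;\forall\, i \geq 1)$ as $n-k \to \infty$. For the factor involving $e^{S_k}$, reverse time on $\xi_1, \ldots, \xi_k$ (so that $\{S_j < S_k\;\forall\, j < k\}$ becomes $\{\tilde S_i > 0 \;\forall\, 1 \leq i \leq k\}$ for a walk $\tilde S$ with the same law) and apply the Esscher transform $\mathrm{d}\tilde{\mathbb{P}}/\mathrm{d}\mathbb{P}|_{\mathcal{F}_k} = e^{\tilde S_k}$. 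Under $\tilde{\mathbb{P}}$ the increments are i.i.d.\ $\mathcal{N}(x/2,x)$, so the factor rewrites as $\tilde{\mathbb{P}}(\tilde S_i > 0 \;\forall\, 1 \leq i \leq k)$ and increases to $\tilde\pi(x)$ as $k \to \infty$. The sign--flip symmetry $\xi \leftrightarrow -\xi$, together with continuity of the Gaussian law (which ensures $\mathbb{P}(S_i = 0) = 0$ for every $i$), gives $\tilde\pi(x) = \pi(x)$. A Cesàro argument, using monotonicity to absorb the boundary terms where $k$ or $n-k$ is small, then yields $\mathbb{E}[e^{M_n}] = (n+1)\pi(x)^2 + o(n)$ and therefore $F(x) = \pi(x)^2/x$.

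It remains to identify $\pi(x)$ with $p_\infty(x)$, which is exactly the Spitzer / Sparre--Andersen identity for a random walk with negative drift,
\begin{equation*}
\mathbb{P}(S_k \leq 0 \;\forall\, k \geq 1) \;=\; \exp\!\Bigl(-\sum_{k=1}^{\infty} \tfrac{1}{k}\,\mathbb{P}(S_k > 0)\Bigr),
\end{equation*}
into which one substitutes $\mathbb{P}(S_k > 0) = \bar\Phi(\sqrt{kx/4})$, using $S_k \sim \mathcal{N}(-kx/2, kx)$. Part (ii) then follows immediately: $G(y) = (1/y) F(C/y) = p_\infty^2(C/y)/C$, and Corollary~3.18 of \cite{kabluchko2007extreme}, already invoked earlier in the paper, asserts $p_\infty^2(C/y) \sim C/(2y)$ as $y \to \infty$, delivering $G(y) \sim 1/(2y)$.

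The main obstacle is the Cesàro step: one must show that the product of the two survival probabilities converges to $\pi(x)^2$ uniformly on the middle range of $k$ while being uniformly bounded on the boundary. Both probabilities lie in $[0,1]$ and are monotone in their respective indices, with common positive limit $\pi(x)$, so this bookkeeping is routine but does need to be carried out carefully enough that the $o(n)$ error coming from indices $k$ or $n-k$ that remain bounded is genuinely negligible after dividing by $nx$.
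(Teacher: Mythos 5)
Your proposal is correct, but it takes a genuinely different route from the paper: the paper's entire proof of this lemma is a citation to Theorem~7.2 and Corollary~3.18 of Kabluchko (2007), whereas you reconstruct part (i) from first principles. Your argument --- decomposing $\mathbb{E}[e^{M_n}]$ at the first index attaining the maximum, factoring the future via the Markov property, handling the past via time reversal plus the Esscher change of measure (legitimate since $\mathbb{E}[e^{\xi_k}]=1$), identifying the two limiting survival probabilities by the sign-flip symmetry and non-atomicity of the Gaussian law, and closing with a Ces\`aro argument and the Spitzer--Sparre-Andersen identity for $\pi(x)$ --- is essentially the classical derivation of the discrete Pickands constant for drifted Brownian motion that underlies the cited result, so it is sound; what it buys is a self-contained proof at the cost of the Ces\`aro bookkeeping you flag, which is indeed routine given that both factors are monotone, $[0,1]$-valued, and share the positive limit $\pi(x)$. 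Two small blemishes worth fixing: the probability $\tilde{\mathbb{P}}(\tilde S_i > 0\ \forall\, 1 \le i \le k)$ \emph{decreases} (not increases) to $\tilde\pi(x)$ as $k\to\infty$, since the events are nested downwards; and the passage from $T = nx$ to general real $T$ is just the sandwich $nx \le T < (n+1)x$ with $T/(nx)\to 1$, not an application of Khintchine's lemma. For part (ii) you and the paper do the same thing, namely invoke Corollary~3.18 of Kabluchko (2007).
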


\begin{proof}
See Theorem 7.2 and Corollary 3.18 respectively in \cite{kabluchko2007extreme}. 
\end{proof}


\subsection{Intermediate results}

\begin{lemma}

Let $\left \{ B \left ( t \right ) \right \}_{t>0}$ be standard Brownian motion and define the process $\left \{ \xi \left ( t \right ) \right \}_{t>0}$ as follows. 

\begin{align*}
& \xi \left ( l \right ) = \left \{ \left ( \frac{1}{p+2} \right ) \sum_{i=0}^{p+1} \binom{p+1}{i}^2 \right \} ^ {-1/2} \sum_{j=0}^{p+1} \left ( -1 \right ) ^ {p+1-j} \binom{p+1}{j} \mathcal{Y}_{l,j} \\ 
& \mathcal{Y}_{l,j} = \left [  B \left ( l + \frac{j+1}{p+2} \right ) - B \left ( l + \frac{j}{p+2} \right ) \right ]
\end{align*}

(i) The process $\left \{ \xi \left ( l \right ) \right \}_{l>0}$ is the continuous time analogue of $\frac{1}{\sigma} D_{l,w}^p \left ( Y \right )$ under Assumption \ref{assumption: gaussian noise} and the null of no change points, in the sense that for a given scale $w$ the following holds.
\begin{equation*}
\left \{ \frac{1}{\sigma} D_{l,w}^p \left ( Y \right ) \mid 1 \leq l \leq n - w \right \} \stackrel{d}{=} \left \{ \xi \left ( l / w \right ) \mid 1 \leq l \leq n - w \right \} 
\end{equation*}

(ii) According to Definition \ref{defintion: stationary normal process} the process is locally stationary with index $\alpha = 1$ and local structure $C_p$ defined as follows. 
\begin{equation*}
C_p = \left ( p + 2 \right ) \left ( 1 + \frac{ \sum_{j=1}^{p+1} \binom{p+1}{j} \binom{p+1}{j-1} }{ \sum_{j=0}^{p+1} \binom{p+1}{j} ^ 2 } \right )
\end{equation*}

\label{lemma: continuous time analogue}
\end{lemma}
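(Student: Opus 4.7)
The plan is to handle the two parts separately, exploiting that all processes under consideration are centred Gaussian, so it suffices to match means (which are trivially zero) and covariances.

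For (i), under Assumption \ref{assumption: gaussian noise} and the null the rescaled observations $Y_t/\sigma$ are i.i.d.\ standard Gaussian, so each block sum $\bar{Y}^j_{l,w}/\sigma$ is a sum of $\tilde w := \lfloor w/(p+2) \rfloor$ such variables. Embedding the discrete partial sums in a standard Brownian motion via $B(k) = \sum_{t=1}^{k} Y_t / \sigma$ at integer $k$ (interpolating with independent Brownian bridges at non-integer times) gives an exact representation of $D_{l,w}^p(Y)/\sigma$ as a normalised linear combination of Brownian increments over disjoint intervals of common length $\tilde w$. Applying the scaling identity $B(w\,\cdot)/\sqrt{w} \stackrel{d}{=} \tilde B(\cdot)$ to standardise time by $w$ then expresses this object in exactly the same form as $\xi(l/w)$, modulo replacement of the width $1/(p+2)$ used to build $\xi$ by the rational width $\tilde w / w$. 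These coincide when $(p+2) \mid w$, in which case the claimed distributional equality is literal; otherwise the statement should be read up to the order-$1/w$ floor discrepancy, which is negligible in the extreme-value analyses that call on this lemma.

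For (ii), I would compute $\text{Cov}(\xi(t),\xi(t+s))$ for small $|s|$ directly. Since $\mathcal{Y}_{t,i}$ and $\mathcal{Y}_{t+s,j}$ are Brownian increments over intervals of common length $1/(p+2)$, their covariance equals the Lebesgue measure of the overlap of those intervals. For $0 < s < 1/(p+2)$ this measure is $1/(p+2)-s$ when $i=j$, equals $s$ when $i = j+1$, and is zero otherwise. Substituting into the definition of $\xi$ and using that the sign factor $(-1)^{p+1-i}(-1)^{p+1-j} = (-1)^{i+j}$ contributes $+1$ on the diagonal and $-1$ on the super-diagonal $i = j + 1$, the constant part of the resulting double sum evaluates to $1$ by construction of the normalisation, and the coefficient of $-s$ simplifies to
\begin{equation*}
(p+2) \left[ 1 + \frac{\sum_{j=0}^{p}\binom{p+1}{j+1}\binom{p+1}{j}}{\sum_{i=0}^{p+1}\binom{p+1}{i}^2} \right].
\end{equation*}
Re-indexing the numerator via $j \mapsto j - 1$ converts this to exactly the formula for $C_p$ in the lemma statement. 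A symmetric argument for $s < 0$ yields $\text{Cov}(\xi(t),\xi(t+s)) = 1 - C_p|s|$ in a neighbourhood of the origin, which in view of Definition \ref{defintion: stationary normal process} establishes that $\xi$ is stationary with index $\alpha = 1$ and local structure $C_p$.

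The main obstacle is really only bookkeeping: in (i) the floor in $\tilde w$ prevents a literal distributional identity for arbitrary $w$, and in (ii) one must verify that off-diagonal cross-terms with $|i - j| \geq 2$ contribute no overlap for sufficiently small $|s|$, and that the alternating sign convention coming from the $(p+1)$-th difference is tracked correctly. Both issues are routine once the interval geometry of the overlapping Brownian increments is drawn.
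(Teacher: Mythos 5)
Your proposal is correct and follows essentially the same route as the paper: part (ii) is the same direct covariance computation via overlaps of Brownian increments (your reindexed binomial sum matches $C_p$ exactly), and part (i) is the coupling-plus-Brownian-scaling argument that the paper compresses to ``can be verified by inspection.'' If anything you are more careful than the paper, since you explicitly flag the $\lfloor w/(p+2)\rfloor$ discrepancy under which the stated distributional identity is only literal when $(p+2)\mid w$ and otherwise holds up to an asymptotically negligible time rescaling.
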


\begin{proof}
Part (i) can be verified by inspection. To show part (ii) note that for all $l > 0$ we have $\mathbb{E} \left ( \xi \left ( l \right ) \right ) = 0$ and $\mathbb{E} \left ( \xi ^ 2 \left ( l \right ) \right ) = 1$, so it remains to calculate the covariance between $\xi \left ( l \right ) $ and $\xi \left ( l + s_l \right )$ for $| s_l | \rightarrow 0$. First, taking $s_l > 0$ we have the following. 
\begin{align*}
\text{Cov} \left ( \xi (l), \xi (l + s_l) \right ) & = \left ( \left ( \frac{1}{p+2} \right ) \sum_{i=0}^{p+1} \binom{p+1}{i}^2 \right ) ^{-1} \sum_{j=0}^{p+1} \sum_{k=0}^{p+1} \binom{p+1}{j} \binom{p+1}{k} \left ( -1 \right ) ^ {j + k} \text{Cov} \left ( \mathcal{Y}_{l,j}, \mathcal{Y}_{l + s_l, k} \right ) \\
& = \left ( \left ( \frac{1}{p+2} \right ) \sum_{i=0}^{p+1} \binom{p+1}{i}^2 \right ) ^ {-1} \left \{ \sum_{j=0}^{p+1} \binom{p+1}{j}^2 \text{Cov} \left ( \mathcal{Y}_{l,j},\mathcal{Y}_{l + s_l,j} \right ) + \dots \right . \\
& \dots + \left . \sum_{j=1}^{p+1} \left ( -1 \right ) \binom{p+1}{j} \binom{p+1}{j-1} \text{Cov} \left ( \mathcal{Y}_{l,j},\mathcal{Y}_{l + s_l,j-1} \right ) \right \}
\end{align*}
Using the fact that $\text{Cov} \left ( B (l_1), B (l_1) \right ) = \min \left ( l_1, l_2 \right )$ gives the following. 
\begin{align*}
& \text{Cov} \left ( \mathcal{Y}_{l,j}, \mathcal{Y}_{l + s_l,j} \right ) = \frac{1}{p+2} - s_l \\ 
& \text{Cov} \left ( \mathcal{Y}_{l,j}, \mathcal{Y}_{l + s_l,j-1} \right ) = s_l
\end{align*}
Therefore for $s_l \rightarrow 0$ with $s_l > 0$ we have the following. 
\begin{equation*}
\text{Cov} \left ( \xi (l), \xi (l + s_l) \right ) =  1 - \left ( p + 2 \right ) \left ( 1 + \frac{ \sum_{j=1}^{p+1} \binom{p+1}{j} \binom{p+1}{j-1} }{ \sum_{j=0}^{ p+1} \binom{p+1}{j} ^ 2 } \right ) s_l
\end{equation*}
The same calculations can be repeated for the case $s_l < 0$ and so ultimately we have that $\text{Cov} \left ( \xi (l), \xi (l + s_l) \right ) = 1 - C_p |s_l|$ as $|s_l| \rightarrow 0$. 

\end{proof}

\begin{lemma}
Consider the problem of testing for the presence of a change point on the interval $I = \left \{ 1, \dots , m \right \}$ where $m$ satisfies $(p+2) \delta \leq m < (p+2) (\delta + 1)$ for some integer $\delta > 1$. If the interval contains a single change point at location $\delta$ with change sizes $\Delta_{0}, \dots, \Delta_{p}$ then the test 
\begin{equation*}
T_{1,m}^{\lambda} = \boldsymbol{1} \left \{ |D_{1,(p+2) \delta}^p \left ( \boldsymbol{Y} \right )| > \lambda \right \}
\end{equation*}
with threshold $\lambda = \widehat{\tau} \times \bar{\lambda}$, for some $\bar{\lambda} > 0$, will detect the change on the event
\begin{equation}
\left \{ L^{\widehat{\tau}}_{\mathcal{G} \left ( W , a \right )} \left ( \boldsymbol{\zeta} \right ) \leq \bar{\lambda} \right \} \cap \left \{ \widehat{\tau} < 2 \tau \right \}
\label{equation: tau bound event}
\end{equation}
as long Assumption \ref{assumption: One prominent jump} is satisfied and $\delta'$ satisfied the inequality
\begin{align*}
\delta > n^{\frac{2p^*}{2p^*+1}} \left ( \frac{16 C_{p,p^*}^2 \tau^2 \bar{\lambda}^2}{\left | \Delta_{p^*} \right |^2}\right ) ^ {\frac{1}{2p^* + 1}},
\end{align*}
where 
\begin{equation*}
C_{p,p^*} = 2^{p^*+2}(p^*+2)\sqrt{\sum_{i=0}^{p+1}\binom{p+1}{i}^2}.
\end{equation*}
\label{lemma: change point detection on an interval}
\end{lemma}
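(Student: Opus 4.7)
The plan is to decompose $D^p_{1,w}(\boldsymbol{Y}) = D^p_{1,w}(\boldsymbol{f}) + D^p_{1,w}(\boldsymbol{\zeta})$ at $w = (p+2)\delta$, lower bound the signal part using Assumption \ref{assumption: One prominent jump}, upper bound the noise part using the assumed event, and then combine via the reverse triangle inequality. A convenient feature of the setup is that with $w = (p+2)\delta$ we have $\lfloor w/(p+2)\rfloor = \delta$ so every chunk in the construction of $D^p_{1,w}$ has exactly $\delta$ points, and since the change occurs at location $\delta$ the $j=0$ chunk lies entirely in the pre-change polynomial $g_1$ while the chunks $j=1,\dots,p+1$ lie entirely in the post-change polynomial $g_2$.

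For the signal part I would exploit the fact that each chunk sum $\bar{g}_2^{\,j} = \sum_{t=j\delta+1}^{(j+1)\delta} g_2(t)$ is, as a function of $j$, a polynomial of degree $p$ (being the sum of $\delta$ shifted evaluations of the degree-$p$ polynomial $g_2$). Hence its $(p+1)$-th finite difference vanishes, and only the deviation of the $j=0$ chunk from its would-be value $\bar{g}_2^{\,0}$ contributes:
\begin{equation*}
\sum_{j=0}^{p+1} (-1)^{p+1-j} \binom{p+1}{j} \bar{f}^{\,j} \;=\; (-1)^{p+1} \sum_{t=1}^{\delta} \bigl[g_1(t) - g_2(t)\bigr] \;=\; (-1)^{p+1} \sum_{j=0}^{p} \frac{(-1)^j \Delta_j}{n^j} \sum_{r=0}^{\delta-1} r^j.
\end{equation*}
For $\delta$ sufficiently large the inner sums satisfy $\sum_{r=0}^{\delta-1} r^j \sim \delta^{j+1}/(j+1)$, so after the normalising factor $\{\delta S\}^{-1/2}$ with $S = \sum_i \binom{p+1}{i}^2$ the dominant $j = p^*$ term contributes of order $|\Delta_{p^*}| \delta^{p^*+1/2}/(n^{p^*}\sqrt{S}(p^*+1))$. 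Assumption \ref{assumption: One prominent jump} applied with $\rho = \delta/n$ bounds each non-dominant term in absolute value by $C_{p^*} |\Delta_{p^*}|\,\delta^{p^*+1}/n^{p^*}$; summing the at-most-$p$ such terms and using $p \cdot C_{p^*} = 1/(2^{p^*+2}(p^*+1))$ shows the non-dominant contribution is at most a fraction $1/2^{p^*+2}$ of the dominant one. Collecting constants yields a clean lower bound of the form $|D^p_{1,w}(\boldsymbol{f})| \geq |\Delta_{p^*}|\,\delta^{p^*+1/2}/(n^{p^*}\,C_{p,p^*}/4)$ with $C_{p,p^*} = 2^{p^*+2}(p^*+2)\sqrt{S}$.

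For the noise part, the event $\{L^{\widehat{\tau}}_{\mathcal{G}(W,a)}(\boldsymbol{\zeta}) \leq \bar{\lambda}\} \cap \{\widehat{\tau} < 2\tau\}$ gives $|D^p_{1,w}(\boldsymbol{\zeta})| \leq \widehat{\tau}\bar{\lambda} < 2\tau\bar{\lambda}$ (treating the specific $(1,w)$ as controlled by the same uniform bound). Then by the reverse triangle inequality,
\begin{equation*}
|D^p_{1,w}(\boldsymbol{Y})| \;\geq\; |D^p_{1,w}(\boldsymbol{f})| - |D^p_{1,w}(\boldsymbol{\zeta})| \;\geq\; |D^p_{1,w}(\boldsymbol{f})| - 2\tau\bar{\lambda},
\end{equation*}
so rejection at threshold $\widehat{\tau}\bar{\lambda} < 2\tau\bar{\lambda}$ is guaranteed once $|D^p_{1,w}(\boldsymbol{f})| > 4\tau\bar{\lambda}$. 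Substituting the Step-1 bound, this reduces exactly to $\delta^{p^*+1/2}|\Delta_{p^*}|/n^{p^*} > 4 C_{p,p^*}\tau\bar{\lambda}$, which rearranged is precisely the hypothesis $\delta > n^{2p^*/(2p^*+1)}\bigl(16 C_{p,p^*}^2 \tau^2 \bar{\lambda}^2/|\Delta_{p^*}|^2\bigr)^{1/(2p^*+1)}$.

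The main obstacle is the signal lower bound in Step 1: one has to keep enough explicit control over constants to absorb the non-dominant polynomial contributions without letting them cancel the dominant $p^*$ term. This is where the exact form of the constant $C_{p^*} = 1/(2^{p^*+2}(p^*+1)p)$ in Assumption \ref{assumption: One prominent jump} is crucial; a looser choice would force us to pay a larger multiplicative constant in the final condition on $\delta$. Everything else (the vanishing of the $(p+1)$-th difference on one side of the change, the Bernoulli--Faulhaber asymptotics for $\sum r^j$, and the reverse triangle inequality) is routine once this book-keeping is set up.
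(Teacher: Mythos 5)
Your proposal is correct and follows essentially the same route as the paper's proof: triangle-inequality decomposition into signal and noise, annihilation of the degree-$p$ polynomial part by the $(p+1)$-th difference of chunk sums, a Faulhaber-type lower bound on the dominant $j=p^*$ term, Assumption \ref{assumption: One prominent jump} to absorb the non-dominant terms, and the bound $|D^p(\boldsymbol{\zeta})|+\lambda < 4\tau\bar{\lambda}$ on the stated event. The only (immaterial) differences are that the paper derives a non-asymptotic lower bound $\frac{1}{\delta}\sum_{t=1}^{\delta}(1-t/\delta)^{p^*}\geq 2^{-(p^*+1)}/(p^*+1)$ rather than invoking the asymptotic form of $\sum_r r^j$, and your stated signal lower bound carries an extra factor of $4$ that does not quite match your own final reduction, though the lemma's conclusion follows either way.
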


\begin{proof}
By the linearity of the difference operator and the triangle inequality the change will be detected if the following occurs. 
\begin{equation}
\left | D_{1,m}^p \left ( \boldsymbol{f} \right ) \right | > \left | D_{1,m}^p \left ( \boldsymbol{\zeta} \right ) \right | + \lambda 
\label{equation: D trinagle inequality bound}
\end{equation}
Moreover on (\ref{equation: tau bound event}) we must have that $\left | D_{1,m}^p \left ( \boldsymbol{\zeta} \right ) \right | + \lambda < 4 \tau \bar{\lambda}$. Writing $B_k$ for the $k$-th Bernoulli number we have the following by Faulhaber's formula for any integers $p > 0$ and $\delta > 1$. 
\begin{align}
\frac{1}{\delta'} \sum_{t=1}^{\delta'} \left ( 1 - t / \delta' \right ) ^ p & = (\delta')^{-(p+1)} \sum_{s=1}^{\delta'-1} s^p \nonumber \\ 
& = \left ( \frac{1}{p+1} \right ) \left ( \frac{\delta'-1}{\delta'} \right ) ^ {p+1} \sum_{k=0}^{p} \binom{p+1}{k} B_k \left ( \delta' - 1 \right ) ^ {-k} \nonumber \\ 
& \geq \left ( \frac{1}{p+1} \right ) \left ( \frac{\delta'-1}{\delta'} \right ) ^ {p+1} \nonumber \\ 
& \geq \frac{1}{2^{p+1}(p+1)}.
\label{equation: g1 bound}
\end{align}
Using the above along with Assumption \ref{assumption: One prominent jump} and the fact that the test statistic (\ref{equation: local test statistic}) is invariant to the addition of arbitrary degree $p$ polynomials we have the following.  
\begin{align}
\left | D_{1,m}^p \left ( \boldsymbol{f} \right ) \right | & = \left \{ \delta \sum_{i=0}^{p+1} \binom{p+1}{i}^2 \right \} ^ {-\frac{1}{2}}\left | \sum_{j=0}^p \Delta_j \sum_{t=1}^\delta \left ( \frac{t}{n} - \frac{\delta}{n} \right )^j \right | \nonumber \\ 
& \geq \sqrt{\delta} \left | \Delta_{p^*} \right | \left ( \frac{\delta}{n} \right ) ^ {p^*} \left [ \frac{\frac{1}{\delta} \sum_{t=1}^\delta \left ( 1 - \frac{t}{\delta} \right ) ^ {p^*}}{\sqrt{\sum_{i=0}^{p+1} \binom{p+1}{i}^2}} \right ] - \sum_{\substack{0 \leq j \leq p \\ j \neq p^*}} \sqrt{\delta} \left | \Delta_j \right | \left ( \frac{\delta}{n} \right ) ^ j \left [ \frac{\frac{1}{\delta} \sum_{t=1}^\delta \left ( 1 - \frac{1}{\delta} \right )^j}{{\sqrt{\sum_{i=0}^{p+1} \binom{p+1}{i}^2}}} \right ] \nonumber \\ 
& \geq C_{p,p^*}^{-1} \left | \Delta_{p^*} \right | \delta^{\frac{2p^*+1}{2}} n^{-p^*} \label{equation: D f bound}
\end{align}
Therefore combining (\ref{equation: D trinagle inequality bound}) and (\ref{equation: D f bound}) we have that on the event (\ref{equation: tau bound event}) the change will be detected if $C_{p,p^*}^{-1} \left | \Delta_{p^*} \right | \delta^{\frac{2p^*+1}{2}} n^{-p^*} > 4 \tau \bar{\lambda}$, and the desired result follows by rearranging. 
\end{proof}

\begin{theorem}
Put $w = \left \lfloor c \log (n) \right \rfloor$ for some constant $c > 0$ and introduce maximum of the local test statistics (\ref{equation: local test statistic}) appropriately standardised and restricted to scales $w$ as follows. 
\begin{equation*}
M_{c\log(n)}^\sigma \left ( \mathbf{Y} \right ) = \max \left \{ \frac{1}{\sigma} D_{l,w}^p \left ( \mathbf{Y} \right ) \mid 1 \leq l \leq n - w \right \}    
\end{equation*}
Then under Assumption \ref{assumption: gaussian noise} and the null of no change points for any fixed $x \in \mathbb{R}$ the following holds, where $\mathfrak{a}_{n}$ and $\mathfrak{b}_{n}$ are defined as in Theorem \ref{theorem: Gaussian tightness of normalized maximum}. 
\begin{equation*}
\mathbb{P} \left ( \mathfrak{a}_{n} M_{c\log(n)}^\sigma \left ( \mathbf{Y} \right ) - \mathfrak{b}_{n} \leq x \right ) \sim \exp \left ( - \left ( \frac{2 C_p}{c} \right ) F \left ( \frac{2 C_p}{c} \right  ) e^{-x} \right ) 
\end{equation*}
\label{theorem: distribution on optimal scales}
\end{theorem}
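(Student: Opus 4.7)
The plan is to reduce the problem to a discrete high-excursion probability for a stationary Gaussian process on a single logarithmic scale, apply the discrete version of Pickands' lemma, and then identify the resulting Pickands-type constant explicitly via the function $F$ from Lemma \ref{lemma: F fuction}.

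First, by Lemma \ref{lemma: continuous time analogue}(i), under Assumption \ref{assumption: gaussian noise} and the null of no change points one has $M^\sigma_{c\log(n)}(\boldsymbol Y) \stackrel{d}{=} \max_{1 \le l \le n-w} \xi(l/w)$, where by part (ii) the process $\xi$ is centered, stationary, unit-variance Gaussian with index $\alpha = 1$ and local structure $C_p$. Thus the quantity of interest is the supremum of $\xi$ over the grid $\{l/w : 1 \le l \le n-w\}$ of spacing $q = 1/w \sim 1/(c\log n)$ inside a compact set $K \subset \mathbb{R}^{+}$ of Lebesgue measure $|K| \sim n/(c\log n)$.

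Second, set $u_n = (x + \mathfrak{b}_n)/\mathfrak{a}_n$. A direct Taylor expansion yields $u_n^2 = 2\log n - \log\log n - 2\log(2\sqrt{\pi}) + 2x + o(1)$, hence $\phi(u_n) \sim \sqrt{2\log n}\,e^{-x}/n$ and consequently $|K|\,u_n\,\phi(u_n) \to (2/c)\,e^{-x}$. Moreover $u_n^{2/\alpha} q = u_n^2/w \to 2/c$, so Lemma \ref{lemma: high excursion probability - discrete} applies with $C_\xi = C_p$ and $a = 2/c$, giving
\begin{equation*}
\mathbb{P}\!\left(\max_{1 \le l \le n-w} \xi(l/w) > u_n\right) \sim F_\xi(2/c) \cdot (2/c)\, e^{-x}.
\end{equation*}

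Third, I identify $F_\xi(2/c)$ with $F$ from Lemma \ref{lemma: F fuction}. Plugging $\alpha = 1$, $C_\xi = C_p$ into the definition of the auxiliary process gives $Z(s) = \sqrt{2C_p}\, B(s) - C_p s$, and by Brownian self-similarity the change of variable $u = 2C_p s$ puts $Z$ into the canonical form $B(u) - u/2$. This maps the grid $a\mathbb{Z}$ to $(2C_p\, a)\mathbb{Z}$ and rescales the interval $[0,T]$ by the factor $2C_p$; substituting into the defining limit for $F_\xi$ and matching with Lemma \ref{lemma: F fuction}(i) yields the product $F_\xi(2/c)\cdot(2/c) = (2C_p/c)\,F(2C_p/c)$, which is the claimed intensity.

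Finally, to upgrade the asymptotic exceedance estimate $\mathbb{P}(M^\sigma_{c\log(n)}(\boldsymbol Y) > u_n) \to \lambda := (2C_p/c)\,F(2C_p/c)\,e^{-x}$ to the Gumbel-type tail $\mathbb{P}(M^\sigma_{c\log(n)}(\boldsymbol Y) \le u_n) \to e^{-\lambda}$, I would use an approximate Poisson clumping argument: partition the grid into long disjoint blocks separated by short buffer zones, invoke Berman's comparison lemma (Lemma \ref{lemma: normal comparison lemma}) together with the fact that $\text{Cov}(\xi(l_1/w), \xi(l_2/w)) = 0$ once $|l_1 - l_2| \ge w$ (the local sums become disjoint) to decouple block maxima, and then apply the Poisson approximation to the total exceedance count, whose mean converges to $\lambda$ by the preceding step. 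The main obstacle is the bookkeeping in the change-of-variables step identifying $F_\xi$ with $F$, since both the factor in front of $F$ and the argument of $F$ must line up with the precise Brownian-scaling exponents; the Poisson-clumping step, while requiring some care in choosing block and buffer sizes, is otherwise standard for locally stationary Gaussian fields with rapidly decaying covariance.
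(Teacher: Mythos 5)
Your proposal follows essentially the same route as the paper's proof: reduce to the stationary process $\xi$ of Lemma \ref{lemma: continuous time analogue}, apply the discrete Pickands lemma (Lemma \ref{lemma: high excursion probability - discrete}) at the level $u_n^{2}q \to 2/c$, identify the resulting constant with $F$ from Lemma \ref{lemma: F fuction}, and Poissonize by splitting the index set into long blocks separated by short buffers that are decoupled via Berman's comparison (Lemma \ref{lemma: normal comparison lemma}) together with the exact independence of statistics on disjoint windows --- this is precisely the paper's Steps 1--3, with the blocking merely performed after rather than before the Pickands computation. The one step you should recheck is the change-of-variables bookkeeping: writing $Z(s) = \sqrt{2C_p}\,B(s) - C_p s$ and substituting $u = 2C_p s$ sends the grid $a\mathbb{Z}$ with $a = 2/c$ to $(4C_p/c)\mathbb{Z}$ and rescales the prefactor $1/T$ by $2C_p$, so your own substitution yields $F_\xi(2/c)\cdot(2/c) = (4C_p/c)F(4C_p/c)$ rather than the $(2C_p/c)F(2C_p/c)$ you assert; the paper's Step 3 states the latter constant without displaying the substitution, so you should resolve this discrepancy explicitly rather than matching the target expression by fiat.
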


\begin{proof}
Omitting dependence on $x$ introduce the following notation. 
\begin{equation*}
\mathfrak{u}_n = \sqrt{2 \log (n)} + \left ( -\frac{1}{2} \log \log (n) - \log \left ( 2 \sqrt{\pi} \right ) + x \right ) / \sqrt{2 \log (n)}
\end{equation*}
For some $\rho \in \left ( 0 , 1 \right )$ we decompose the index set $\left \{ 1, \dots, n \right \}$ into disjoint blocks $A_0, B_0, A_1, B_1, \dots$ respectively of size $w$ and $w^\rho$ defined as follows.
\begin{align*}
& A_i = \left \{ l \mid i \left (w + w^\rho \right ) <  l \leq (i+1) w + i w^\rho  \right \} \\ 
& B_i = \left \{ l \mid (i+1) w + i w^\rho < l \leq (i+1)(w + w^\rho) \right \}
\end{align*}
The proof proceeds in three steps. 

\textbf{STEP 1:} we first show that the behaviour of small blocks is asymptotically unimportant for the maximum. Putting $\mathcal{B}_n = \cup_i B_i$ and using the fact $|\mathcal{B}_n| \sim n w^\rho / (w + w^\rho)$ and $\mathfrak{u}_n^2 = 2 \log (n) - \log \log (n) + \mathcal{O} \left ( 1 \right )$ the following holds. 
\begin{align*}
\mathbb{P} \left ( \max_{l \in \mathcal{B}_n} \left \{ \frac{1}{\sigma} D_{l,w}^p \left ( \mathbf{Y} \right ) \right \} > \mathfrak{u}_n \right ) & \leq \sum_{l \in \mathcal{B}_n} \mathbb{P} \left ( \frac{1}{\sigma} D_{l,w}^p \left ( \mathbf{Y} \right ) > \mathfrak{u}_n \right ) \\ 
& = \left | \mathcal{B}_n \right | \bar{\Phi} \left ( \mathfrak{u}_n \right ) \\
& \cleq \frac{w^\rho}{w + w^\rho} 
\end{align*}

\textbf{STEP 2:} next we show that the any dependence between larger blocks is asymptotically unimportant for the the maximum. Write 
\begin{equation*}
\Lambda_{l_1, l_2} = \text{Cov} \left ( \frac{1}{\sigma} D_{l_1,w}^p \left ( \mathbf{Y} \right ), \frac{1}{\sigma} D_{l_2,w}^p \left ( \mathbf{Y} \right ) \right ),    
\end{equation*}
and let $\sigma^{-1}\tilde{D}_{l,w}^p \left ( \mathbf{Y} \right ) $ be random variables with the same marginal distributions as $\sigma^{-1}D_{l, w}^p \left ( \mathbf{Y} \right ) $ and covariances as shown below. 
\begin{equation*}
\tilde{\Lambda}_{l_1,l_2} = 
\begin{cases}
\Lambda_{l_1,l_2} & l_1 \in A_{i_1}, l_2 \in A_{i_2} \text{ with } i_1 = i_2 \\
0 & \text{else}
\end{cases}
\end{equation*}
For any $l_1, l_2$ write $j_{1,2} = \left | \left \{ l_1, \dots, l_1 + w - 1  \right \} \cap \left \{ l_2, \dots, l_2 + w - 1  \right \} \right |$ and put $\Lambda_{l_1, l_2} = \Lambda_{j_{1,2}}$. Writing $\mathcal{A}_n = \cup_i A_i$ and using Lemma \ref{lemma: normal comparison lemma} we have the following. 
\begin{align}
& \left | \mathbb{P} \left ( \max_{l \in \mathcal{A}_n} \left \{ \frac{1}{\sigma} D_{l, w}^p \left ( \mathbf{Y} \right ) \right \} \leq \mathfrak{u}_n \right ) - \mathbb{P} \left ( \max_{l \in \mathcal{A}_n} \left \{ \frac{1}{\sigma} \tilde{D}_{l, w}^p \left ( \mathbf{Y} \right ) \right \} \leq \mathfrak{u}_n \right ) \right | \label{equation: normal comparison bound} \\
& \leq \frac{1}{2 \pi} \sum_{\substack{l_1 \in A_{i},  l_2 \in A_{j} \\ i \neq j}} \left | \Lambda_{l_1, l_2} - \tilde{\Lambda}_{l_1, l_2} \right | \left ( 1 - \Lambda_{l_1, l_2}^2 \right ) ^ {-1/2} \exp \left ( - \frac{\mathfrak{u}_n^2}{ 1 + \Lambda_{l_1, l_2}} \right ) \nonumber \\ 
& \cleq \sum_{i = 0}^{ |\mathcal{A}_n| / |A_0|} \sum_{\substack{l_1 \in A_i \\ l_2 \in A_{i+1}}} \left | \Lambda_{l_1, l_2} - \tilde{\Lambda}_{l_1, l_2} \right | \left ( 1 - \Lambda_{l_1, l_2}^2 \right ) ^ {-1/2} \exp \left ( - \frac{\mathfrak{u}_n^2}{ 1 + \Lambda_{l_1, l_2}} \right ) \nonumber \\ 
& \cleq \frac{\left | \mathcal{A}_n \right |}{|A_0|} \sum_{l = 1}^{|A_0|} \sum_{j=1}^l \Lambda_j \left ( 1 - \Lambda_j^2 \right )^{-1/2} \exp \left ( - \frac{2 \log (n) - \log \log (n)}{ 1 + \Lambda_j} \right ) \nonumber \\ 
& \cleq \log (n) \frac{\left | \mathcal{A}_n \right |}{|A_0|} \sum_{l = 1}^{|A_0|} \sum_{j=1}^l \Lambda_j \left ( 1 - \Lambda_j^2 \right )^{-1/2} \exp \left ( - \frac{2 \log (n)}{1 + \Lambda_j} \right ) \nonumber
\end{align}
Note that for some fixed $K > 0$ depending on $p$ it must hold that $\Lambda_{j} \leq \min\left ( j K,  w - w^\rho \right ) / w$. Therefore the first term after the double sum can be bounded as follows. 
\begin{align}
\Lambda_j \left ( 1 - \Lambda_j^2 \right )^{-1/2} & \leq \Lambda_j \left ( 1 - \Lambda_j \right )^{-1/2} \nonumber \\
& \leq \min \left (jK, w - w^\rho \right ) / \sqrt{\left ( w - \min \left (jK, w - w^\rho \right ) \right ) w} \nonumber \\
& \leq \min \left (jK, w - w^\rho \right ) / \sqrt{w}
\label{equation: double lambda bound}
\end{align}
For the exponential term put $2 / (1 + \Lambda_j) = 1 + \delta_j$. The following holds. 
\begin{align}
\delta_j & = \left ( 1 - \Lambda_j \right ) / \left ( 1 + \Lambda_j \right ) \nonumber \\
& \geq \left ( w - \min \left ( jK, w - w^\rho \right ) \right ) / \left ( w + \min \left ( jK, w - w^\rho \right ) \right ) \nonumber \\ 
& \geq \left ( w - \min \left ( jK, w - w^\rho \right ) \right ) / 2 w
\label{equation: exponential term bound}
\end{align}
Therefore substituting (\ref{equation: double lambda bound}) and (\ref{equation: exponential term bound}) into (\ref{equation: normal comparison bound}) we obtain the following. 
\begin{align}
(\ref{equation: normal comparison bound}) & \cleq \frac{\sqrt{\log(n)}}{n} \frac{|\mathcal{A}_n|}{|A_0|} \sum_{j=1}^{l} \min \left ( jK, w -w^\rho \right ) \left ( n ^{\frac{1}{2 w}} \right )^{- (w - \min \left ( jK, w -w^\rho \right ))} \nonumber \\ 
& = \frac{\sqrt{\log(n)}}{n} \frac{|\mathcal{A}_n|}{|A_0|} \left \{ \sum_{l=1}^{\left \lfloor |A_0| / K \right \rfloor} \sum_{j=1}^l jK \left ( n ^{\frac{1}{2 w}} \right )^{- (w - jK)} + \dots \right . \nonumber \\ 
& \dots + \left . \sum_{l= \left \lfloor |A_0| / K \right \rfloor + 1}^{|A_0|} \sum_{j=1}^l \left ( w -w^\rho \right ) \left ( n ^{\frac{1}{2 w}} \right )^{w^\rho} \right \}
\label{equation: split sums and bound}
\end{align}
The first sum in (\ref{equation: split sums and bound}) can be bounded as follows.  
\begin{align}
\sum_{l=1}^{\left \lfloor |A_0| / K \right \rfloor} \sum_{j=1}^l jK \left ( n ^{\frac{1}{2 w}} \right )^{- (w - jK)} & \cleq n^{-1/2} \int_{1}^{\left \lfloor |A_0| / K \right \rfloor + 1} \int_{1}^{y+1} x \left ( n^{\frac{1}{2w}} \right )^{Kx} \mathrm{d} x \mathrm{d} y \nonumber \\ 
& \cleq w n^{-w^{-(1-\rho)}/2}
\label{equation: bound on first sum}
\end{align}
The second sum in (\ref{equation: split sums and bound}) can be bounded as follows.  
\begin{align}
\sum_{l= \left \lfloor |A_0| / K \right \rfloor + 1}^{|A_0|} \sum_{j=1}^l \left ( w -w^\rho \right ) \left ( n ^{\frac{1}{2 w}} \right )^{w^\rho} & \cleq w n^{-w^{-(1-\rho)}/2} \sum_{l= \left \lfloor |A_0| / K \right \rfloor + 1}^{|A_0|} \left ( l \right ) \nonumber \\
& \cleq w^3 n^{-w^{-(1-\rho)}/2}
\label{equation: bound on second sum}
\end{align}
Finally plugging (\ref{equation: bound on first sum}) and (\ref{equation: bound on second sum}) into (\ref{equation: normal comparison bound}) and using the fact that $|\mathcal{A}_n| / | A_0 | \sim n / (w + w^\rho)$ we obtain the following for some $C > 0$ depending on $\rho$ as long as $n$ is sufficiently large. 
\begin{align*}
(\ref{equation: normal comparison bound}) & \cleq \frac{\sqrt{\log (n)}}{n} \frac{|\mathcal{A}_n|}{|A_0|} \left \{ n^{-w^{-(1-\rho)}/2} \left ( w + w^3 \right ) \right \} \\ & \cleq \log^{5/2} (n) n^{-w^{-(1-\rho)}/2} \\
& \cleq \exp \left ( -C \log ^ \rho (n) \right )
\end{align*}

\textbf{STEP 3:} we now prove Theorem \ref{theorem: distribution on optimal scales}. Using Lemma \ref{lemma: high excursion probability - discrete} and part (i) of Lemma \ref{lemma: continuous time analogue} and noting that $\mathfrak{u}_n^2 / w \sim 2 / c$ gives the following for any $i = 0, \dots, \left | \mathcal{A}_n \right | -1$. 
\begin{equation}
\mathbb{P} \left ( \max_{l \in A_i} \left \{ \frac{1}{\sigma} D_{l,w}^p \left ( \mathbf{Y} \right ) \right \} > \mathfrak{u}_n \right ) \sim \left ( \frac{w}{n} \right ) \left ( \frac{2 C_p}{c} \right ) F \left ( \frac{2 C_p}{c} \right  ) e^{-x}
\label{equation: M log n limit}
\end{equation}
The following inequality is evident.
\begin{equation*}
\mathbb{P} \left ( M_{c\log (n)}^\sigma \left ( \mathbf{Y} \right ) \leq \mathfrak{u}_n \right ) \leq  \mathbb{P} \left ( \max_{l \in \mathcal{A}_n}  \left \{ \frac{1}{\sigma} D_{l,w}^p \left ( \mathbf{Y} \right ) \right \} \leq \mathfrak{u}_n \right ) 
\end{equation*}
Therefore (\ref{equation: M log n limit}), the results of step 2, and that $|\mathcal{A}_n| / |A_0| \sim n / w$ imply the following. 
\begin{align*}
\lim_{n \rightarrow \infty} \mathbb{P} \left ( M_{c \log (n)}^\sigma \left ( \mathbf{Y} \right ) \leq \mathfrak{u}_n \right ) & \leq \lim_{n \rightarrow \infty} \left \{ \mathbb{P} \left ( \max_{l \in \mathcal{A}_n} \left \{ \frac{1}{\sigma} \tilde{D}_{l , w}^p \left ( \mathbf{Y} \right ) \right \} \leq \mathfrak{u}_n \right ) + \mathcal{O} \left ( \exp \left ( -C\log^\rho (n) \right ) \right ) \right \} \\
& = \lim_{n \rightarrow \infty} \left ( 1 - \left ( \frac{w}{n} \right ) \left ( \frac{2 C_p}{c} \right ) F \left ( \frac{2 C_p}{c} \right  ) e^{-x} \right ) ^ {|\mathcal{A}_n| / |A_0|} \\
& = \exp \left ( - \left ( \frac{2 C_p}{c} \right ) F \left ( \frac{2 C_p}{c} \right  ) e^{-x} \right ) 
\end{align*}
Going the other way the following inequality is also evident. 
\begin{equation*}
\mathbb{P} \left ( M_{c\log (n)}^\sigma \left ( \mathbf{Y} \right ) \leq \mathfrak{u}_n \right ) \geq  \mathbb{P} \left ( \max_{l \in \mathcal{A}_n}  \left \{ \frac{1}{\sigma} D_{l,w}^p \left ( \mathbf{Y} \right ) \right \} \leq \mathfrak{u}_n \right ) - \mathbb{P} \left ( \max_{l \in \mathcal{B}_n}  \left \{ \frac{1}{\sigma} D_{l,w}^p \left ( Y \right ) \right \} > \mathfrak{u}_n \right )
\end{equation*}
Using (\ref{equation: M log n limit}) and the results of Steps 1 and 2 gives the following. 
\begin{align*}
\lim_{n \rightarrow \infty} \mathbb{P} \left ( M_{c \log (n)}^\sigma \left ( \mathbf{Y} \right ) \leq \mathfrak{u}_n \right ) & \geq \lim_{n \rightarrow \infty} \left \{ \mathbb{P} \left ( \max_{l \in \mathcal{A}_n} \left \{ \frac{1}{\sigma} \tilde{D}_{l , w}^p \left ( Y \right ) \right \} \leq \mathfrak{u}_n \right ) \dots \right . \\
& \left . \dots - \mathcal{O} \left ( \exp \left ( -C\log^\rho (n) \right ) \right ) - \mathcal{O} \left ( \frac{w^\rho}{w+w^\rho} \right ) \right \} \\
& = \lim_{n \rightarrow \infty} \left ( 1 - \left ( \frac{w}{n} \right ) \left ( \frac{2 C_p}{c} \right ) F \left ( \frac{2 C_p}{c} \right  ) e^{-x} \right ) ^ {|\mathcal{A}_n| / |A_0|} \\
& = \exp \left ( - \left ( \frac{2 C_p}{c} \right ) F \left ( \frac{2 C_p}{c} \right  ) e^{-x} \right )
\end{align*}
Therefore, the theorem is proved. 
\end{proof}

\subsection{Proof of Theorem \ref{theorem: Gaussian tightness of normalized maximum}}

\begin{proof}
Given the result in part (i), part (ii) follows immediately from Lemma \ref{lemma: convergence with other scaling sequcnes}. For the proof of part (i) write $k_n = \left \lfloor \log_a (W) \right \rfloor$ and for some $A > 0$ introduce the restrictions of the $a$-adic grid defined in (\ref{equation: a-adic grid}) to scales no larger than $Wa^A$. 
\begin{align*}
& \mathcal{G}_{-} \left ( A \right ) = \left \{ \left ( l , w \right ) \in \mathbb{N}^2 \mid w \in \mathcal{W}_{-} (A) , 1 \leq l \leq n - w \right \} \\
& \mathcal{W}_{-} \left ( A \right ) = \left \{ w = \left \lfloor a^k \right \rfloor \mid k_n \leq k \leq k_n + A \right \}
\end{align*}
Introduce also the restriction of (\ref{equation: a-adic grid}) to scales strictly larger than $Wa^A$. 
\begin{align*}
& \mathcal{G}_{+} \left ( A \right ) = \left \{ \left ( l , w \right ) \in \mathbb{N}^2 \mid w \in \mathcal{W}_{+} (A) , 1 \leq l \leq n - w \right \} \\
& \mathcal{W}_{+} \left ( A \right ) = \left \{ w = \left \lfloor a^k \right \rfloor \mid k_n + A < k \leq \left \lfloor \log_a (n / 2) \right \rfloor \right \} 
\end{align*}
The proof proceeds in four steps. 

\textbf{STEP 1:} we first show that the behaviour of the tests statistic on large scales is asymptotically unimportant for the maximum. Making use of lemma \ref{lemma: high excursion probability - continuous} we have the following. 
\begin{align*}
& \mathbb{P} \left ( \max_{(l,w) \in \mathcal{G}_{+} (A)} \left \{ \frac{1}{\sigma} D_{l,w}^p \left ( \mathbf{Y} \right ) \right \} > \mathfrak{u}_n \right ) \\
& \leq \sum_{k = k_n + A} ^ {\left \lfloor \log_{a} (n / 2) \right \rfloor} \sum_{i=0}^{\left \lfloor n / a^k \right \rfloor -1} \mathbb{P} \left ( \max \left \{ \frac{1}{\sigma} D_{l, \left \lfloor a^k \right \rfloor}^p \left ( \mathbf{Y} \right ) \mid i \times \left \lfloor a^k \right \rfloor < l \leq (i+1) \times \left \lfloor a^k \right \rfloor \right \} > \mathfrak{u}_n \right ) \\ 
& \leq \sum_{k = k_n + A} ^ {\left \lfloor \log_{a} (n / 2) \right \rfloor} \left ( \frac{n}{a^k} \right ) \mathbb{P} \left ( \sup_{t \in [0,1)} \left \{ \xi \left ( t \right ) \right \} > \mathfrak{u}_n \right ) \\
& \cleq \sum_{k = k_n + A} ^ {\left \lfloor \log_{a} (n / 2) \right \rfloor} \left ( \frac{n}{a^k} \right ) \mathfrak{u}_n e^{-\mathfrak{u}_n^2/2} \\ 
& \cleq \frac{a^{-A}}{1 - a^{-1}}
\end{align*}
Finally, sending $A \rightarrow \infty$ the claim is proved. 

\textbf{STEP 2:} next we show that for any fixed $A$ the dependence between maxima occurring over different scales in $\mathcal{W}_{-} (A)$ is asymptotically unimportant for the overall maximum. Write
\begin{equation*}
\Lambda_{l_1,w_1,l_2,w_2} = \text{Cov} \left ( \frac{1}{\sigma} D_{l_1,w_1}^p ( \mathbf{Y} ), \frac{1}{\sigma} D_{l_2,w_2}^p ( \mathbf{Y} ) \right ),
\end{equation*}
and let $\sigma^{-1} \tilde{D}_{l,w}^{(p)} \left ( \mathbf{Y} \right )$ be random variables with the same marginal distribution as $\sigma^{-1} D_{l,w}^p \left ( \mathbf{Y} \right )$ and covariance as shown below. 
\begin{equation*}
\tilde{\Lambda}_{l_1, l_2, w_1, w_2} = 
\begin{cases}
\Lambda_{l_1, l_2, w_1, w_2}  & \text{if } w_1 = w_2 \\
0 & \text{else} 
\end{cases}
\end{equation*}
Note that for each $a > 1$ there will be a $\Lambda_a \in (0,1)$ depending only on $a$ such that for any $w_1 \neq w_2$ and all permissible $l_1, l_2$ it holds that $\Lambda_{l_1,w_1,l_2,w_2} \leq \Lambda_a$. Therefore using Lemma \ref{lemma: normal comparison lemma} we have the following. 
\begin{align*}
& \left | \mathbb{P} \left ( \max_{(l,w) \in \mathcal{G}_{-} (A)} \left \{ \frac{1}{\sigma} D_{l,w}^p \left ( \mathbf{Y} \right ) \right \} \leq \mathfrak{u}_n \right ) - 
\mathbb{P} \left ( \max_{(l,w) \in \mathcal{G}_{-} (A)} \left \{ \frac{1}{\sigma} \tilde{D}_{l,w}^p \left ( \mathbf{Y} \right ) \right \} \leq \mathfrak{u}_n \right ) \right | \\
& \leq \frac{1}{2\pi} \sum_{\substack{w_1, w_2 \in \mathcal{W}_{-} \left ( A \right ) \\ w_1 \neq w_2}} \sum_{\substack{1 \leq l_1 \leq n - w_1 \\ 1 \leq l_2 \leq n - w_2}} \left | \Lambda_{\substack{l_1 , w_1 \\ l_2, w_2}} - \tilde{\Lambda}_{\substack{l_1 , w_1 \\ l_2, w_2}} \right | \left ( 1 - \Lambda^2_{\substack{l_1, w_1 \\ l_2, w_2}} \right ) ^ {-1/2} \exp \left ( - \frac{\mathfrak{u}_n^2}{1 + \Lambda^2_{\substack{l_1, w_1 \\ l_2, w_2}}} \right ) \\ 
& \cleq \sum_{\substack{w_1, w_2 \in \mathcal{W}_{-} \left ( A \right ) \\ w_1 \neq w_2}} \sum_{\substack{1 \leq l_1 \leq n - w_1 \\ 1 \leq l_2 \leq n - w_2 \\ |l_1 - l_2| < \max \left ( w_1, w_2 \right )}} \Lambda_{\substack{l_1 , w_1 \\ l_2, w_2}} \left ( 1 - \Lambda_{\substack{l_1 , w_1 \\ l_2, w_2}}^2 \right ) ^ {-1/2} \exp \left ( - \frac{\mathfrak{u}_n^2}{1 + \Lambda^2_{\substack{l_1, w_1 \\ l_2, w_2}}} \right )  \\ 
& \cleq \log (n) \sum_{\substack{w_1, w_2 \in \mathcal{W}_{-} (A) \\ w_1 \neq w_2}} \sum_{\substack{1 \leq l_1 \leq n - w_1 \\ 1 \leq l_2 \leq n - w_2 \\ |l_1 - l_2| < \max \left ( w_1, w_2 \right )}} \left ( \frac{\Lambda_a}{ \sqrt{1 - \Lambda_a^2}} \right ) \exp \left ( - \frac{2 \log (n)}{1 + \Lambda_a} \right ) \\
& \cleq \left ( 1 + A \right ) ^ 2 a^A \log^2 (n) \times n^{-\frac{1-\Lambda_a}{1 + \Lambda_a}}
\end{align*}
Since $\Lambda_a < 1$ the statement is proved.

\textbf{STEP 3:} we now show that if we pass to a sub-sequence of $n$'s on which the quantity $b_{n} = a^{\left \lfloor \log_{a} (W) \right \rfloor} / W$ converges to some constant $b$ the sequence of normalised maxima 
\begin{equation}
\left \{ \mathfrak{a}_n M_{\mathcal{G}(W,a)}^\sigma \left ( \mathbf{Y} \right ) - \mathfrak{b}_{n} \mid n \in \mathbb{N} \right \}
\label{equation: sequence of normalised maxima}
\end{equation}
converges weakly to a Gumbel distribution. On such a sub-sequence for each $j \in \mathbb{N}$ we have that $a^{k_n + j} \sim a^j b d \times \log (n)$. Therefore from Theorem \ref{theorem: distribution on optimal scales} we have the following.  
\begin{equation*}
\mathbb{P} \left ( \max_{1 \leq l \leq n - \left \lfloor a^{k_n + j} \right \rfloor} \left \{ \frac{1}{\sigma} D_{l,\left \lfloor a^{k_n + j} \right \rfloor}^p \left ( \mathbf{Y} \right ) \right \} \leq \mathfrak{u}_n \right ) \sim \exp \left ( - \left ( \frac{2 C_{p}}{a^j b d} \right ) F \left ( \frac{2 C_{p}}{a^j b d} \right ) e^{-\tau} \right ) 
\label{equation: fixed j subseq limit}
\end{equation*}
The following inequality is evident.
\begin{equation*}
\mathbb{P} \left ( M_{\mathcal{G}(W,a)}^\sigma \left ( \mathbf{Y} \right ) \leq \mathfrak{u}_n \right ) \leq \mathbb{P} \left ( \max_{(l,w) \in \mathcal{G}_{-}(A)} \left \{ \frac{1}{\sigma} D_{l,w}^p \left ( \mathbf{Y} \right ) \right \} \leq \mathfrak{u}_n \right )
\end{equation*}
Therefore (\ref{equation: fixed j subseq limit}) and the result from step 2 imply the following.
\begin{equation*}
\limsup_{n \rightarrow \infty} \mathbb{P} \left ( M_{\mathcal{G}(W,a)}^\sigma \left ( \mathbf{Y} \right ) \leq \mathfrak{u}_n \right ) \leq \exp \left ( - \sum_{j=0}^\infty \left ( \frac{2 C_{p}}{a^j b d} \right ) F \left ( \frac{2 C_{p}}{a^j b d} \right ) e^{-x} \right )
\end{equation*}
Note that because $a>1$ by part (ii) of Lemma \ref{lemma: F fuction} the above sum converges. Going the other way the following inequality is also evident.  
\begin{align*}
\mathbb{P} \left ( M_{\mathcal{G}(W,a)}^\sigma \left ( \mathbf{Y} \right ) \leq \mathfrak{u}_n \right ) \geq \mathbb{P} \left ( \max_{(l,w) \in \mathcal{G}_{-}(A)} \left \{ \frac{1}{\sigma} D_{l,w}^p \left ( \mathbf{Y} \right ) \right \} \leq \mathfrak{u}_n \right ) - \mathbb{P} \left ( \max_{(l,w) \in \mathcal{G}_{+}(A)} \left \{ \frac{1}{\sigma} D_{l,w}^p \left ( \mathbf{Y} \right ) \right \} > \mathfrak{u}_n \right )
\end{align*}
Therefore (\ref{equation: fixed j subseq limit}) and the result from steps 1 and 2 imply the following. 
\begin{equation*}
\liminf_{n \rightarrow \infty} \mathbb{P} \left ( M_{\mathcal{G}(W,a)}^\sigma \left ( \mathbf{Y} \right ) \leq \mathfrak{u}_n \right ) \geq \exp \left ( - \sum_{j=0}^\infty \left ( \frac{2 C_{p}}{a^j b d} \right ) F \left ( \frac{2 C_{p}}{a^j b d} \right ) e^{-x} \right )
\end{equation*}
Therefore, the statement is proved. 

\textbf{STEP 4:} we now prove the result in part (i). Since $b_{n}$ may have any sub-sequential limit between $1/a$ and $1$ it follows from step 4 that the sequence of random variables (\ref{equation: sequence of normalised maxima}) is tight. Using part (i) of Lemma \ref{lemma: F fuction} the constants in (\ref{equation: gaussian limit constants}) are easily recognised as the largest and smallest constants which may appear in the extreme value limit.  

\end{proof}

\subsection{Proof of Theorem \ref{theorem: tightness of non-gaussian dependent maximum}}

\begin{proof} 
With $W$ satisfying Assumption \ref{assumption: minimum segment} and omitting dependence on $x$ introduce the following notation. 
\begin{equation*}
\mathfrak{u}_{n,W} = \sqrt{2 \log \left ( n / W \right ) } + \left ( \frac{1}{2} \log \log (n / W) - \log (\sqrt{\pi}) + x \right ) / \sqrt{2 \log \left ( n / W \right ) }
\end{equation*}

We first investigate the be behaviour of local test statistics (\ref{equation: local test statistic}) restricted to a particular scale of the order $\mathcal{O} (W)$ under the null of no change points. For some $c > 0$ put $w = \left \lfloor c W \right \rfloor$, and write
\begin{equation*}
M_{cW}^\tau \left ( \mathbf{Y} \right ) = \max \left \{ \frac{1}{\tau} D_{l,w}^p \left ( \mathbf{Y} \right ) \mid 1 \leq l \leq n - w \right \}.
\end{equation*}
Putting $\mathbf{B} = \left ( B (1), \dots, B (n) \right )'$, where $\left \{ B (t) \right \}_{t>0}$ is the process introduced in Assumption \ref{assumption: strong approximation}, making use of Assumption \ref{assumption: strong approximation} the following holds. 
\begin{equation}
M_{n,W}^\tau \left ( \mathbf{Y} \right ) = \max \left \{ D_{l,w}^p \left ( \mathbf{B} \right ) \mid 1 \leq l \leq n - w \right \} + \mathcal{O}_\mathbb{P} \left ( \sqrt{n^{\frac{2}{2+\nu}} / W} \right )
\label{equation: strong approx to maximum}
\end{equation}
Moreover, using Lemma \ref{lemma: high excursion probability - continuous} and arguing as in the proof of Theorem \ref{theorem: distribution on optimal scales} the following holds.  
\begin{align}
\mathbb{P} \left ( M_{cW}^1 \left ( \mathbf{B} \right ) \leq \mathfrak{u}_{n,W} \right ) \nonumber & \sim \prod_{i=0}^{\left \lfloor n / w \right \rfloor} \mathbb{P} \left ( \max \left \{ D_{l,w}^p \left ( \mathbf{B} \right ) \mid i \times w < l \leq (i+1) \times w \right \} \leq \mathfrak{u}_{n,W} \right ) \nonumber \\ 
& \sim \left [ 1 - \mathbb{P} \left ( \sup_{l \in [0,1)} \left \{ \xi \left ( l \right ) \right \} > \mathfrak{u}_{n,W} \right ) \right ] ^ {\left \lfloor n / w \right \rfloor} \nonumber \\
& \sim \exp \left ( - \frac{C_p}{c} e^{-x} \right ) 
\label{equation: WP max distribution}
\end{align}
Therefore, combining (\ref{equation: strong approx to maximum}) and (\ref{equation: WP max distribution}) and arguing as in the proof of Theorem \ref{theorem: Gaussian tightness of normalized maximum}, we immediately have that
\begin{equation*}
\mathbb{P} \left ( M_{cW}^\tau \left ( \mathbf{Y} \right ) \leq \mathfrak{u}_{n,W} \right ) \sim \exp \left ( - \frac{C_{p}}{c} e^{-x} \right ).
\end{equation*} 
On a sub-sequence of $n$'s for which the quantity $b_{n} = a^{\left \lfloor \log_a (W) \right \rfloor} / W$ converges to some constant $b$, arguing as in the proof of Theorem \ref{theorem: Gaussian tightness of normalized maximum}, we therefore have under the null of no change points that
\begin{equation*}
\mathbb{P} \left ( M_{\mathcal{G} (W,a)}^\tau \left ( \mathbf{Y} \right ) \leq \mathfrak{u}_{n,W} \right ) \rightarrow \exp \left ( - \left ( \frac{b^{-1} C_{p}}{1-a^{-1}} \right ) e^{-x} \right ).
\end{equation*}
However, it is again clear that $b_{n}$ can have any sub-sequential limit between $a^{-1}$ and $1$, so part (i) of the theorem is proved. Part (ii) again follows from Lemma \ref{lemma: convergence with other scaling sequcnes}. 
\end{proof}
\subsection{Proof of Lemma \ref{lemma: MAD consistency}}

\begin{proof}

Write $m = \left ( n - p - 1 \right ) / \left ( p + 1 \right )$ and $c_\Phi = \Phi \left ( 2 \Phi^{-1} \left ( 3 / 4 \right ) \right ) - 3/4$. For some $\varepsilon > 0$ not depending on $n$ put $A_\varepsilon = \left ( 3 / c_\Phi + \sqrt{9/c_\Phi^2 + 2 \varepsilon} \right ) / 2 $, and therefore define 
\begin{equation}
\delta = \frac{A_\varepsilon \sigma}{c_\Phi} \left ( \frac{1}{\sqrt{m}} \vee \frac{N}{m} \right ). 
\label{equation: delta}
\end{equation}
We will show that with $n$ sufficiently large 
\begin{equation}
\mathbb{P} \left ( \left | \hat{\sigma}_\text{MAD} - \sigma \right | > \delta \right ) \leq 2 (p+1) e^{-\varepsilon},
\label{equation: MAD prob bound}
\end{equation} 
which implies the desired result. For simplicity assume $n - (p+1)$ is a multiple of $(p+1)$ and introduce the following sets:
\begin{align*}
& I_j = \left \{p+1 \leq t \leq n \mid (t+j) \hspace{0.25em} \mathrm{mod} \hspace{0.25em} (p+1) = 0 \right \} \\
& I_\eta = \cup_{k=1}^N \left \{ \eta_k, \dots, \eta_k + (p+1) \right \} \\
& I_{j,1} = I_j \setminus I_\eta \\
& I_{j,2} = I_j \cap I_\eta. 
\end{align*}
Introducing also the random variables $B_{t}^{\delta} = \boldsymbol{1} \left \{ |X_t| > \Phi^{-1}(3/4) \sqrt{\sum_{i=0}^{p+1} \binom{p+1}{i}^2} \left [ \sigma + \delta \right ] \right \}$ and put $p_\delta = \mathbb{E} \left ( B_t^\delta \mid t \not\in I_\eta \right )$. The following holds via Hoeffding's inequality 
\begin{align}
\mathbb{P} \left ( \widehat{\sigma}_\text{MAD} - \sigma > \delta \right ) & = \mathbb{P} \left ( \frac{\text{median} \left \{ |X_{p+1}|, \dots, |X_n| \right \}}{\Phi^{-1} \left ( 3/4 \right ) \sqrt{\sum_{i=0}^{p+1} \binom{p+1}{i}^2}} > \sigma + \delta \right ) \nonumber \\ 
& \leq \sum_{j=0}^p \mathbb{P} \left ( \sum_{t \in I_{j,1}} B_t^\delta + \sum_{t \in I_{j,2}} B_t^\delta > \frac{n - (p+1)}{2(p+1)} \right ) \nonumber \\ 
& \leq \sum_{j=0}^p \mathbb{P} \left ( \sum_{t \in I_{j,1}} \left ( B_t^\delta - p_\delta \right ) > \frac{n - (p+1)}{2(p+1)} - \left | I_{j,2} \right | - p_\delta \left | I_{j,1} \right | \right ) \nonumber \\ 
& \leq \left ( p + 1 \right ) \exp \left ( - 2m \left [ \left ( 1/2 - p_\delta \right )^2 + \left ( N / m\right )^2 - 2 \left ( 1/2 - p_\delta \right ) \left ( N / m\right )  \right ] ^ 2 \right ). \label{equation: Hoeffding prob bound}
\end{align}
We now bound $p_\delta$ from above and from below. For the lower bound, putting $Z \sim \mathcal{N} (0,1)$ we have that 
\begin{align}
p_\delta & = \mathbb{P} \left ( |Z| > \Phi^{-1} (3/4) \left [ 1 + \delta / \sigma \right ] \right ) \nonumber \\ 
& = 2 \left ( 1 - \int_{-\infty}^{\Phi^{-1} \left ( 3 / 4 \right )} \phi \left ( x \left [ 1 + \delta / \sigma \right ] \right ) \mathrm{d} x \left [ 1 + \delta / \sigma \right ] \right ) \nonumber \\
& \geq 2 \left ( 1 - \Phi \left ( \Phi ^ {-1} (3/4) \right ) \left [ 1 + \delta / \sigma \right ] \right ) \nonumber \\ 
& = 1/2 - (3/2) \times (\delta / \sigma), \label{equation: p delta lower bound}
\end{align}
which holds because for all $\alpha > 1$ and any $x \in \mathbb{R}$ it holds that $\phi(\alpha x) \leq \phi (x) $. For the upper bound write $f (x) = \Phi \left ( \Phi^{-1} \left ( 3/4 \right ) \left ( 1 + x \right ) \right )$ for $x \in [0,1]$. Then using the facts that (i) $f(0) = 3/4$, (ii) $f(1) = \Phi \left ( 2 \Phi^{-1} \left ( 3/4 \right ) \right )$, (iii) $\Phi$ is concave on $[0,1]$, and (iv) $\delta / \sigma \leq 1$ for $n$ sufficiently large, we obtain that
\begin{equation}
p_\delta = 2 \left ( 1 - \Phi \left ( \Phi^{-1} \left ( 3/4 \right ) \left [ 1 + \delta / \sigma \right ] \right )  \right ) \leq 1/2 - \frac{c_\Phi \delta}{\sigma}
\label{equation: p delta upper bound}
\end{equation}
for $n$ sufficiently large. Therefore plugging (\ref{equation: p delta lower bound}), (\ref{equation: p delta upper bound}) and (\ref{equation: delta}) into (\ref{equation: Hoeffding prob bound}) we obtain that 
\begin{align*}
& \mathbb{P} \left ( \widehat{\sigma}_\text{MAD} - \sigma > \delta \right ) \\
& \leq \left ( p + 1 \right ) \exp \left ( - 2 m \left [ A_\varepsilon^2 \left ( m^{-1/2} \vee N/m \right )^2 + (N/m)^2 - \frac{3A_\varepsilon}{c_\Phi} \left ( m^{-1/2} \vee N/m \right ) \left ( N / m \right ) \right ] \right ) \\ 
& \leq \left ( p + 1 \right ) \exp \left ( - 2m \left [ \left ( A_\varepsilon^2 - \frac{3 A_\varepsilon}{c_\Phi} \right ) \left ( m^{-1/2} \vee N/m \right )^2 \right ] \right ) \\
& \leq \left ( p + 1 \right ) \exp \left ( - \varepsilon \right ). 
\end{align*}
Similar arguments give identical bounds on the probability that $\hat{\sigma}_\text{MAD} - \sigma$ is smaller that a given $\delta$, which overall establishes (\ref{equation: MAD prob bound}). 
\end{proof}

\subsection{Proof of Lemma \ref{lemma: consistency of diff-sd estimator}} \label{section: proof of diff estimator}

\begin{proof}

Write $\gamma_i = \max_{1 \leq t \leq n} \mathbb{E} \left | \zeta_t / \sigma \right |^i$ for each $i = 2,3$ and put $\boldsymbol{D}_p = \boldsymbol{\tilde{D}}_p'\boldsymbol{\tilde{D}}_p$ where $\boldsymbol{\tilde{D}}_p$ is the $n \times n$ difference matrix such that each entry in the vector $\boldsymbol{D}_p \boldsymbol{x}$ is the $(p+1)$-th difference of the corresponding entry in the $n$-vector $\boldsymbol{x}$ scaled by
\begin{equation*}
1/\sqrt{\sum_{i=0}^{p+1}\binom{p+1}{i}^2}.
\end{equation*}
Writing $\boldsymbol{Y = f + \zeta}$ the equation below follows directly from equation (6) in \cite{dette1998estimating}. 
\begin{align*}
\mathbb{E} \left [ \left | \widehat{\sigma}^2_{\text{DIF}} - \sigma^2 \right |^2 \right ] & \leq \left [ \left ( \boldsymbol{f}' \boldsymbol{D}_p \boldsymbol{f} \right )^2 + 4 \sigma^2 \boldsymbol{f}' \boldsymbol{D}_p^2 \boldsymbol{f} 
 + 4 \boldsymbol{f}' \left ( \boldsymbol{D}_p \text{diag} \left ( \boldsymbol{D}_p \right ) \boldsymbol{1} \right ) \sigma^3 \gamma_3 + \dots  \right. \\
 & \dots + \left. \sigma^4 \text{trace} \left \{ \text{diag} \left ( \boldsymbol{D}_p \right ) ^ 2 \right \} \left ( \gamma_4 - 3 \right ) + 2 \sigma^4 \text{trace} \left ( \boldsymbol{D}_p^2 \right ) \right] / \left ( n - p - 1 \right ) ^ 2
\end{align*}
Since the noise terms have bounded fourth moment and function $f_\circ (\cdot)$ is assumed to be bounded the following must hold. 
\begin{gather*}
\sigma^4 \text{trace} \left \{ \text{diag} \left ( \boldsymbol{D}_p \right ) ^ 2 \right \} \left ( \gamma_4 - 3 \right ) + 2 \sigma^4 \text{trace} \left ( \boldsymbol{D}_p^2 \right ) = \mathcal{O} \left ( n \right ) \\
\left ( \boldsymbol{f}' \boldsymbol{D}_p \boldsymbol{f} \right )^2 + 4 \sigma^2 \boldsymbol{f}' \boldsymbol{D}_p^2 \boldsymbol{f} + 4 \boldsymbol{f}' \left ( \boldsymbol{D}_p \text{diag} \left ( \boldsymbol{D}_p \right ) \boldsymbol{1} \right ) \sigma^3 \gamma_3 = \mathcal{O} \left ( N^2 \right )
\end{gather*}
It therefore follows that
\begin{equation*}
\mathbb{E} \left [ \left | \widehat{\sigma}^2_{\text{DIF}} - \sigma^2 \right |^2 \right ] \leq \mathcal{O} \left ( \frac{1}{n} \vee \frac{N^2}{n^2} \right ),
\end{equation*}
and as such the desired result follows by Chebyshev's inequality. 

\end{proof}

\subsection{Proof of Lemma \ref{lemma: consistency of LRV estimator}}

\begin{proof}
Write $\boldsymbol{\bar{Y}} = \left ( \bar{Y}_{1,W'}, \dots, \bar{Y}_{\left \lfloor n / W' \right \rfloor, W'} \right )'$ and let $\boldsymbol{\bar{f}}$ and $\boldsymbol{\bar{\zeta}}$ be defined analogously. Let $\boldsymbol{D_p}$ be as defined in the proof of the last lemma, with its dimensions suitably adjusted. Finally put $m = \left \lfloor n / W' \right \rfloor - (p+1)$. We can therefore write $\hat{\tau}_\text{DIF}^2 = \frac{1}{mW}\boldsymbol{\bar{Y}'D_p\bar{Y}}$, and the absolute difference between our estimator and the truth can be bounded as follows:
\begin{align*}
\left | \hat{\tau}_\text{DIF}^2 - \tau^2 \right | & = \left | \frac{1}{mW'} \boldsymbol{\left ( \bar{f} + \bar{\zeta} \right ) ' D_p \left ( \bar{f} + \bar{\zeta} \right )} - \tau^2 \right | \\ 
& \cleq \left | \frac{1}{mW'}\boldsymbol{\bar{\zeta} ' D_p \bar{\zeta}} - \frac{1}{mW'} \mathbb{E} \left ( \boldsymbol{\bar{\zeta} ' D_p \bar{\zeta}} \right ) \right | + \left | \frac{1}{mW'}\mathbb{E} \left ( \boldsymbol{\bar{\zeta} ' D_p \bar{\zeta}} \right ) - \tau^2 \right | \\ 
& \hspace{2em}+ \frac{1}{mW'} \left | \boldsymbol{\bar{f}'D_p\bar{f}} \right | + \frac{1}{mW'} \left | \boldsymbol{\bar{f}'D_p\bar{\zeta}} \right | \\
& = T_1 + T_2 + T_3 + T4.
\end{align*}
We now bound each of the terms in turn. Introducing the notation 
\begin{equation*}
\psi_{p,j} = \left ( -1 \right )^{p+1-j} \binom{p+1}{j} / \sqrt{\sum_{i=0}^{p+1}\binom{p+1}{i}^2}.
\end{equation*}
We can therefore write
\begin{align*}
& \frac{1}{mW'}\boldsymbol{\bar{\zeta} ' D_p \bar{\zeta}} \\
& = \frac{1}{m} \sum_{s=p+2}^{\left \lfloor n / W' \right \rfloor} \left ( \sum_{j=0}^{p+1} \psi_{p,j} \left ( \bar{\zeta}_{s-j,W'} / \sqrt{W'} \right ) \right )^2 \\
& = \frac{1}{m} \sum_{s=p+2}^{\left \lfloor n / W' \right \rfloor} \left [ \sum_{j=0}^{p+1} \psi_{p,j}^2 \left ( \bar{\zeta}_{s-j,W'} / \sqrt{W'} \right )^2 + \sum_{\substack{k \neq l \\ 0 \leq k , l \leq p+1}} \psi_{p,k} \psi_{p,l} \left ( \bar{\zeta}_{s-k,W'} / \sqrt{W'} \right ) \left ( \bar{\zeta}_{s-l,W'} / \sqrt{W'} \right ) \right ]. 
\end{align*}
From which it follows that
\begin{align*}
\frac{1}{mW'} \mathbb{E} \left ( \boldsymbol{\bar{\zeta} ' D_p \bar{\zeta}} \right ) & = \sum_{j=0}^{p+1} \psi_{p,j}^2 \left ( \gamma_0 + 2 \sum_{h=1}^{W'-1} \left ( 1 - \frac{h}{W'} \right ) \gamma_h \right ) \\
& + \sum_{\substack{k \neq l \\ 0 \leq k , l \leq p+1}} \psi_{p,k} \psi_{p,l} \left ( \gamma_{W' |k-l|} + 2 \sum_{h=1}^{W'-1} \left ( 1 - \frac{h}{W'} \right ) \gamma_{W' |k-l| + h} \right ).
\end{align*}
Using these facts term $T_1$ can be bounded as follows
\begin{align*}
T_1 & \leq \left | \frac{1}{m} \sum_{s=p+2}^{\left \lfloor n / W' \right \rfloor} \sum_{j=0}^{p+1} \psi_{p,j}^2 \left ( \left ( \bar{\zeta}_{s-j,W'} / \sqrt{W'} \right )^2 - \gamma_0 - 2 \sum_{h=1}^{W'-1} \left ( 1 - \frac{h}{W'} \right ) \gamma_h \right ) \right | \\
& + \left | \frac{1}{m} \sum_{s=p+2}^{\left \lfloor n / W' \right \rfloor} \sum_{\substack{k \neq l \\ 0 \leq k , l \leq p+1}} \psi_{p,k} \psi_{p,l} \left ( \left ( \bar{\zeta}_{s-k,W} / \sqrt{W'} \right ) \left ( \bar{\zeta}_{s-l,W'} / \sqrt{W'} \right ) \right . \right . \\
& \hspace{2em} - \left . \left . \gamma_{W' |k-l|} - 2 \sum_{h=1}^{W'-1} \left ( 1 - \frac{h}{W'} \right ) \gamma_{W' |k-l| + h} \right ) \right | \\
& = T_{1,1} + T_{1,2}.
\end{align*}
For the first term we have that
\begin{align*}
T_{1,1} & = \left | \frac{1}{m} \sum_{s=p+2}^{\left \lfloor n / W' \right \rfloor} \sum_{j=0}^{p+1} \psi_{p,j}^2 \left ( \frac{1}{W} \sum_{t=W'(s-j-1)+1}^{W'(s-j)} \zeta_t^2 \right . \right . \\ 
& \left . \left . \hspace{2em} + \frac{2}{W'} \sum_{h=1}^{W'-1} \sum_{t=W'(s-j-1)+1}^{W'(s-j)-h} \zeta_{t} \zeta_{t+h} - \gamma_0 - 2 \sum_{h=1}^{W'-1} \left ( 1 - \frac{h}{W'} \right ) \gamma_h\right ) \right | \\
& = \left | \frac{1}{m} \sum_{s=p+2}^{\left \lfloor n / W' \right \rfloor} \sum_{j=0}^{p+1} \psi_{p,j}^2 \left ( \frac{1}{W'} \sum_{t=W'(s-j-1)+1}^{W'(s-j)} \left ( \zeta_t^2 - \gamma_0 \right ) \right . \right . \\
& \left . \left . \hspace{2em} + \sum_{h=1}^{W'-1} \frac{1}{\left ( W' - h \right )} \sum_{t=W'(s-j-1)+1}^{W'(s-j)-h} \left ( 1 - \frac{h}{W'} \right ) \left ( \zeta_{t} \zeta_{t+h} - \gamma_h \right ) \right ) \right | \\
& \leq \sum_{j=0}^{p+1} \psi_{p,j}^2 \left \{ \left | \frac{1}{mW'} \sum_{s=p+2}^{\left \lfloor n / W' \right \rfloor} \sum_{t=W'(s-j-1)+1}^{W'(s-j)} \left ( \zeta_t^2 - \gamma_0 \right ) \right | \right . \\ 
& \left . \hspace{2em} + \sum_{h=1}^{W'-1} \left | \frac{1}{m\left ( W' - h \right )} \sum_{s=p+2}^{\left \lfloor n / W' \right \rfloor} \sum_{t=W'(s-j-1)+1}^{W'(s-j)-h} \left ( \zeta_{t} \zeta_{t+h} - \gamma_h \right ) \right | \right \} \\
& = \sum_{j=0}^{p+1} \psi_{p,j}^2 \left \{ \mathcal{O}_\mathbb{P} \left ( \frac{1}{\sqrt{mW'}} \right ) + \sum_{h=1}^{W'} \mathcal{O}_\mathbb{P} \left ( \frac{1}{\sqrt{m \left ( W' - h \right )}} \right ) \right \} \\
& \leq \mathcal{O}_\mathbb{P} \left ( \frac{W'}{\sqrt{n}} \right ). 
\end{align*}
Where in the last line we have used the fact that $m \sim n / W'$ along with the fact that
\begin{equation*}
\sum_{h=1}^{W'-1} \frac{1}{\sqrt{n \left ( 1 - \frac{h}{W'} \right )}} < \frac{1}{\sqrt{n}} \left ( \int_{1}^{W'-1} \frac{1}{\sqrt{1-\frac{x}{W'}}} \mathrm{d}x + \sqrt{W'}\right ) = \frac{2 W'}{\sqrt{n}} \left ( 1 + o(1) \right ). 
\end{equation*}
Arguing analogously we likewise have that $T_{1,2} \leq \mathcal{O}_\mathbb{P} \left ( \frac{W'}{\sqrt{n}} \right ) $. For the second term we have that
\begin{align*}
T_2 & = \left | \gamma_0 + 2 \sum_{h=1}^{W'-1} \left ( 1 - \frac{h}{W'} \right ) \gamma_h \right . \\
& \hspace{2em} \left . + \sum_{\substack{k \neq l \\ 0 \leq k , l \leq p+1}} \psi_{p,k} \psi_{p,l} \left ( \gamma_{W' |k-l|} + 2 \sum_{h=1}^{W'-1} \left ( 1 - \frac{h}{W'} \right ) \gamma_{W' |k-l| + h} \right ) - \gamma_0 - 2 \sum_{h=1}^\infty \gamma_h \right | \\
& \leq 2 \left | \sum_{h=1}^{W'-1} \left ( 1 - \frac{h}{W'} \right ) \gamma_h - \left \{ \sum_{h=1}^{W'-1} + \sum_{h=W'}^\infty \right \} \gamma_h \right | + 2 \sum_{\substack{k \neq l \\ 0 \leq k , l \leq p+1}} \psi_{p,k} \psi_{p,l} \left | \sum_{h=0}^{W'-1} \gamma_{W' |k-l| + h} \right | \\
& \leq 2 \sum_{h=1}^{W'-1} \frac{h}{W'} \left | \gamma_h \right | + 2 \sum_{h=W'}^\infty \left | \gamma_h \right | + 2 \sum_{\substack{k \neq l \\ 0 \leq k , l \leq p+1}} \psi_{p,k} \psi_{p,l} \sum_{h=0}^{W'-1} \left | \gamma_{W' |k-l| + h} \right | \\
& < \frac{2}{W'} \left ( \sum_{h=1}^{\infty} h \left | \gamma_h \right | + \sum_{\substack{k \neq l \\ 0 \leq k , l \leq p+1}} \psi_{p,k} \psi_{p,l} \sum_{h=0}^{W'-1} \left ( W' |k-l| + h \right ) \left | \gamma_{W' |k-l| + h} \right | \right ) \\
& = \mathcal{O} \left ( W'^{-1} \right ).
\end{align*}
For the third term we have that $T_3 \leq \mathcal{O} \left ( \frac{NW'^2}{n} \right )$ and for the fourth term we likewise have that $T_4 \leq \mathcal{O} \left ( \frac{NW'^2}{n} \right )$. Combining the bounds on terms $T_1$, $T_2$, $T_3$, and $T_4$ the stated result follows. 
\end{proof}

\subsection{Proof of Theorem \ref{theorem: optimality of change point detection}}

\begin{proof}
With slight abuse of notation write $I \in \mathcal{G} \left ( W , a \right )$ if $I = \left \{ l, \dots, l + w - 1 \right \}$, where $(l,w) \in \mathcal{G} \left ( W , a \right )$. For each $k = 1, \dots, N$ introduce the set of intervals
\begin{equation*}
\mathcal{I}_k = \left \{ I \in \mathcal{G} \left ( W , a \right ) \mid \eta_k \in I, \left \lfloor \frac{\left | I \cap \left \{ 1, \dots, \eta_k \right \} \right |}{p+2} \right \rfloor = (p+1) \left \lfloor \frac{\left | I \cap \left \{ \eta_{k}+1, \dots, n \right \} \right |}{p+2} \right \rfloor \right \}.
\end{equation*}
Moreover assume that 
\begin{equation*}
\delta_k > 2 a \left ( p + 2 \right ) \left ( W \vee n^{\frac{2p^*_k}{2p^*_k + 1}} \left ( \frac{16C_{p,p^*_k}^2 \tau^2 \lambda_\alpha^2}{\left | \Delta_{p^*_k,k} \right |^2} \right ) ^ {\frac{1}{2p^*_k + 1}} \right ), \hspace{2em} k = 1, \dots, N.
\end{equation*}
Since $\lambda_\alpha^2 = \mathcal{O} \left ( \log (n) \right )$ for any fixed $\alpha$ and either of threshold (\ref{equation: Gaussian FWE control lambda}) or threshold (\ref{equation: generic FWE control lambda}), this assumption can be seen to correspond to condition (\ref{equation: spacing condition}) in Theorem \ref{theorem: optimality of change point detection}. For ease of reading introduce the notation
\begin{equation*}
V_k^\alpha \left ( n \right ) = n^{\frac{2p^*_k}{2p^*_k + 1}} \left ( 16C_{p,p^*}^2 \tau^2 \lambda_\alpha^2 / \left | \Delta_{p^*_k,k} \right |^2 \right ) ^ {\frac{1}{2p^*_k + 1}}, \hspace{2em} k = 1, \dots, N.
\end{equation*}
Due to lemma \ref{lemma: change point detection on an interval}, testing for a change point on an interval $I' \in \mathcal{I}_k$ using (\ref{equation: local tests}) with threshold $\lambda_\alpha$ the $k$-th change point will be detected as long as $\left | I' \right | > (p+1) V_k^\alpha \left ( n \right )$ on the event  

\begin{equation}
\left \{ L_{\mathcal{G} \left ( W , a \right )}^{\widehat{\tau}} \left ( \boldsymbol{\zeta} \right ) \leq \lambda_\alpha \right \} \cap \left \{ \widehat{\tau} < 2 \tau \right \}. 
\label{equation: noise small tau well estimated}
\end{equation}

Therefore, there must be an interval $I'' \in \mathcal{I}_k$ with $\left | I'' \right | < a \left ( p + 2 \right ) \left ( W \vee V_k^\alpha \left ( n \right ) \right )$ on which the $k$-th change can be detected. By the assumption on the $\delta$'s and the above discussion, the shortest interval in $\mathcal{G} \left ( W , a \right )$ on which the $k$-th chaneg point can be detected will not overlap with the shortest intervals on which the $(k-1)$-th and $(k+1)$-th changes will be detected. Finally, on the event (\ref{equation: noise small tau well estimated}) no test carried out on a sub-interval which are free from change points will spuriously reject. Therefore, events $E_3^*$, $E_4^*$, and $E_5^*$ are verified. 
\end{proof}

\subsection{Proof of Lemma \ref{lemma: sSIC consistency}}

\begin{proof}
We must show that $\text{sSIC} (p') > \text{sSIC} (p)$ for all $p' \neq p$ in the set $\left \{ \underline{p}, \dots, \overline{p} \right \}$. We begin with the case $p' > p$ for which we have that
\begin{align*}
& \text{sSIC} (p') - \text{sSIC} (p) \\ 
& \hspace{4em} = \frac{n}{2} \log \left ( 1 - \frac{\hat{\sigma}^2_{p} - \hat{\sigma}^2_{p'}}{\hat{\sigma}^2_p} \right ) + \left [ \left ( \hat{N}_{p'} + 1 \right ) \left ( p' + 1 \right ) - \left ( \hat{N}_p + 1 \right ) \left ( p + 1 \right ) \right ] \log^\alpha (n) \\
& \hspace{4em} := T_1 + T_2.
\end{align*}
Observe that by Corollary \ref{corollary: large sample consistency} on a set with probability $1 + o(1)$ we will have that $\hat{N}_{p'} = \hat{N}_p = N$. Therefore, the fact that the $\zeta$'s are Gaussian combined with the $\ell_2$ risk of constrained least squares spline estimators, which can be found for example in \cite{shen2022phase}, guarantee that on a set with probability $1+o(1)$ we will have that $ | \hat{\sigma}^2_{p'} - \sigma^2 | \cleq n^{-1} \log (n) $ for each $p' \geq p$. Consequently
\begin{align*}
T_1 \cgeq -\frac{n}{2} \left ( \hat{\sigma}^2_{p} - \hat{\sigma}^2_{p'} \right ) / \hat{\sigma}^2_p \geq - \frac{n}{2} \left( \left | \hat{\sigma}^2_p - \sigma^2 \right | +  \left | \hat{\sigma}^2_{p'} - \sigma^2 \right | \right ) / \hat{\sigma}^2_p \cgeq - \log (n). 
\end{align*}
Again by Corollary \ref{corollary: large sample consistency} we have that with high probability
\begin{equation*}
T_2 = \left ( N + 1 \right ) \left ( p' - p \right ) \log^\alpha (n) \gg \log (n).  
\end{equation*}
Consequently, for $n$ sufficiently larger we have that with high probability $\text{sSIC} (p') - \text{sSIC} (p) > 0$ for $p' > p$. Next we consider the case $p' < p$ for which we have that
\begin{align}
& \text{sSIC} (p') - \text{sSIC} (p) \nonumber \\ 
& \hspace{4em} = \frac{n}{2} \log \left ( \frac{\hat{\sigma}^2_{p'}}{\hat{\sigma}^2_p} \right ) + \left [ \left ( \hat{N}_{p'} + 1 \right ) \left ( p' + 1 \right ) - \left ( \hat{N}_p + 1 \right ) \left ( p + 1 \right ) \right ] \log^\alpha (n) \nonumber \\
& \hspace{4em} := T_1 + T_2.
\label{equation: sSIC bound 2} 
\end{align}
By condition (iii) on a high probability set we must have that $T_1$ is negative and of the order $\mathcal{O} \left ( n \log (n) \right )$, while $\hat{N}_{p'}$ will be of the order $\mathcal{O} ( n / \log (n) )$. Therefore, since $\alpha > 1$ we are done. Since $(\overline{p} - \underline{p}) = \mathcal{O} (1)$ a union bound argument is sufficient to establish that with $n$ sufficiently large, on a high probability set, $\text{sSIC} (p') > \text{sSIC} (p)$ for all $p' \neq p$. This completes the proof. 
\end{proof}

\subsection{Remarks on Assumption \ref{assumption: One prominent jump}} \label{remark: assumption 3}

We remark that \ref{assumption: One prominent jump} was made for ease of technical exposition, and although it does not seem straightforward to relax the assumption in full generality we conjecture that Algorithm \ref{algorithm: binary segmentation over grids} is able to localize all change points at the optimal rate when the assumption is violated, albeit with different leading constants in (\ref{equation: spacing condition}). The reason for the claim is the following: Assumption \ref{assumption: One prominent jump} is made to avoid the possibility of signal cancellation, however examining the proof of Lemma \ref{lemma: change point detection on an interval} it can be seen that there are only $p$ values of $\delta'$ for which exact signal cancellation, and for any such $\delta'$ increasing or decreasing $\delta'$ by a constant will result in an interval of the same order for which no signal cancellation occurs.

Here we show that Algorithm \ref{algorithm: binary segmentation over grids} can localize change points at the optimal rate in the absence of Assumption \ref{assumption: One prominent jump} the when signal is piecewise linear. Moreover we provide some simulated examples of piecewise polynomial signals which violate Assumption \ref{assumption: One prominent jump} and show that the change points are still detected. 

\subsubsection{Relaxing Assumption \ref{assumption: One prominent jump} for piecewise linear signals signals}

Here we show that for piecewise linear signals Algorithm \ref{algorithm: binary segmentation over grids} is able to localize all changes at the minimax optimal rate when Assumption \ref{assumption: One prominent jump} does not hold, provided the remaining assumption in Theorem \ref{theorem: optimality of change point detection} hold. Without loss of generality we consider the case of a single change:
\begin{equation*}
f_\circ \left ( t / n \right ) = 
\begin{cases}
\alpha_0 + \alpha_1 \left ( t/n - \eta / n \right ) & \text{ if } t \leq \eta \\
\beta_0 + \beta_{1} \left ( t/n - \eta / n \right ) & \text{ if } t > \eta
\end{cases}. 
\end{equation*}
Therefore we will show that using the threshold $\lambda = \hat{\tau} \bar{\lambda}$, for some $\bar{\lambda} > 0$, on a high probability set the change can be detected on an interval of length at most
\begin{equation*}
C n^{\frac{2p^*}{2p^*+1}} \left ( 16 \tau^2 \bar{\lambda}^2 / \Delta_{p^*}^2 \right )^{\frac{1}{2p^*+1}},
\end{equation*}
where $C$ is a sufficiently large constant and $p^* \in \left \{ 0 , 1 \right \}$ is defined as in (\ref{equation: most prominent change}). If $\text{sign} (\alpha_0 - \beta_0) = \text{sign} (\alpha_1 - \beta_1)$ this can be shown precisely as in Lemma \ref{lemma: change point detection on an interval}. Therefore, we examine the setting in which $\text{sign} (\alpha_0 - \beta_0) \neq \text{sign} (\alpha_1 - \beta_1)$, for which there are three possible cases of interest: 
\begin{itemize}
\item Case I: $\Delta_0 = \Delta_1 \left ( \delta / n \right )$
\item Case II: $\Delta_0 > \Delta_1 \left ( \delta / n \right )$
\item Case III: $\Delta_0 < \Delta_1 \left ( \delta / n \right )$
\end{itemize}
Similar to Lemma \ref{lemma: change point detection on an interval}, without loss of generality we let $\delta'$ be an integer such that the change occurs at location $\delta'$ and put $m = (p+2) \delta'$. We therefore need to show that the statistic $| D_{1,m}^1 \left ( \boldsymbol{Y} \right ) |$ can detect the change point with high probability for an appropriately chosen $\delta'$. For ease of reading introduce the notation
\begin{align*}
& C_1 = 1 / \sqrt{\sum_{i=0}^2 \binom{2}{i}^2} \\
& g_{\delta'} = \frac{1}{\delta'} \sum_{t=1}^{\delta'} \left ( 1 - t / \delta' \right ) \text{ for } \delta' \in \mathbb{N}.  
\end{align*}

\textbf{Case I:} let $\delta'$ be an integer for which $\delta' < \delta / 2 $. Using the facts that $\Delta_1 / \Delta_0 = n / \delta$ and $g_{\delta'} < 1/2$ for all $\delta'$ we have that 
\begin{align*}
\left | D_{1,m}^1 \left ( \boldsymbol{f} \right ) \right | & \geq C_1 \sqrt{\delta'} \left ( \Delta_0 - g_{\delta'} \Delta_1 \left ( \delta' / n \right ) \right ) \\
& = C_1 \sqrt{\delta'} \left ( \Delta_0 - g_{\delta'} \Delta_0 \left ( \delta' / \delta \right ) \right ) \\
& \geq \frac{3C_1}{4} \sqrt{\delta'} \Delta_0 
\end{align*}
and the desired result follows by rearranging (\ref{equation: D trinagle inequality bound}). 

\textbf{Case II:} this can be treated similarly to Case I. 

\textbf{Case III:} note that there is a $\delta''$ for which $\Delta_0 = \Delta_1 \left ( \delta'' / n \right )$. We first consider the setting where $\delta'' < \left ( 2 / C_1 \right ) ^ 2 \left ( 16 \tau^2 \bar{\lambda}^2 / \Delta_1^2 \right ) ^ {1/3}$, in which case letting $\delta'$ be such that $\delta' > 24 \delta''$, and using the fact that $g_{\delta'} \geq 1/12$ for all $\delta' > 1$ by (\ref{equation: g1 bound}), we have that
\begin{align*}
\left | D_{1,m}^1 \left ( \boldsymbol{f} \right ) \right | & \geq C_1 \sqrt{\delta'} \left ( g_{\delta'} \Delta_1 \left ( \delta ' / n \right ) - \Delta_0 \right ) \\
& \geq \frac{C_1}{12} \sqrt{\delta'} \left ( \Delta_1 \left ( \delta' / n \right ) - 12 \Delta_0 \right ) \\
& \geq \frac{C_1}{24} \sqrt{\delta'} \Delta_1 \left ( \delta' / n \right ). 
\end{align*}
Therefore, rearranging (\ref{equation: D trinagle inequality bound}) and accounting for the facts that we must have $\delta' > 24 \delta''$ we obtain that the change will be detected as soon as
\begin{equation*}
\delta' \geq 24 \left ( 2 / C_1 \right ) ^ 2 \left ( 16 \tau^2 \bar{\lambda}^2 / \Delta_1^2 \right ) ^ {1/3}.
\end{equation*}
Finally we consider the case $\delta'' \geq \left ( 2 / C_1 \right ) ^ 2 \left ( 16 \tau^2 \bar{\lambda}^2 / \Delta_1^2 \right ) ^ {1/3}$. In this case, letting $\delta' \leq \delta''$  and using the fact that $\Delta_0 \geq \Delta_1 \left ( \delta' / n \right )$ for all such $\delta'$ we obtain that
\begin{align*}
\left | D_{1,m}^1 \left ( \boldsymbol{f} \right ) \right | & \geq C_1 \sqrt{\delta'} \left ( \Delta_0 - g_{\delta'} \Delta_1 \left ( \delta' / n \right ) \right ) \\
& \geq \frac{C_1}{2} \sqrt{\delta'} \Delta_0 \\
& \geq \frac{C_1}{2} \sqrt{\delta'} \Delta_1 \left ( \delta' / n \right ),
\end{align*}
and as in the previous cases the desired result follows by rearranging (\ref{equation: D trinagle inequality bound}).

\subsubsection{Examples of higher order polynomials which violate \ref{assumption: One prominent jump}}

Here we give simulated examples of higher order piecewise polynomial signals which violate Assumption \ref{assumption: One prominent jump}, and show that Algorithm \ref{algorithm: binary segmentation over grids} is still able to detect the change points in practice. Specifically we consider three piecewise quadratic signals with a single change point at location $\eta$: 
\begin{equation}
f_\circ \left ( t / n \right ) = 
\begin{cases}
\alpha_0 + \alpha_{1} \left ( t/n - \eta / n \right ) + \alpha_{2} \left ( t/n - \eta / n \right )^2 & \text{ if } t \leq \eta \\
\beta_0 + \beta_{1} \left ( t/n - \eta / n \right ) + \beta_{2} \left ( t/n - \eta / n \right )^2 & \text{ if } t > \eta 
\end{cases}
\label{equation: pcws quadratic signal}
\end{equation}
We consider three instances of (\ref{equation: pcws quadratic signal}) where in each case the sample size is $n = 500$, the change point occurs at location $\eta = n / 2$, and changes occur in two derivatives of different order in such a way that the changes work against each other in the sense that they have different signs and the signal strengths as measured by $\Delta_j \left ( \delta / n \right )^j$ for $j = 0,1,2$ exactly match. The three models are denoted by \texttt{M1}, \texttt{M2}, and \texttt{M3} and the values of the $\alpha$'s and $\beta$'s are given in Table \ref{table: assumption 3 models}. 
\begin{table}[!htbp]
\centering
\caption{Values of $\alpha$'s and $\beta$'s for three instances of (\ref{equation: pcws quadratic signal}) which violate Assumption \ref{assumption: One prominent jump} when the sample size is $n = 500$ and the change point occurs at location $\eta = n / 2$.}
\label{table: assumption 3 models}
\begin{tabular}{|c|c|c|c|c|c|c|}
\hline
 & $\alpha_0$ & $\beta_0$ & $\alpha_1$ & $\beta_1$ & $\alpha_2$ & $\beta_2$ \\
\hline
\texttt{M1} & $-1/2$ & $1/2$ & $-2$ & $2$ & $0$ & $0$ \\
\hline
\texttt{M2} & $0$ & $0$ & $6$ & $-6$ & $-12$ & $12$ \\
\hline
\texttt{M3} & $1/2$ & $-1/2$ & $0$ & $0$ & $-2$ & $2$ \\
\hline
\end{tabular}
\end{table}

We contaminate the signals with independent noise having marginal $\mathcal{N} \left ( 0, 0.5^2 \right )$ distribution and apply Algorithm \ref{algorithm: binary segmentation over grids} with parameter $\alpha = 0.1$. The results of this experiment, which was run with random seed \texttt{42} in \texttt{R}, are shown in Figure \ref{figure: signals violating assumption 3.3.3}. In all three cases Algorithm \ref{algorithm: binary segmentation over grids} returns a single interval which contains the true change point location. 

\begin{figure}[!htbp]
\centering
\caption{Piecewise polynomial signals which violate Assumption \ref{assumption: One prominent jump} with coefficients specified in Table \ref{table: assumption 3 models}, contaminated with i.i.d. Gaussian noise having standard deviation $\sigma = 0.5$ (left column). Intervals of significance with uniform $90\%$ coverage returned by our procedure (right column). Black dashed lines (\textbf{- - -}) represent underlying piecewise polynomial signal, light grey lines (\textcolor{gray}{\textbf{---}}) represent the observed data sequence, red shaded regions (\textcolor{transpred}{$\blacksquare$}) represent intervals of significance returned by our procedure.} 
\label{figure: signals violating assumption 3.3.3}
\begin{subfigure}[b]{0.4\textwidth}
\centering
\includegraphics[width=\textwidth]{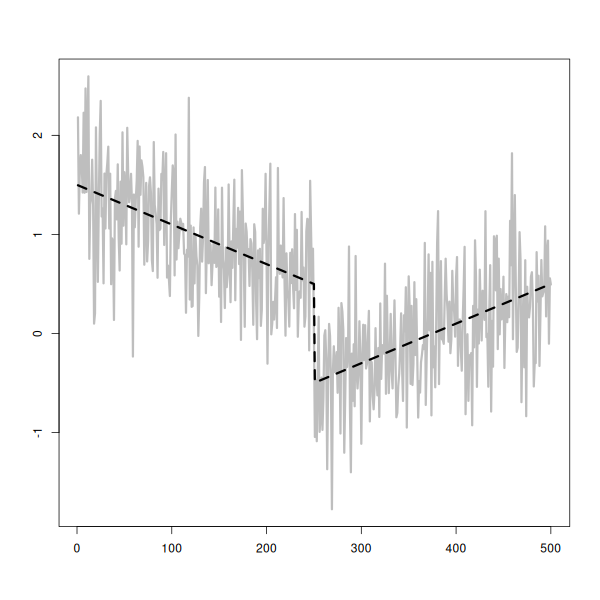}
\caption{\texttt{M1} + Gaussian noise}
\label{subfigure: d0-d1-signal}
\end{subfigure}
\begin{subfigure}[b]{0.4\textwidth}
\centering
\includegraphics[width=\textwidth]{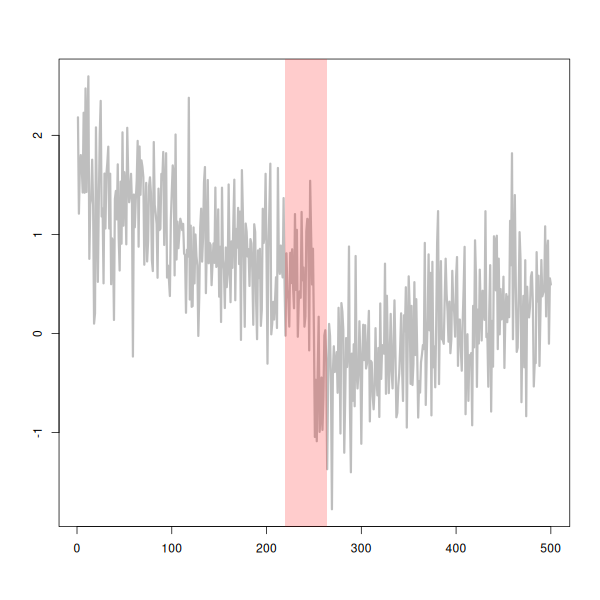}
\caption{intervals returned}
\end{subfigure}

\begin{subfigure}[b]{0.4\textwidth}
\centering
\includegraphics[width=\textwidth]{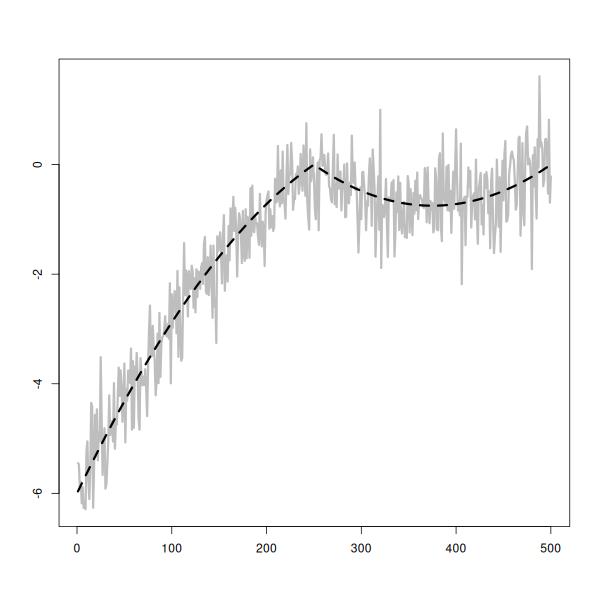}
\caption{\texttt{M2} + Gaussian noise}
\label{subfigure: waves}
\end{subfigure}
\begin{subfigure}[b]{0.4\textwidth}
\centering
\includegraphics[width=\textwidth]{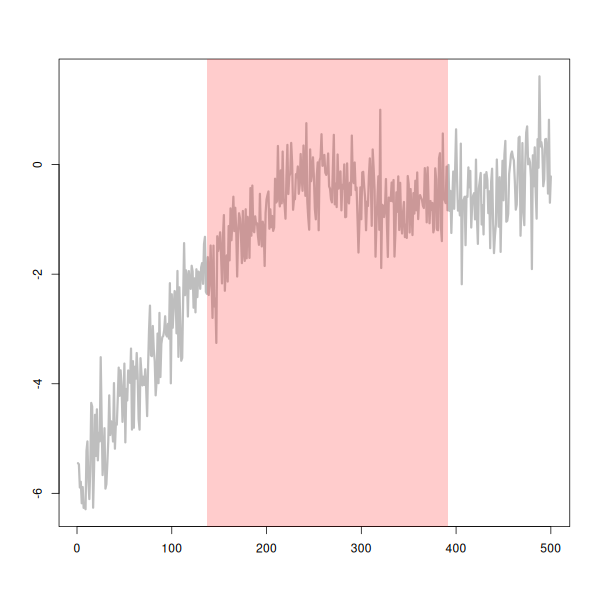}
\caption{intervals returned}
\end{subfigure}

\begin{subfigure}[b]{0.4\textwidth}
\centering
\includegraphics[width=\textwidth]{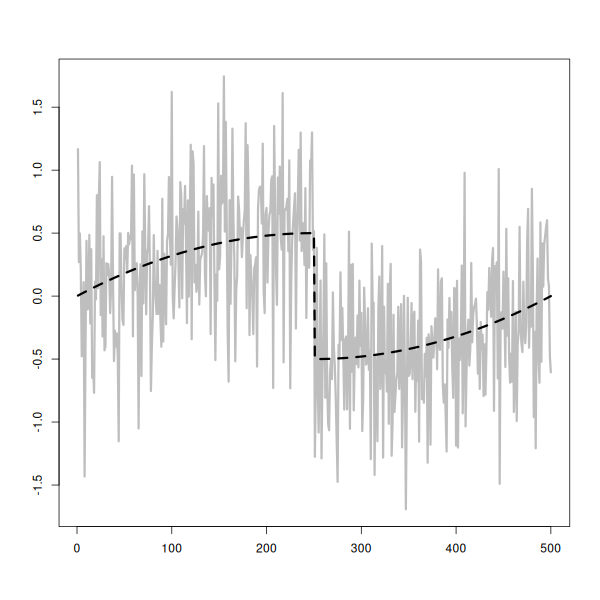}
\caption{\texttt{M3} + Gaussian noise}
\label{subfigure: hills}
\end{subfigure}
\begin{subfigure}[b]{0.4\textwidth}
\centering
\includegraphics[width=\textwidth]{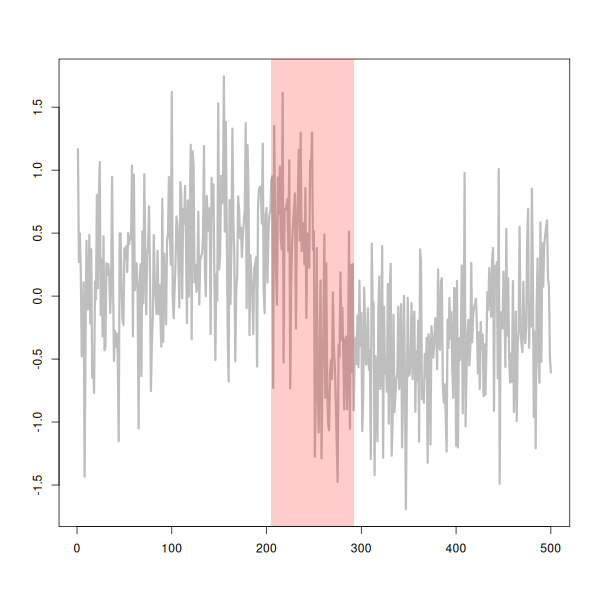}
\caption{intervals returned}
\end{subfigure}
\end{figure}

\section{Additional numerical illustrations} \label{section: additional simulation study}

To further investigate the coverage provided by our method in finite samples, in this section we reproduce the simulation study in Section \ref{section: coverage simulations} for signals of length $n \in \left \{ 100, 500, 1000, 2000 \right \}$. The results are shown in Tables \ref{table: extended simulation study}, and confirm that for a range of signal lengths our procedure continues to provide accurate coverage. 

\begin{table}[!htbp]
\caption{Proportion of times out of $100$ replications each method returned no intervals of significance when applied to a noise vector of length $n \in \left \{ 100, 500, 1000, 2000 \right \}$, as well as whether each method is theoretically guaranteed to provide correct coverage.} 
\label{table: extended simulation study}
\centering
\begin{subtable}{\linewidth}\centering
{\scalebox{0.9}{
\begin{tabular}{|l|l|c|c|c|}
  \hline
 &  & degree 0 & degree 1 & degree 2 \\ 
  \hline
 & DIF1-MAD & 0.93 & 0.92 & 0.94 \\ 
  n = 100 & DIF2-SD & 1.00 & 1.00 & 1.00 \\ 
   & DIF2-LRV & 0.98 & 0.92 & 0.95 \\ 
\hline
   & DIF1-MAD & 0.93 & 0.97 & 0.94 \\ 
  n = 500 & DIF2-SD & 1.00 & 1.00 & 1.00 \\ 
   & DIF2-LRV & 0.99 & 0.98 & 0.96 \\
\hline
   & DIF1-MAD & 0.93 & 0.92 & 0.93 \\ 
  n = 1000 & DIF2-SD & 1.00 & 1.00 & 0.99 \\ 
   & DIF2-LRV & 0.98 & 0.98 & 0.99 \\
\hline
   & DIF1-MAD & 0.88 & 0.93 & 0.95 \\ 
  n = 2000 & DIF2-SD & 1.00 & 0.99 & 0.99 \\ 
   & DIF2-LRV & 0.98 & 0.98 & 0.95 \\ 
   \hline
\end{tabular}
}}
\caption{Coverage on noise type \texttt{N1} with $\sigma = 1$.}
\end{subtable}%

\begin{subtable}{\linewidth}\centering
{\scalebox{0.9}{
\begin{tabular}{|l|l|c|c|c|}
  \hline
 &  & degree 0 & degree 1 & degree 2 \\ 
  \hline
 & DIF1-MAD & 0.88 & 0.64 & 0.81 \\ 
  n = 100 & DIF2-SD & 0.99 & 0.98 & 0.99 \\ 
   & DIF2-LRV & 1.00 & 0.91 & 0.91 \\ 
\hline
   & DIF1-MAD & 0.52 & 0.42 & 0.50 \\ 
  n = 500 & DIF2-SD & 0.98 & 0.97 & 0.99 \\ 
   & DIF2-LRV & 0.99 & 0.93 & 0.94 \\
\hline
   & DIF1-MAD & 0.43 & 0.28 & 0.39 \\ 
  n = 1000 & DIF2-SD & 0.94 & 0.99 & 0.97 \\ 
   & DIF2-LRV & 0.91 & 0.95 & 0.90 \\ 
\hline
   & DIF1-MAD & 0.23 & 0.19 & 0.18 \\ 
  n = 2000 & DIF2-SD & 0.99 & 0.97 & 0.96 \\ 
   & DIF2-LRV & 0.99 & 0.97 & 0.94 \\ 
   \hline
\end{tabular}
}}
\caption{Coverage on noise type \texttt{N2} with $\sigma = 1$.}
\end{subtable}%

\begin{subtable}{\linewidth}\centering
{\scalebox{0.9}{
\begin{tabular}{|l|l|c|c|c|}
  \hline
 &  & degree 0 & degree 1 & degree 2 \\ 
  \hline
 & DIF1-MAD & 0.72 & 0.71 & 0.65 \\ 
  n = 100 & DIF2-SD & 0.98 & 1.00 & 0.99 \\ 
   & DIF2-LRV & 0.95 & 0.94 & 0.91 \\ 
\hline
   & DIF1-MAD & 0.40 & 0.31 & 0.37 \\ 
  n = 500 & DIF2-SD & 0.99 & 0.97 & 1.00 \\ 
   & DIF2-LRV & 0.98 & 0.94 & 0.91 \\ 
\hline
   & DIF1-MAD & 0.23 & 0.26 & 0.37 \\ 
  n = 1000 & DIF2-SD & 0.98 & 0.96 & 0.98 \\ 
   & DIF2-LRV & 0.98 & 0.97 & 0.93 \\
\hline
   & DIF1-MAD & 0.17 & 0.15 & 0.15 \\ 
  n = 2000 & DIF2-SD & 0.96 & 0.98 & 0.99 \\ 
   & DIF2-LRV & 0.97 & 0.98 & 0.97 \\ 
   \hline
\end{tabular}
}}
\caption{Coverage on noise type \texttt{N3} with $\sigma = 1$.}
\end{subtable}

\end{table}

\begin{table}[!htbp]
\caption{Proportion of times out of $100$ replications each method returned no intervals of significance when applied to a noise vector of length $n \in \left \{ 100, 500, 1000, 2000 \right \}$, as well as whether each method is theoretically guaranteed to provide correct coverage.} 
\label{table: extended simulation study dep}
\centering
\begin{subtable}{\linewidth}\centering
{\scalebox{0.9}{
\begin{tabular}{|l|l|c|c|c|}
  \hline
 &  & degree 0 & degree 1 & degree 2 \\ 
  \hline
 & DIF1-MAD & 0.00 & 0.00 & 0.00 \\ 
  n = 100 & DIF2-SD & 0.00 & 0.00 & 0.00 \\ 
   & DIF2-LRV & 0.74 & 0.79 & 0.77 \\ 
\hline
   & DIF1-MAD & 0.00 & 0.00 & 0.00 \\ 
  n = 500 & DIF2-SD & 0.00 & 0.00 & 0.00 \\ 
   & DIF2-LRV & 0.92 & 0.87 & 0.91 \\ 
\hline
   & DIF1-MAD & 0.00 & 0.00 & 0.00 \\ 
  n = 1000 & DIF2-SD & 0.00 & 0.00 & 0.00 \\ 
   & DIF2-LRV & 0.95 & 0.89 & 0.94 \\ 
\hline
   & DIF1-MAD & 0.00 & 0.00 & 0.00 \\ 
  n = 2000 & DIF2-SD & 0.00 & 0.00 & 0.00 \\ 
   & DIF2-LRV & 0.98 & 0.98 & 0.97 \\ 
   \hline
\end{tabular}
}}
\caption{Coverage on noise type \texttt{N4} with $\sigma = 1$.}
\end{subtable}%

\begin{subtable}{\linewidth}\centering
{\scalebox{0.9}{
\begin{tabular}{|l|l|c|c|c|}
  \hline
 &  & degree 0 & degree 1 & degree 2 \\ 
  \hline
 & DIF1-MAD & 0.00 & 0.00 & 0.00 \\ 
  n = 100 & DIF2-SD & 0.00 & 0.00 & 0.00 \\ 
   & DIF2-LRV & 0.75 & 0.69 & 0.72 \\ 
\hline
   & DIF1-MAD & 0.00 & 0.00 & 0.00 \\ 
  n = 500 & DIF2-SD & 0.00 & 0.00 & 0.00 \\ 
   & DIF2-LRV & 0.89 & 0.82 & 0.88 \\ 
\hline
   & DIF1-MAD & 0.00 & 0.00 & 0.00 \\ 
  n = 1000 & DIF2-SD & 0.00 & 0.00 & 0.00 \\ 
   & DIF2-LRV & 0.98 & 0.90 & 0.89 \\ 
\hline
   & DIF1-MAD & 0.00 & 0.00 & 0.00 \\ 
  n = 2000 & DIF2-SD & 0.00 & 0.00 & 0.00 \\ 
   & DIF2-LRV & 0.94 & 0.98 & 0.98 \\ 
   \hline
\end{tabular}
}}
\caption{Coverage on noise type \texttt{N5} with $\sigma = 1$.}
\end{subtable}%

\begin{subtable}{\linewidth}\centering
{\scalebox{0.9}{
\begin{tabular}{|l|l|c|c|c|}
  \hline
 &  & degree 0 & degree 1 & degree 2 \\ 
  \hline
 & DIF1-MAD & 0.00 & 0.00 & 0.00 \\ 
  n = 100 & DIF2-SD & 0.00 & 0.00 & 0.00 \\ 
   & DIF2-LRV & 0.97 & 0.91 & 0.94 \\ 
\hline
   & DIF1-MAD & 0.00 & 0.00 & 0.00 \\ 
  n = 500 & DIF2-SD & 0.00 & 0.00 & 0.00 \\ 
   & DIF2-LRV & 0.99 & 1.00 & 0.99 \\ 
\hline
   & DIF1-MAD & 0.00 & 0.00 & 0.00 \\ 
  n = 1000 & DIF2-SD & 0.00 & 0.00 & 0.00 \\ 
   & DIF2-LRV & 0.95 & 0.98 & 0.95 \\ 
\hline
   & DIF1-MAD & 0.00 & 0.00 & 0.00 \\ 
  n = 2000 & DIF2-SD & 0.00 & 0.00 & 0.00 \\ 
   & DIF2-LRV & 0.99 & 0.97 & 0.99 \\ 
   \hline
\end{tabular}
}}
\caption{Coverage on noise type \texttt{N6} with $\sigma = 1$.}
\end{subtable}

\end{table}

\bibliography{paper-ref}
\bibliographystyle{alpha}

\end{document}